\documentclass[%
  reprint,            %
superscriptaddress,
 aps,
floatfix,
]{revtex4-2}

\makeatletter
\renewcommand{\@makefnmark}{\hbox{\normalfont[\@thefnmark]}}
\renewcommand{\@makefntext}[1]{%
  \noindent[\@thefnmark]\enspace #1%
}
\makeatother

\usepackage[normalem]{ulem}
\usepackage{graphicx}%
\usepackage{dcolumn}%
\usepackage{booktabs}%
\usepackage{bm}%
\usepackage{braket}
\usepackage[dvipsnames]{xcolor}
\usepackage{amsfonts}
\usepackage{amsmath,amssymb}
\usepackage{mathtools}
\usepackage{xfrac} %
\usepackage{hyperref}%
\usepackage{cleveref}
\crefname{equation}{Eq.}{Eqs.}
\Crefname{equation}{Eq.}{Eqs.}
\crefname{figure}{Fig.}{Figs.}
\Crefname{figure}{Fig.}{Figs.}
\crefname{section}{Sec.}{Secs.}
\Crefname{section}{Sec.}{Secs.}
\crefname{subsection}{Sec.}{Secs.}
\Crefname{subsection}{Sec.}{Secs.}
\crefname{subsubsection}{Sec.}{Secs.}
\Crefname{subsubsection}{Sec.}{Secs.}
\crefname{appendix}{Appendix}{Appendices}
\Crefname{appendix}{Appendix}{Appendices}
\usepackage{comment}

\DeclareMathOperator{\tr}{tr}

\usepackage{amsthm}
\newtheorem{theorem}{Theorem}
\newtheorem{lemma}{Lemma}

\crefname{protocol}{Protocol}{Protocols}
\Crefname{protocol}{Protocol}{Protocols}

\crefname{procedure}{Procedure}{Procedures}
\Crefname{procedure}{Procedure}{Procedures}
\crefname{theorem}{Theorem}{Theorems}
\Crefname{theorem}{Theorem}{Theorems}
\crefname{lemma}{Lemma}{Lemmas}
\Crefname{lemma}{Lemma}{Lemmas}

\newcommand{\im}{\mathrm{i}}
\newcommand{\e}{\mathrm{e}}
\newcommand{\diff}{\mathrm{d}}

\usepackage{xparse}
\NewDocumentCommand{\PT}{g}{\mathrm{PTER}\IfValueT{#1}{_{#1}}}

\newcommand{\fsc}[1]{}

\begin{document}

\title{Unbiased Hamiltonian Simulation by Reversing Trotter Error Dynamics}

\author{Keisuke Murota}
\affiliation{Quantinuum K.K., Otemachi Financial City Grand Cube 3F, 1-9-2 Otemachi, Chiyoda-ku, Tokyo, Japan}
\affiliation{Department of Physics, The University of Tokyo, Tokyo 113-0033, Japan}

\author{Yuta Kikuchi}
\affiliation{Quantinuum K.K., Otemachi Financial City Grand Cube 3F, 1-9-2 Otemachi, Chiyoda-ku, Tokyo, Japan}
\affiliation{RIKEN Center for Interdisciplinary Theoretical and Mathematical Sciences (iTHEMS), RIKEN, Wako 351-0198, Japan}

\author{Enrico Rinaldi}
\affiliation{Quantinuum, Partnership House, Carlisle Place, London SW1P 1BX, United Kingdom}

\author{Fr\'ed\'eric Sauvage}
\affiliation{Quantinuum, Partnership House, Carlisle Place, London SW1P 1BX, United Kingdom}

\author{Synge Todo}
\affiliation{Department of Physics, The University of Tokyo, Tokyo 113-0033, Japan}
\affiliation{Institute for Physics of Intelligence, The University of Tokyo, Tokyo 113-0033, Japan}
\affiliation{Institute for Solid State Physics, The University of Tokyo, Kashiwa, 277-8581, Japan}

\begin{abstract}
Owing to their simplicity and low overhead, Suzuki--Trotter formulas remain the de facto Hamiltonian simulation methods on current quantum computing platforms.
Systematic Trotter errors, however, will quickly become limiting when scaling to larger problems and aiming for higher accuracy.
We present a mechanism that removes the systematic error of any $k$-th order Suzuki--Trotter simulation, at the cost of a constant sampling overhead. The key insight is that the Trotter error is itself a coherent dynamics to be reversed, rather than a deviation to be bounded.
By identifying the structure of this error in closed form, we carry out that reversal through quasi-probabilistic decompositions.
The resulting algorithm, called Probabilistic Trotter Error Reversal (PTER),
is unbiased and still improves the gate-count scaling
of Suzuki--Trotter formulas while retaining their simplicity.
Numerical simulations of a Heisenberg spin chain support the predicted resource advantage already at modest system sizes.
\end{abstract}

\maketitle

\paragraph*{Introduction. ---}

Predicting the real-time dynamics of strongly correlated quantum matter, from spin liquids to lattice gauge theories and chemical reactions, is a central goal of physics and chemistry~\cite{Feynman1982, Lloyd1996, SavaryBalents2017, Banuls2020, McArdle2020}. Classical methods often struggle to reach the physically relevant scales~\cite{Baez2020}.
Early fault-tolerant quantum computers are expected to bring this goal within reach.
The algorithmic target is a time-evolution circuit that is simple enough for realistic hardware and whose gate-count scaling is close to the near-optimal $\widetilde{\Omega}(nt)$ scaling set by the Lieb--Robinson bound~\cite{LiebRobinson1972, HaahHastingsKothariLow2021, ChildsSu2019} for an $n$-qubit lattice Hamiltonian and a simulation time $t$.
Suzuki--Trotter formulas~\cite{Trotter1959, Suzuki1976} already deliver the hardware simplicity: their locality preservation, low overhead, and absence of ancilla registers have made them the standard building block for quantum simulation~\cite{Suzuki1991, Childs2021trotter, Childs2018}.
This simplicity, however, comes with suboptimal scalings that limit their scalability:
a $k$-th order formula requires $\mathcal{O}(n^{1+1/k}\,t^{1+1/k}\,\epsilon^{-1/k})$ gates to guarantee a systematic error (i.e., algorithmic bias) $\epsilon$~\cite{Childs2021trotter}.
This has motivated the search for algorithms going beyond Trotter formulas.

Quantum singular-value transformation (QSVT)~\cite{Low2017, Gilyen2019, Low2019} almost saturates the lower bound on query complexity~\cite{BerryAhokasCleveSanders2007, BerryChildsKothari2015}, at the cost of block encodings, ancillas, and controlled access to Hamiltonian terms, all too demanding for near-future hardware.
Randomised circuit sampling improves error scalings in various ways~\cite{Campbell:2019fez, Childs2019, Chen2021, Nakaji:2023gze, Pocrnic:2023lrz, Zhao2022RandomInputs, Faehrmann2022}, and can sometimes completely eliminate the algorithmic bias, but does not leverage the geometric locality of Trotter formulas~\cite{Granet2023,Chakraborty2024, kiumi2025te, murota2026errormitigated, Hayata2026}.
Hybrid schemes compensate the Trotter error with randomised corrections, through linear combinations of unitaries~\cite{zeng2025simple} or by sampling the error unitary~\cite{cho2024doubling, mineh2025steer}.
They improve the gate-count scaling but still retain an algorithmic bias that is difficult to bound tightly, and can require controlled rotations and a larger prefactor.
To our knowledge, no existing method delivers improved gate-count scaling, unbiasedness, and Trotter simplicity simultaneously, and this work closes that gap.

Our starting observation is that, for a small Trotter step $\tau$, the error of a $k$-th order Trotter formula is generated by a time-dependent \emph{remainder Hamiltonian} $G_k(s)$ satisfying $\|G_k(s)\|=\mathcal{O}(s^k)$ and admitting a nested-commutator closed form.
Earlier analyses used this commutator structure to tighten error bounds~\cite{Childs2021trotter}.
Instead, we use it to %
eliminate the Trotter error by reversing the dynamics generated by $G_k$ after each Trotter step (see~\cref{fig:overview}).
We implement the reverse dynamics through TE-PAI (time-evolution via probabilistic angle interpolation)~\cite{kiumi2025te,koczor2024probabilistic}.
Being based on a quasi-probabilistic decomposition, it involves random sampling of circuits and additional classical post-processing (see inset of \cref{fig:overview}).
The latter increases the number of measurements %
required by a multiplicative factor, called \emph{sampling overhead}.
In what follows, we only consider settings in which such overhead is \emph{constant}, although greater than one.

The resulting scheme, a Probabilistic Trotter Error Reversal at order $k$ ($\PT_k$), converts systematic Trotter errors into constant sampling overhead.
For $n$-qubit geometrically local Hamiltonians, with each Pauli term supported on $\mathcal{O}(1)$ neighbouring sites of a fixed-dimensional lattice, $\PT{k}$ has expected gate count
$\mathcal{O}(n^{1+1/(2k+1)}\,t^{1+1/(2k+1)})$, improving on the corresponding $k$-th order Trotter formula and rapidly approaching the lower bound $\widetilde{\Omega}(nt)$ when increasing $k$.
For general Hamiltonians, sums of $N$ Pauli terms with $1$-norm $\beta$, $\PT{k}$ requires $\mathcal{O}(N\beta\,t^{1+1/(2k+1)})$ gates, improving the time and norm dependence of the corresponding Trotter formulas (see \cref{app:trotter-recap}).
As an additional benefit, since $G_k$ is expressed in closed form, both the sampling of the circuits and the evaluation of their expected gate count are precise and classically efficient at any system size.
We use this to compare gate counts for the simulation of a one-dimensional Heisenberg chain, confirming that $\PT_2$ requires substantially fewer gates than the corresponding Suzuki--Trotter formula, and outperforms it even at small system sizes.
By leveraging the reversibility of the Trotter error dynamics,
 these results provide a route to larger-system and higher-accuracy many-body simulations without sacrificing the simplicity of Trotter formulas.

In what follows, we implement the evolution $U(t)=\e^{-\im Ht}$, for an $n$-qubit Hamiltonian $H$ which is a weighted sum of $N$ Pauli strings $\{P_j\}_{j=1}^N$, as a circuit.
Throughout, the gate count is taken to be the number of Pauli rotations $\e^{-\im\theta P_j}$ in the circuit, with $\theta$ an arbitrary angle. This is the unit of circuit complexity used in all theorems, figures, and bounds below. 
We start by exposing the working details of $\PT$ at order $k=1$ in a simplified setting, before generalising.

\paragraph*{Simplified setting. ---}
Consider a Hamiltonian with splitting $H=A+B$, with $A$ and $B$ consisting of mutually commuting Pauli strings. We further take $H$ to be geometrically local, such that $N=\mathcal{O}(n)$.

\begin{figure}[!t]
\centering
    \includegraphics[width=\columnwidth]{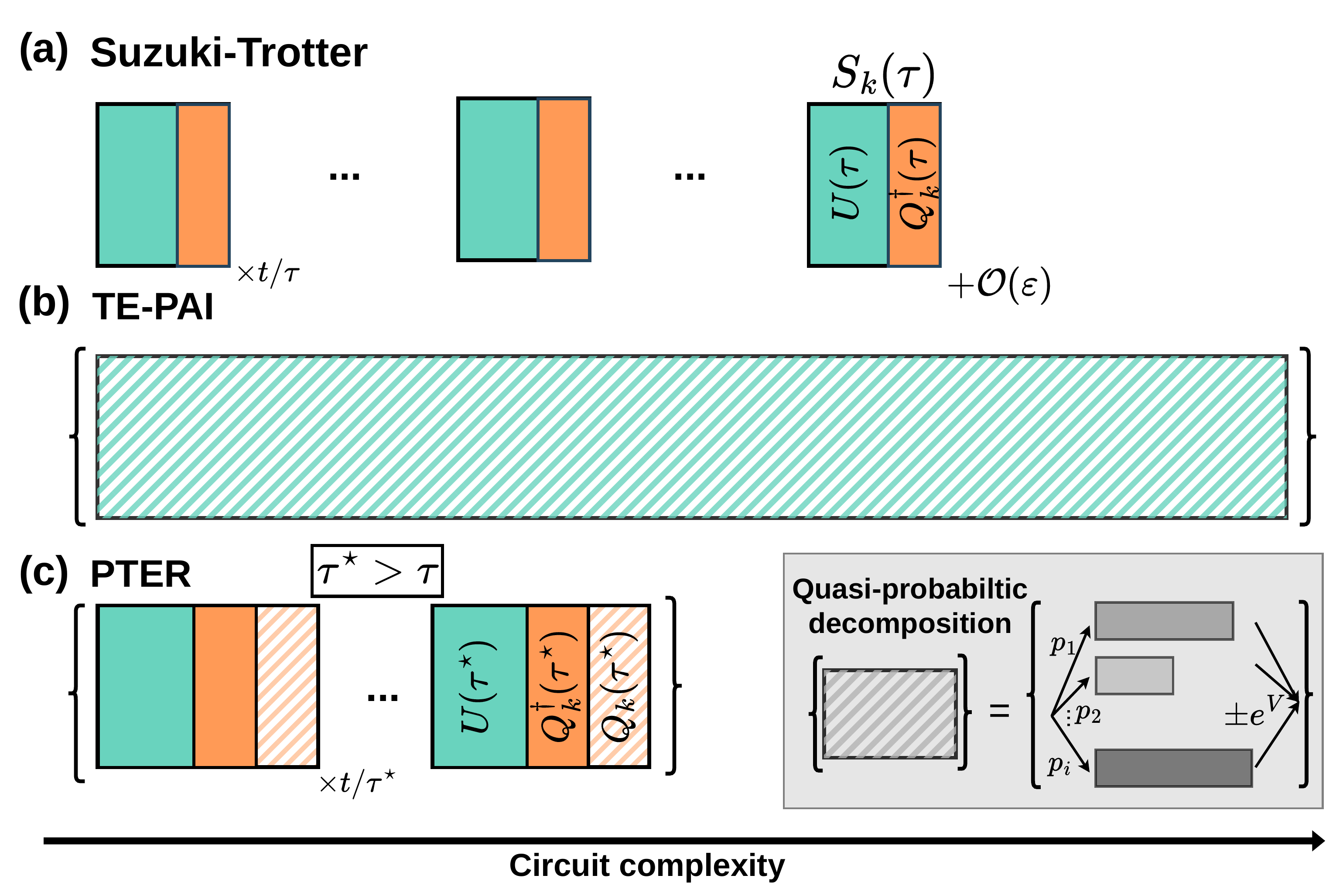}
    \caption{
    Simulation of $U(t)=\e^{-\im H t}$. (a) A $k$-th order Trotter formula $\mathcal{S}_k(\tau)$ incurs a unitary Trotter error $Q_k^\dagger(\tau)$ at each step of length $\tau$. (b) TE-PAI relies on quasi-probabilistic decomposition, whereby circuits are randomly sampled and recombined through post-processing (inset).
    When used to simulate $U(t)$, it is unbiased and requires neither additional qubits nor controlled gates, but incurs a larger expected gate count.
    (c) Through a suitable characterisation of the time-dependent generator $G_k(s)$ of $Q^\dag_k$, we reverse the error dynamics at each step.
    The resulting $\PT{k}$ construction removes any Trotter bias without ancillas, at the cost of a \emph{constant} sampling overhead. This enables us to increase the step length $\tau^\star$, and ultimately to reduce the expected gate counts.
    }
    \label{fig:overview}
\end{figure}

The evolution is split into $r$ steps of length $\tau=t/r$, so that $U(t)=U(\tau)^r$.
The first-order Trotter formula
$\mathcal{S}_1(\tau)=\e^{-\im A\tau}\e^{-\im B\tau}$
approximates $U(\tau)$ with an error stemming from the non-vanishing commutator $[A,B]$.
To account for this error, define the \emph{correction unitary} $Q_1(\tau) := U(\tau)\,\mathcal{S}_1^\dag(\tau)$.
The desired evolution $U(\tau)=Q_1(\tau)\mathcal{S}_1(\tau)$ can then be implemented by applying the Trotter formula followed by the correction unitary.

To analyse this correction,
consider its adjoint $Q_1^\dag(s)=\mathcal{S}_1(s)U^\dag(s)$,
which is the Trotter error in the form of a unitary operator.
Differentiating this error yields
$\im \partial_s Q_1^\dag(s) = G_1(s)\,Q_1^\dag(s)$,
where $G_1(s):=\im(\partial_s\mathcal{S}_1(s))\mathcal{S}_1^\dag(s)-\mathcal{S}_1(s)H\mathcal{S}_1^\dag(s)$ is the time-dependent remainder Hamiltonian,
which coherently generates the Trotter error dynamics.
As derived in \cref{app:G1-kernel}, it can be expressed as
\begin{equation}
  G_1(s) = -\int_0^s \e^{-\im As}\,\e^{-\im B\sigma}\,\im[A,B]\,\e^{\im B\sigma}\,\e^{\im As} \,\diff\sigma.
  \label{eq:G1-def}
\end{equation}
We use this form to implement the correction channel
$\mathcal{Q}_1(\tau)[\cdot]:=Q_1(\tau)[\cdot]Q_1^\dagger(\tau)$
through TE-PAI (\cref{fig:overview}b)~\footnote{In fact, we detail the circuit implementation of $Q_1(\tau)^\dagger[\cdot]Q_1(\tau)$. It relates to the form quoted in the main text simply by taking the adjoint of the circuits.}, 
which realises unbiased Hamiltonian simulation (as a channel) through quasi-probabilistic decomposition (see \cref{app:tepai-details}).
As it involves sampling unitary circuits and classical post-processing,
its resources are captured by (i) an expected gate count, over the sampled unitaries~\footnote{All the gate counts quoted in this work are expected counts $\mathbb{E}[N_{\mathrm{g}}]$. The gate counts of the individual sampled circuits concentrate around this expected value with a standard deviation $\sqrt{\mathbb{E}[N_{\mathrm{g}}]}$ according to the Poisson distribution.}, and (ii) a multiplicative sampling overhead.
TE-PAI applies to Hamiltonians decomposed into Hermitian-involution (h.i.) operators $R$ with $R^2=\mathbb{I}$ and $R^\dagger=R$.
For a decomposition $G=\sum_j c_jR_j$ in a generic h.i. basis, let $\|G\|_{1,\mathrm{hi}}:=\sum_j|c_j|$, and denote it $\|G\|_{1,\mathrm{Pauli}}$ for a decomposition in the Pauli basis. We recall the resource counts of TE-PAI below (see \cref{app:tepai-resources}).

\begin{lemma}[TE-PAI~\cite{kiumi2025te}]\label{lem:tepai}
Let $G(s)=\sum_j c_j(s)\,R_j$ be a time-dependent Hamiltonian with h.i. decomposition and define the integrated rate
$\lambda:=\int_0^\tau \|G(s)\|_{1,\mathrm{hi}}\,\diff s$.
For any $V>0$, TE-PAI
implements unbiased Hamiltonian simulation of
$G(s)$ for a time $\tau$
with sampling overhead $\e^{V}$, and expected gate count
\begin{equation}
  \mathbb{E}[N_{\mathrm{g}}] = \frac{4\lambda^2}{V} + \frac{V}{2}.
  \label{eq:tepai-cost}
\end{equation}
For $r$ independent $\tau$-time simulations of $G(s)$ within the same circuit, to maintain a total sampling overhead $\e^V$,
replace $V\rightarrow V / r$ in~\cref{eq:tepai-cost} to get the expected gate count per simulation.%
\end{lemma}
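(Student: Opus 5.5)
I would reconstruct the TE-PAI construction explicitly and then count its resources. The plan is to Trotterise the time-dependent evolution generated by $G(s)$ into $M$ short sub-steps of length $\Delta=\tau/M$, and then take $M\to\infty$ at the end. On each sub-step $[s_m,s_{m+1}]$ the channel is, to first order, a product over $j$ of rotations $\e^{-\im \theta_{j,m} R_j}$ with $\theta_{j,m}=c_j(s_m)\Delta$. Each such small-angle rotation is decomposed exactly, as a channel, into a quasi-probabilistic mixture of rotations with a fixed angle $\pm\Delta_0$ and the identity, using the angle-interpolation identity of Ref.~\cite{kiumi2025te}. The identity rests on $\e^{-\im\theta R}=\cos\theta\,\mathbb{I}-\im\sin\theta\,R$ for h.i.\ $R$. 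Writing the rotation channel as a linear combination of the channels for $\e^{-\im\Delta_0 R}$ and $\mathbb{I}$ (and, when $\Delta_0=\pi/2$ type, the channel for $R$ itself) gives coefficients whose absolute values sum to a $\gamma(\theta)$ with $\gamma(\theta)=1+\mathcal{O}(|\theta|\tan(\Delta_0/2))$. Writing $\Delta_0$ for the free angle, the probability of placing a nontrivial gate is $p(\theta)\approx|\theta|/\sin\Delta_0$-like, up to the precise form of the identity.

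Next, I would aggregate over all sub-steps and terms. The total overhead is $\prod_{j,m}\gamma(\theta_{j,m})\to\exp\bigl(\kappa(\Delta_0)\sum_{j,m}|\theta_{j,m}|\bigr)=\exp(\kappa(\Delta_0)\lambda)$, where $\kappa(\Delta_0)\approx\tan(\Delta_0/2)\approx\Delta_0/2$ for small $\Delta_0$. The number of nontrivial gates is a sum of independent Bernoulli variables. In the continuum limit it becomes Poisson, consistent with the footnote, with mean $\approx\lambda/\sin\Delta_0$ plus a correction. Unbiasedness follows because each local decomposition is exact at the channel level. The residual Trotterisation error of the time-dependent product formula vanishes as $M\to\infty$, while $\lambda$ is fixed, so the gate-count and overhead expressions have finite limits. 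Choosing $\Delta_0$ to make the overhead exactly $\e^{V}$ fixes $\Delta_0$ as a function of $V/\lambda$. Substituting this into the expected gate count and expanding to the orders that matter should give $\mathbb{E}[N_{\mathrm g}]=4\lambda^2/V+V/2$. The $4\lambda^2/V$ term comes from $\lambda/\sin\Delta_0$ with $\Delta_0\sim 2V/\lambda$ scaled appropriately, and the $V/2$ term comes from the next-order contribution.

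The main obstacle is obtaining exactly $4\lambda^2/V+V/2$ rather than an asymptotic equivalent. For this I would work with the exact closed forms of $\gamma$ and $p$ from the angle-interpolation identity. I would then take the $M\to\infty$ limit carefully, which linearises each $\theta$ so that only first-order terms survive in both the log-overhead and the expected count. This should yield exact relations of the form $V=f(\Delta_0)\lambda$ and $\mathbb{E}[N_{\mathrm g}]=g(\Delta_0)\lambda$, and I would eliminate $\Delta_0$ between them. If the identity uses the natural parametrisation in $\tan(\Delta_0/2)$, this elimination should give a quadratic-type expression of exactly the stated form.

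The final claim about $r$ independent simulations is then immediate. Overheads multiply, so $r$ blocks each with overhead $\e^{V/r}$ give a total of $\e^{V}$. Applying the single-block formula with $V\to V/r$ gives the per-block expected count, and by linearity of expectation the total is $r$ times this.
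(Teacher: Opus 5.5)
Your skeleton is the paper's. You Trotterise $G(s)$, apply probabilistic angle interpolation to each small rotation, pass to the Poisson limit, tune the free angle so that the overhead is $\e^V$, and eliminate it. The paper takes the TE-PAI rates as known and performs only this final substitution. Your unbiasedness and $V\to V/r$ arguments also match.

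The gap is that the step you call the main obstacle is never carried out, and the ingredients you do commit to would not produce $4\lambda^2/V+V/2$.

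\textbf{The decomposition.} The paper uses $\mathcal{R}_j(\theta)=\gamma_1\mathcal{I}+\gamma_2\mathcal{R}_j(\mathrm{sgn}(\theta)\Delta)+\gamma_3\mathcal{R}_j(\pi)$. Here $\mathcal{R}_j(\theta)$ is the channel of $\e^{-\im\theta R_j/2}$, and the $\pi$-channel $R_j\rho R_j$ is always present, not only in a special case. Some third channel is unavoidable: for generic $\theta$, $\mathcal{R}_j(\theta)$ is not in the span of $\mathcal{I}$ and a single $\mathcal{R}_j(\Delta)$. Which third channel you use matters. Reading your ``$\pm\Delta_0$ and the identity'' literally as $\{\mathcal{I},\mathcal{R}_j(\Delta),\mathcal{R}_j(-\Delta)\}$, one finds $\gamma\approx1+|\theta|/\sin\Delta$. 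The overhead is then at least $\e^{4\lambda}$ and cannot be tuned to an arbitrary $\e^V$. In PAI the only negative weight is $\gamma_3\approx-(|\theta|/2)\tan(\Delta/2)$, and this is what allows small overhead.

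\textbf{The normalisation.} The sampling overhead is the variance factor $\prod\gamma^2$, not $\prod\gamma$. The channel angle of $\e^{-\im\theta_{j\mu}R_j}$ is $2\theta_{j\mu}$. With these, the $N_\tau\to\infty$ limit gives exactly
$\log\gamma_\infty(\tau)^2=4\lambda\tan(\Delta/2)$ and $\mathbb{E}[N_{\mathrm{g}}]=\lambda(3-\cos\Delta)/\sin\Delta$. The gate count includes both $\Delta$- and $\pi$-insertions.

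Set $x:=\tan(\Delta/2)=V/(4\lambda)$ and use the identity $(3-\cos\Delta)/\sin\Delta=1/x+2x$. This gives $4\lambda^2/V+V/2$ exactly. No expansion in $V/\lambda$ is involved, and the $V/2$ is not a higher-order remainder.

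If you impose $\prod\gamma=\e^V$ as you wrote, the same computation gives $x=V/(2\lambda)$ and $2\lambda^2/V+V$ instead. Your $\kappa\approx\tan(\Delta_0/2)$ and $\Delta_0\sim2V/\lambda$ are likewise off by constant factors. These two exact rates and the trigonometric identity are the missing content of your proof.
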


At fixed overhead $\e^V$, employing TE-PAI directly for the simulation of $H$ yields $\lambda \in \mathcal{O}(t)$ and an expected number of gates $\mathbb{E}[N_{\mathrm{g}}] \in \mathcal{O}(t^2)$, quadratic in the simulation time, worse than any Trotter formula.
In $\PT{1}$, however, TE-PAI is only applied to reverse the errors generated by $G_1$.
To assess the gate count of this reversal, first expand $[A,B]=\im \sum_h c_hP_h$ in the Pauli basis, and substitute it into \cref{eq:G1-def} to get
$G_1(s) = \sum_h c_h \int_0^s \e^{-\im As}\,\e^{-\im B\sigma}\,P_h\,\e^{\im B\sigma}\,\e^{\im As}\,\diff\sigma$.
Further decomposing $G_1(s)$ in the Pauli basis and noting that the integrands satisfy $\e^{-\im As}\,\e^{-\im B\sigma}\,P_h\,\e^{\im B\sigma}\,\e^{\im As}=P_h+\mathcal{O}(s)$, one obtains $\|G_1(s)\|_{1,\mathrm{Pauli}}=\alpha_1\,s+\mathcal{O}(s^2)$ with $\alpha_1:=\sum_h|c_h|$.

The integrated rate then evaluates to
$\lambda_1 := \int_0^\tau\!\|G_1(s)\|_{1,\mathrm{Pauli}}\,\diff s
= \alpha_1\tau^2/2+\mathcal{O}(\tau^3)$.
For a geometrically local $H$, $\alpha_1=\mathcal{O}(n)$ since each Pauli in $[A,B]$ is supported on $\mathcal{O}(1)$ sites.
The gate count per step of $\PT{1}$ combines the $N$ gates needed to implement the Trotter formula with the expected $\mathcal{O}(\lambda_1^2/V)$ gates needed for the correction.
This is captured by the following theorem (proven in \cref{app:gate-count-unified}).

\begin{theorem}\label{thm:gate-count}
Let $H=A+B$ be an $n$-qubit geometrically local Hamiltonian with $N = \mathcal{O}(n)$ Pauli terms, and $V>0$.
Taking a Trotter step $\tau^\star=\mathcal{O}(V^{1/3}N^{1/3}\alpha_1^{-2/3}t^{-1/3})$,
$\PT{1}$ implements 
unbiased Hamiltonian simulation 
of $H$ for a time $t$ with sampling overhead $\e^V$ and expected gate count
\begin{equation}
  \mathbb{E}[N_{\mathrm{g}}]
  = A_1\,V^{-1/3}\,N^{2/3}\,\alpha_1^{2/3}\,t^{4/3}\left( 1 + o(1)\right),
  \label{eq:gates-optimal}
\end{equation}
with $A_1=3/2^{2/3}$.
For constant $V$,
$\mathbb{E}[N_{\mathrm{g}}]=\mathcal{O}(n^{4/3}t^{4/3})$.
\end{theorem}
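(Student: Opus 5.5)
Let $r=t/\tau$. The approach is to write the total expected gate count as a function of $\tau$ and minimise it.

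\emph{Circuit and unbiasedness.} The circuit applies $r$ rounds of $\mathcal{S}_1(\tau)$, each followed by the correction channel $\mathcal{Q}_1(\tau)$. Since $Q_1(\tau)\mathcal{S}_1(\tau)=U(\tau)$ exactly, the composed channel equals $U(t)[\cdot]U^\dagger(t)$. TE-PAI (\cref{lem:tepai}) implements each $\mathcal{Q}_1(\tau)$ without bias. For unbiasedness of the full scheme I would use independence of the $r$ samples: the product of unbiased estimators of channels, composed linearly, is an unbiased estimator of the composition. The remaining work is the cost accounting.

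\emph{Cost per step.} By the last clause of \cref{lem:tepai}, a total overhead $\e^V$ is kept if each correction uses $V\to V/r$. Each step then has expected cost
\[
N+\frac{4\lambda_1^2 r}{V}+\frac{V}{2r},
\]
where the first term is the Trotter layer. Summing over $r=t/\tau$ steps and using $\lambda_1=\alpha_1\tau^2/2+\mathcal{O}(\tau^3)$ gives
\[
\mathbb{E}[N_{\mathrm g}]=\frac{Nt}{\tau}+\frac{\alpha_1^2\tau^4 r^2}{V}\bigl(1+\mathcal{O}(\tau)\bigr)+\frac V2 .
\]
Since $r^2\tau^4=t^2\tau^2$, this becomes
\[
\mathbb{E}[N_{\mathrm g}]=\frac{Nt}{\tau}+\frac{\alpha_1^2t^2\tau^2}{V}\bigl(1+\mathcal{O}(\tau)\bigr)+\frac V2 .
\]

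\emph{Optimisation.} Minimise $f(\tau)=Nt/\tau+\alpha_1^2t^2\tau^2/V$. Setting $f'=0$ gives
\[
\tau^\star=\left(\frac{NV}{2\alpha_1^2t}\right)^{1/3},
\]
which has the stated form $\mathcal{O}(V^{1/3}N^{1/3}\alpha_1^{-2/3}t^{-1/3})$. At the minimum the two terms are in ratio $2:1$, so $f(\tau^\star)=\tfrac32\,Nt/\tau^\star$. Substituting $\tau^\star$ gives
\[
f(\tau^\star)=\tfrac32\,2^{1/3}N^{2/3}\alpha_1^{2/3}t^{4/3}V^{-1/3}=\frac{3}{2^{2/3}}\,V^{-1/3}N^{2/3}\alpha_1^{2/3}t^{4/3},
\]
matching $A_1$. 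Also $r$ must be an integer; rounding $t/\tau^\star$ changes $f$ only by a relative factor $1+\mathcal{O}(1/r)$.

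\emph{Main obstacle: justifying the $o(1)$ term.} Two errors must vanish relative to the leading term in the regime of large $t$ (or $n$) at fixed $V$. First, the $\mathcal{O}(\tau^3)$ correction to $\lambda_1$ must be negligible. This needs a bound $\|G_1(s)\|_{1,\mathrm{Pauli}}=\alpha_1 s\,(1+\mathcal{O}(s))$ with an $n$-independent constant in the $\mathcal{O}(s)$. I would get it by expanding the conjugation $\e^{-\im As}\e^{-\im B\sigma}P_h\e^{\im B\sigma}\e^{\im As}$ as a nested-commutator series. Locality then bounds the Pauli $1$-norm of the $m$-th order commutators by $\alpha_1 (cs)^m$ per unit, with $c=\mathcal{O}(1)$ set by the local degree and coefficient sizes. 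This series converges for $s<1/c$. Since $\tau^\star\to0$ as $t\to\infty$ (because $N/\alpha_1^2=\mathcal{O}(1/n)$), the correction is $o(1)$. Second, the additive $V/2$ term is $o(1)$ relative to a leading term that grows like $t^{4/3}$.

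Finally, with $N,\alpha_1=\mathcal{O}(n)$ and $V$ constant, the leading term is $\mathcal{O}(n^{2/3}n^{2/3}t^{4/3})=\mathcal{O}(n^{4/3}t^{4/3})$.
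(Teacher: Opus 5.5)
Your proposal is correct and takes essentially the same route as the paper. You write the total count as $rN+4r^2\lambda_1(t/r)^2/V+V/2$ with $\lambda_1=\tfrac12\alpha_1\tau^2\,(1+\mathcal{O}(\tau))$, minimise at $r^\star=t/\tau^\star=(2\alpha_1^2t^4/(VN))^{1/3}$, and use the $2{:}1$ balance of the two terms to get $A_1=3/2^{2/3}$; the only difference is that you bound the relative $\mathcal{O}(\tau)$ correction to $\lambda_1$ with a locality-bounded nested-commutator series, whereas the paper uses a telescoping bound over the $\mathcal{O}(1)$ single-Pauli rotations acting nontrivially on each local Pauli string, and both give an $n$-independent constant.
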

\begin{proof}[Proof sketch]
At fixed $r$, the expected total gate count is
$\mathbb{E}[N_{\mathrm{g}}] = rN + \alpha_1^2t^4/(Vr^2) + V/2$.
The first term accounts for the Trotter circuits, the remaining ones come from the corrections implemented by TE-PAI as per \cref{eq:tepai-cost} with $\lambda_1$ evaluated above.
It is minimised at $r^\star=(2\alpha_1^2t^4/(VN))^{1/3}$, resulting in \cref{eq:gates-optimal}.
\end{proof}

\paragraph*{Geometrically local Hamiltonians. ---}
The previous treatment extends to $\PT{k}$, for $k=1$ and even $k\geq2$,
and to general splittings $H=\sum_{\ell=1}^L H_\ell$.
As before, the order-$k$ Trotter error, defined as $Q_k(s):=U(s)\mathcal{S}_k^\dagger(s)$, is a coherent dynamics generated by a time-dependent remainder Hamiltonian $G_k(s)$ with $\lVert G_k(s)\rVert=\mathcal{O}(s^k)$.
Let us express a general $k$-th order Trotter formula in the form
$\mathcal{S}_k(\tau)=\prod_{j=1}^{\tilde\Upsilon_k}\e^{-\im\tau \widetilde H_j}$,

where $\widetilde H_j := \omega_j H_{\ell_j}$, with  Suzuki coefficients $\omega_j$ and indices $\ell_j\in\{1,\dots,L\}$ both fixed by the formula used.
This product has $\tilde\Upsilon_k=\Upsilon_k L$ layers, with $\Upsilon_k$ being the number of \emph{stages}~\cite{Childs2021trotter} of the formula ($\Upsilon_1=1$ and $\Upsilon_k=2\times5^{k/2-1}$ for even $k\ge2$).
To proceed, we express $G_k$ in a form
amenable to simulation through~\cref{lem:tepai}.

\begin{lemma}[Kernel form of $G_k$]\label{lem:Gk-kernel-main}
For a $k$-th order Trotter formula $\mathcal{S}_k$ of $H=\sum_\ell H_\ell$, the remainder Hamiltonian is
\begin{equation}
  G_k(s)
  = \sum_\nu\int_0^{s} K_\nu(s,\sigma)\,L_\nu C_\nu L_\nu^\dagger\,\diff\sigma,
  \label{eq:Gk-kernel-main}
\end{equation}
where each $C_\nu$ is a $k$-fold nested commutator of the components $\{H_\ell\}$, each $L_\nu(s,\sigma)$ is a partial product of Trotter factors with $L_\nu=\mathbb{I}+\mathcal{O}(s)$, and the non-negative kernels obey $\int_0^s K_\nu(s,\sigma)\,\diff\sigma = s^k w_\nu$. Hence $\|G_k(s)\|=\mathcal{O}(s^k)$. The explicit $L_\nu$, $K_\nu$, $w_\nu$ and a proof are given in \cref{app:higher-order}.
\end{lemma}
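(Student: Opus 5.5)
We start from the definition $G_k(s)=\im(\partial_s\mathcal{S}_k(s))\mathcal{S}_k^\dagger(s)-\mathcal{S}_k(s)H\mathcal{S}_k^\dagger(s)$, which is obtained exactly as in the $k=1$ case by differentiating $Q_k^\dagger(s)=\mathcal{S}_k(s)U^\dagger(s)$. Write $\mathcal{S}_k(s)=\prod_{j=1}^{\tilde\Upsilon_k}\e^{-\im s\widetilde H_j}$. Differentiating the product gives
\begin{equation*}
\im(\partial_s\mathcal{S}_k)\mathcal{S}_k^\dagger=\sum_{j}\mathcal{P}_{<j}(s)\,\widetilde H_j\,\mathcal{P}_{<j}(s)^\dagger,
\end{equation*}
where $\mathcal{P}_{<j}(s)$ is the ordered product of the first $j-1$ factors. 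We also use $H=\sum_j\widetilde H_j$, since the Suzuki weights of each $H_\ell$ sum to one. Hence
\begin{equation*}
G_k(s)=\sum_j\bigl(\mathcal{P}_{<j}\widetilde H_j\mathcal{P}_{<j}^\dagger-\mathcal{S}_k\widetilde H_j\mathcal{S}_k^\dagger\bigr).
\end{equation*}
Each summand is a difference of two conjugations of $\widetilde H_j$. Such a difference can be written as a telescoping sum over the factors lying between position $j$ and the end of the product. Each term of that sum has the form
\begin{equation*}
-\int_0^s \mathcal{P}\,\e^{-\im\sigma\widetilde H_m}\,\im[\widetilde H_m,X]\,\e^{\im\sigma\widetilde H_m}\,\mathcal{P}^\dagger\,\diff\sigma,
\end{equation*}
with $X$ a conjugated component. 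This is the fundamental theorem of calculus applied to $\sigma\mapsto \e^{-\im\sigma\widetilde H_m}X\e^{\im\sigma\widetilde H_m}$, exactly as in the derivation of \cref{eq:G1-def} in \cref{app:G1-kernel}.

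The result so far is only a first-order (single commutator) kernel representation. To reach $k$-fold nested commutators, I would iterate this step. Inside each integrand, expand the remaining conjugations $\e^{-\im\sigma' \widetilde H}Y\e^{\im\sigma'\widetilde H}$ once more by the same integral identity, producing nested commutators with simplex integrals over ordered times $\sigma_1,\dots$. The order conditions of the $k$-th order formula then enter: $\mathcal{S}_k(s)=U(s)+\mathcal{O}(s^{k+1})$, equivalently $G_k(s)=\mathcal{O}(s^k)$. This forces every term of nesting depth below $k$ to cancel identically when summed over $j$.

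Concretely, I would organise the iterated expansion as a Taylor expansion in $s$ with integral remainder. Fix the nesting depth at $k$, and keep the depth-$k$ terms with their exact conjugating unitaries, which are partial products of Trotter factors. All the lower-depth terms are homogeneous polynomials in $s$ with commutator coefficients. Because $G_k(s)=\mathcal{O}(s^k)$, they must vanish, so I would not need to check the cancellation by hand. What remains is $k$-fold nested commutators $C_\nu$, each conjugated by some $L_\nu(s,\sigma)$ built from Trotter factors at times $s$ or $\sigma$, so that $L_\nu=\mathbb{I}+\mathcal{O}(s)$.

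To obtain a single integral $\int_0^s K_\nu(s,\sigma)\,\diff\sigma$, I would reduce the multiple simplex integral over the inner variables. Those inner variables are absorbed by the convention that $L_\nu$ depends only on the outermost integration variable $\sigma$, while the rest of the simplex volume is collected into a polynomial kernel. For example, $\int_0^\sigma\diff\sigma_2\cdots$ yields factors $(s-\sigma)^{a}\sigma^{b}$ times the Suzuki weights. Signs and weights $\omega_j$ that are negative are moved into $C_\nu$ by a sign flip, which keeps $K_\nu\ge 0$. Then $\int_0^sK_\nu\,\diff\sigma=s^k w_\nu$ follows by homogeneity, since each kernel is a degree-$(k-1)$ homogeneous polynomial in $(s,\sigma)$. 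Finally, $\|G_k(s)\|\le s^k\sum_\nu w_\nu\|C_\nu\|=\mathcal{O}(s^k)$ follows by unitary invariance.

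The main obstacle is the reduction of the iterated integrals to a single kernel. The conjugations genuinely depend on all inner times, so collapsing them into one $L_\nu(s,\sigma)$ requires either choosing where the expansion stops, or accepting $\nu$ as a multi-index that includes the inner times. If a single-variable form fails, I would let $\nu$ range over these variables, turning the sum into an integral. That change keeps \cref{lem:tepai} applicable and preserves both non-negativity and the $s^k$ scaling.
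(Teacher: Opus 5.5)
\paragraph{Gap: the single-integral form is not established.} Your skeleton matches the paper's proof:
\begin{itemize}
\item the difference form $G_k=\sum_j[\mathrm{Ad}_{L_k^j(s)}(\widetilde H_j)-\mathrm{Ad}_{\mathcal{S}_k(s)}(\widetilde H_j)]$;
\item expansion to nesting depth $k$;
\item removal of the degree-$<k$ polynomial part using $G_k=\mathcal{O}(s^k)$ (Lemma~S1);
\item absorption of phases and Suzuki weights into $C_\nu$, and homogeneity of the kernel.
\end{itemize}
The gap is the step you yourself call the main obstacle. Beyond the $\mathcal{O}(s^k)$ bound, this is exactly what the lemma asserts: a \emph{discrete} sum over $\nu$ of \emph{single} $\sigma$-integrals, each with one partial product $L_\nu(s,\sigma)$. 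Your iteration (``inside each integrand, expand the remaining conjugations once more'') fixes no expansion order, and for a bad order the obstacle is real. Suppose you expand an outer factor first, or telescope so that the commutator hits a still-conjugated operator $X$, as your first-order step is phrased. Then the remainder contains $\mathrm{ad}_{\widetilde H_m}$ acting on something conjugated by other factors, and pushing the expansion further produces unitaries that depend on several integration times. Your fallback, letting $\nu$ absorb those times, gives a multi-integral representation. That is a different statement from the one claimed.

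\paragraph{The missing ingredient.} The paper imports it as Childs et al.'s Theorem~10 (Lemma~S2): expand innermost factors first, and never expand outer factors inside a remainder. Concretely, set $F_{i'}:=\e^{-\im s\widetilde H_{i'}}$ and telescope over prefixes,
\[
\mathrm{Ad}_{F_1\cdots F_{i'}}(Y)-\mathrm{Ad}_{F_1\cdots F_{i'-1}}(Y)=\mathrm{Ad}_{F_1\cdots F_{i'-1}}\bigl(\mathrm{Ad}_{F_{i'}}(Y)-Y\bigr),
\]
with $Y$ bare. Applying the $p=0$ conjugation identity to the single factor $F_{i'}$ gives
\[
\mathrm{Ad}_{L_k^{i}(s,\sigma)}(Y)=Y-\im\sum_{i'<i}\int_0^{s}\mathrm{Ad}_{L_k^{i'}(s,\sigma')}\bigl([\widetilde H_{i'},Y]\bigr)\,\diff\sigma'-\im\int_0^{\sigma}\mathrm{Ad}_{L_k^{i}(s,\sigma')}\bigl([\widetilde H_{i},Y]\bigr)\,\diff\sigma'.
\]
Each application replaces the old integration variable in the unitary by the new one. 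The old variable survives only as the upper limit $\sigma'\le\sigma$.

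Now start from $L_k^j(s)=L_k^{j-1}(s,s)$ and $\mathcal{S}_k(s)=L_k^{\tilde\Upsilon_k}(s,s)$, and apply this identity $k$ times to the remainders. This produces terms with non-increasing layer indices.
\begin{itemize}
\item The depth-$<k$ pieces are polynomials in $s$ and drop out, as you argued.
\item Each depth-$k$ piece is a bare nested commutator conjugated by $L_k^{i}(s,\sigma)$, where $\sigma$ is the \emph{last} variable only.
\item Fubini then collapses the ordered region to the single kernel
\[
K_\nu(s,\sigma)=\frac{(s-\sigma)^{q_i-1}\,s^{q_{i+1}+\cdots}}{(q_i-1)!\,q_{i+1}!\cdots},
\qquad
w_\nu=\frac{1}{q_i!\,q_{i+1}!\cdots},
\]
which is exactly the paper's form.
\end{itemize}
Your first-order step already has this structure if you telescope over prefixes so that $X=\widetilde H_j$ is bare, as in Eq.~(1). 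Iterating exactly that step is what closes the gap.
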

From~\cref{lem:Gk-kernel-main}, we see that the leading term of $G_k(s)$ is $s^k\sum_\nu w_\nu C_\nu$.
It follows that the remainder Hamiltonian has norm $\|G_k(s)\|_{1,\mathrm{Pauli}} = \alpha_k s^k(1+\mathcal{O}(s))$, and that $\lambda_k = \alpha_k\,\tau^{k+1}(1+\mathcal{O}(\tau))/(k+1)$,
where we have defined
$\alpha_k := \bigl\|\sum_\nu w_\nu\,C_\nu\bigr\|_{1,\mathrm{Pauli}}$.
We stress 
that $\alpha_k$ captures cancellations between the Pauli terms of the different nested commutators $C_{\nu}$ and thus can be much smaller than the standard commutator norm $\tilde\alpha_k := \sum_\nu w_\nu\,\|C_\nu\|_{1,\mathrm{Pauli}}$ commonly used to bound Trotter errors~\cite{Childs2021trotter}.
As for \cref{thm:gate-count}, the Trotter step is chosen to minimise the overall gate count and we get
(proven in \cref{app:gate-count-unified}):

\begin{theorem}
\label{thm:gate-count-k}
Let $k=1$ or $k\geq 2$ even, $H=\sum_\ell H_\ell$ be an $n$-qubit geometrically local Hamiltonian
with $N=\mathcal{O}(n)$ Pauli terms, and
$V>0$.
By taking a Trotter step
$\tau^\star=\mathcal{O}(V^{\frac{1}{2k+1}}(\Upsilon_k N)^{\frac{1}{2k+1}}\alpha_k^{-\frac{2}{2k+1}}t^{-\frac{1}{2k+1}})$,
$\PT{k}$ implements unbiased
Hamiltonian simulation of
$H$ for a time $t$ with sampling overhead $\e^V$ and expected gate count
\begin{equation}
\begin{split}
  \mathbb{E}[N_{\mathrm{g}}]
  = &A_k\,V^{-\frac{1}{2k+1}}\,\\  & \times (\Upsilon_k N)^{1-\frac{1}{2k+1}}
  \alpha_k^{\frac{2}{2k+1}}\,t^{1+\frac{1}{2k+1}}\bigl(1+o(1)\bigr),
  \label{eq:gates-optimal-k}
\end{split}
\end{equation}
with $A_k \leq 2 $.
For constant $V$,
$\mathbb{E}[N_{\mathrm{g}}]=\mathcal{O}(n^{1+1/(2k+1)}t^{1+1/(2k+1)})$.
\end{theorem}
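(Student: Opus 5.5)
We follow the proof sketch of \cref{thm:gate-count}: write the expected gate count at fixed number of steps $r$, then optimise over $r$. The circuit consists of $r$ blocks. Each block applies $\mathcal{S}_k(\tau)$, which costs $\Upsilon_k N$ Pauli rotations (each of the $\Upsilon_k L$ layers $\e^{-\im\tau\omega_j H_{\ell_j}}$ is a product of commuting Pauli rotations, and the layers of one stage together contain all $N$ terms). It then applies the TE-PAI reversal of $G_k$ over a time $\tau$. Unbiasedness follows because $U(\tau)=Q_k(\tau)\mathcal{S}_k(\tau)$ holds exactly and TE-PAI implements the channel $\mathcal{Q}_k(\tau)$ without bias (\cref{lem:tepai}). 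The $r$ independent quasi-probabilistic estimators then multiply to an unbiased estimator of $U(t)$.

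For the sampling overhead, we invoke the last clause of \cref{lem:tepai}. Each of the $r$ corrections is run with budget $V/r$, so the total overhead is $\e^{V}$. Each correction costs $4\lambda_k^2 r/V + V/(2r)$ gates. Using \cref{lem:Gk-kernel-main} and the consequence stated after it,
\[
\lambda_k=\frac{\alpha_k\tau^{k+1}}{k+1}\bigl(1+\mathcal{O}(\tau)\bigr),\qquad \tau=t/r .
\]
Summing over the $r$ blocks gives
\[
\mathbb{E}[N_{\mathrm g}]= r\Upsilon_k N+\frac{4\alpha_k^2 t^{2k+2}}{(k+1)^2 V r^{2k}}\bigl(1+\mathcal{O}(t/r)\bigr)+\frac{V}{2}.
\]

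Next we minimise $f(r)=ar+b r^{-2k}$ with $a=\Upsilon_kN$ and $b=4\alpha_k^2t^{2k+2}/((k+1)^2V)$. The stationarity condition gives $r^\star=(2kb/a)^{1/(2k+1)}$. Hence
\[
\tau^\star=t/r^\star\propto V^{1/(2k+1)}(\Upsilon_kN)^{1/(2k+1)}\alpha_k^{-2/(2k+1)}t^{-1/(2k+1)},
\]
which matches the claimed step. At the optimum,
\[
f(r^\star)=\Bigl(1+\tfrac{1}{2k}\Bigr)a\,r^\star=\Bigl(1+\tfrac{1}{2k}\Bigr)(2k)^{\frac{1}{2k+1}}\Bigl(\tfrac{2}{k+1}\Bigr)^{\frac{2}{2k+1}}a^{\frac{2k}{2k+1}}V^{-\frac{1}{2k+1}}\alpha_k^{\frac{2}{2k+1}}t^{\frac{2k+2}{2k+1}} .
\]
This reproduces \cref{eq:gates-optimal-k} with
\[
A_k=\frac{2k+1}{2k}(2k)^{1/(2k+1)}\bigl(2/(k+1)\bigr)^{2/(2k+1)}.
\]
We then check $A_k\le 2$ elementarily. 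At $k=1$ we get $A_1=\tfrac32\cdot 2^{1/3}=3/2^{2/3}\approx1.89$, and for even $k\ge2$ the expression decreases towards $1$. The bound $A_k\le 2$ could be verified by taking logarithms and bounding each factor, for instance $(2k)^{1/(2k+1)}\le \e^{1/\e}$ combined with the other factors being $\le 1$ for $k\ge2$; if that crude bound is not quite enough, we check $k=2$ numerically and use monotonicity for larger $k$.

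The main obstacle is controlling the $o(1)$ terms. We must show that at $r^\star$ the step $\tau^\star\to0$, so the $(1+\mathcal{O}(\tau))$ correction to $\lambda_k$ vanishes, and that the additive $V/2$ is negligible relative to the leading term. Both hold in the asymptotic regime $t\to\infty$ (or $n\to\infty$) at fixed $V$, since $\tau^\star\propto (N/(\alpha_k^2 t))^{1/(2k+1)}$. For a geometrically local $H$, $\alpha_k=\mathcal{O}(n)$ and $N=\Theta(n)$, so $\tau^\star=\mathcal{O}((nt)^{-1/(2k+1)})\to0$. The leading term grows like $(nt)^{1+1/(2k+1)}$, which dominates $V/2$. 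We must also handle the integrality of $r$: rounding $r^\star$ up changes $f$ only by a factor $1+\mathcal{O}(1/r^\star)=1+o(1)$. Finally, substituting $\Upsilon_k=\mathcal{O}(1)$ and $\alpha_k,N=\mathcal{O}(n)$ gives $\mathcal{O}(n^{1+1/(2k+1)}t^{1+1/(2k+1)})$ for constant $V$.
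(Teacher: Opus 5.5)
Your proposal follows the paper's own proof. The fixed-$r$ count $r\Upsilon_kN+4r^2\lambda_k(t/r)^2/V+V/2$, the leading-order rate $\lambda_k\simeq\alpha_k\tau^{k+1}/(k+1)$, and the minimisation over $r$ are all the same. Your constant $A_k=\frac{2k+1}{2k}(2k)^{1/(2k+1)}(2/(k+1))^{2/(2k+1)}$ is exactly the paper's $\frac{2k+1}{2k}\bigl(8k/(k+1)^2\bigr)^{1/(2k+1)}$. Your explicit check of $A_k\le 2$ and the rounding of $r^\star$ are details the paper does not spell out. Your crude bound already suffices: it gives $A_k\le\tfrac54\e^{1/\e}\approx 1.81$ for $k\ge2$, so no numerical fallback is needed.

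One step is wrong as written. Since $\tau^\star\propto(\Upsilon_kN/(\alpha_k^2t))^{1/(2k+1)}$ has $\alpha_k$ in the denominator, the upper bound $\alpha_k=\mathcal{O}(n)$ yields a \emph{lower} bound on $\tau^\star$, not $\tau^\star=\mathcal{O}((nt)^{-1/(2k+1)})$. For the same reason it does not show that the leading term grows with $n$, which you need to absorb $V/2$ into the $o(1)$. You need $\alpha_k=\Omega(n)$.

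The same assumption is hidden in the relative form $\lambda_k=\frac{\alpha_k\tau^{k+1}}{k+1}(1+\mathcal{O}(\tau))$. What the paper actually proves (the integrated-rate lemma in the appendix) is the additive statement $\lambda_k=\frac{\alpha_k\tau^{k+1}}{k+1}+\mathcal{O}(\tilde\alpha_k\tau^{k+2})$, with $\tilde\alpha_k=\Theta(n)$. The correction is therefore $\mathcal{O}(\tau)$ uniformly in $n$ only if $\tilde\alpha_k/\alpha_k=\mathcal{O}(1)$. The paper closes this by assuming $\alpha_k=\Theta(n)$, i.e.\ generic couplings with no fine-tuned cancellations among the $C_\nu$. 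State that assumption and your argument goes through. In the fixed-$n$, $t\to\infty$ regime it is already fine as is.
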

Note that the expected gate count in \cref{eq:gates-optimal-k} carries an explicit small prefactor $A_k$, so the advantage of $\PT$ holds at the level of constants, not only in scaling.
We further stress that both the sampling of the circuits and the evaluation of the expected gate counts are exact and require only $\mathcal{O}(N)$ classical pre-computation (\cref{app:sampling}). This is possible as our analysis of the Trotter error avoids resorting to Baker--Campbell--Hausdorff (BCH) expansions~\cite{CasasMurua2009}, whose number of terms grows exponentially with the order and which therefore require truncations, but relies on the closed form of the remainder Hamiltonian.

\paragraph*{General (non-local) Hamiltonians. ---}
For completeness, we briefly summarise results for a general Hamiltonian $H$, with $N$ Pauli terms and $1$-norm $\beta := \|H\|_{1,\mathrm{Pauli}}$.
In essence, the treatment of the non-local case follows the steps of the geometrically local one, with the exception that we now need to use $\tilde{\alpha}_k$, rather than $\alpha_k$, when evaluating $\lambda_k$, due to implementation subtleties (see \cref{app:nongeom-sampler}).
For a sampling overhead $\e^V$ we obtain (see \cref{app:non-geometric-gate-count}):
\begin{equation}
  \mathbb{E}[N_{\mathrm{g}}]
  = \mathcal{O}\!\left(V^{-\frac{1}{2k+1}}\,N\,\beta\,t^{1+\frac{1}{2k+1}}\right).
  \label{eq:gates-nonlocal}
\end{equation}
At constant sampling overhead, this reduces to $\mathcal{O}(N\beta\,t^{1+1/(2k+1)})$ which improves over the $k$-th order Trotter scaling $\mathcal{O}(N\beta^{1+1/k}\,t^{1+1/k}\,\epsilon^{-1/k})$ in every parameter $\epsilon$, $t$, and $\beta$.

\begin{figure*}[!htp]
  \centering
  \includegraphics[width=\textwidth]{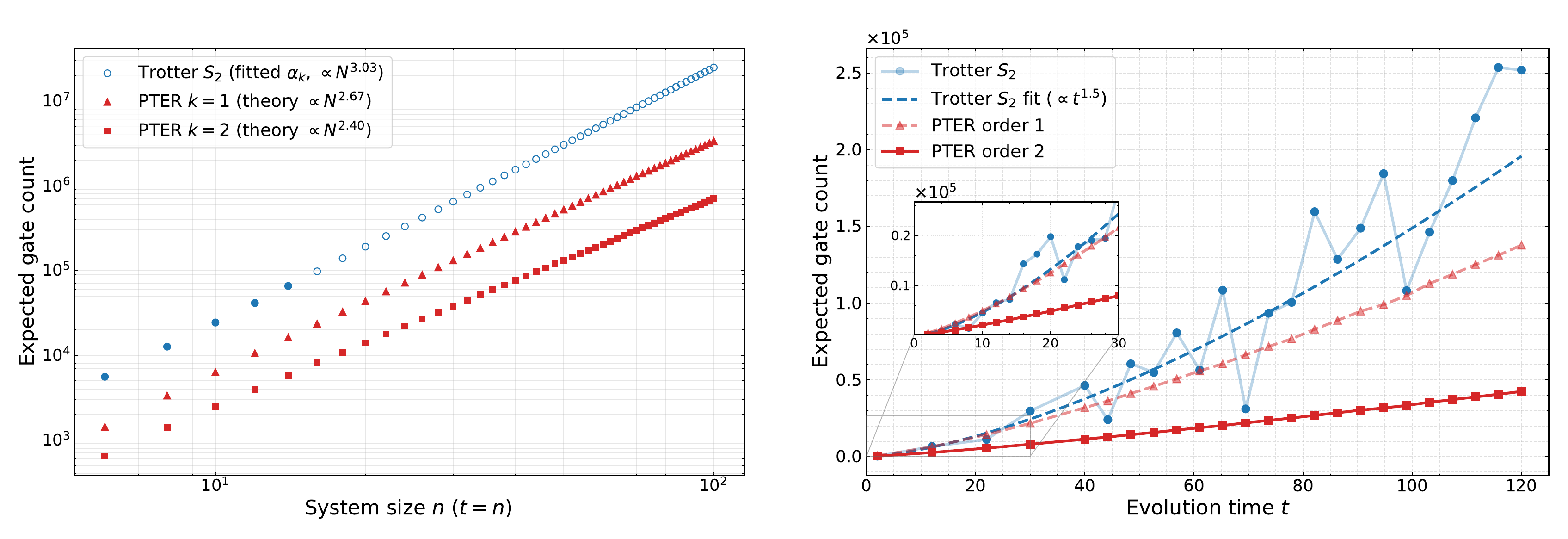}
  \caption{
    Expected gate count $\mathbb{E}[N_{\mathrm{g}}]$ to guarantee an RMSE~\eqref{eq:rmse} below $\epsilon_{\rm t}=0.03$, given $M=10^4$ measurements, for the simulation of a one-dimensional Heisenberg chain.
    Left: for arbitrary initial states and observables, as a function of the system size $n$ (fixing the time $t=n$). Order-$2$ Trotter circuits (blue dots, with empty markers to highlight extrapolated data) are compared to $\PT_k$ circuits for $k=1$ (red triangles) and $k=2$ (red squares).
    Right: for a fixed initial state (N\'eel state) and a fixed single-qubit observable ($Z_1$), as a function of $t$ (fixing $n=10$). The order-$2$ Trotter gate counts (blue dots) are fitted with a power law (dashed line).
  }
  \label{fig:numerics}
\end{figure*}

\paragraph*{Numerical results. ---}
Compared with Trotter formulas, $\PT$ achieves improved gate-count scalings in both the system size and the simulation time while remaining unbiased.
To appreciate further its practical reach, we conduct numerical analyses for the simulation of a one-dimensional Heisenberg chain with periodic boundary conditions ($n+1 \equiv1$). 
Explicitly, $H=\sum_{j=1}^{n}h_j$ with $h_j:=X_jX_{j+1}+Y_jY_{j+1}+Z_jZ_{j+1}$, that is recast as $H=A+B$ with $A/B=\sum_{j\in{\mathrm{odd/even}}}h_j$.
In many problems of interest~\cite{Fauseweh2024, Miessen2023}, one evolves an initial state and estimates the expectation value of an observable $O$ (with $\|O\|_{\infty}\leq 1$) through $M$ repeated measurements. The accuracy of such estimates is quantified by the root-mean-square error,
\begin{equation}
\mathrm{RMSE}=\sqrt{\mathrm{bias}^2+\Delta_{\rm 1s}^2/M},
  \label{eq:rmse}
\end{equation}
that captures the systematic error (bias) and the statistical fluctuation (variance) given by the single-shot variance $\Delta_{\rm 1s}^2$ rescaled by the number of measurements~\cite{murota2026errormitigated}.
The following studies compare the expected gate count to achieve a target RMSE $\epsilon_{\rm t}=0.03$ given $M=10^4$ measurements.
For PTER circuits, the bias is zero (independent of the state or observable) while $\Delta^2_{\rm 1s}\leq \e^V$ accounting for the sampling overhead. This fixes $V$ through
$\e^{V/2}=3$ for the parameters used here, such that the expected gate counts are computed directly from~\cref{thm:gate-count,thm:gate-count-k}.
For Trotter circuits, $\Delta^2_{\rm 1s}\leq 1$ but the bias will depend on the specific choice of initial states and observables.

We first examine the resources required to guarantee an RMSE below $\epsilon_{\rm t}$, for arbitrary initial state and observable, as a function of the system size $n$ (fixing $t=n$).
For Trotter circuits, we bound the bias by the diamond-norm distance between the Trotterized and exact evolution channels, which is the worst-case bias over all pure input states and Hermitian observables (see \cref{app:error-metric}).
Thus, the step count $r^*$ is chosen as the smallest $r$ for which the resulting RMSE bound is below $\epsilon_{\rm t}$.
Since evaluating the diamond-norm distance requires exact diagonalisation, the resulting gate counts are evaluated for $n\le14$ and extrapolated beyond this by fitting $r^*=a n^b$.
The resulting gate counts (left panel of \cref{fig:numerics}) are consistent with the predicted separation between the second-order Trotter baseline (blue dots), scaling as $\mathcal{O}(n^3)$, 
and $\PT_k$, scaling as $\mathcal{O}(n^{2+2/3})$ for $k=1$ (red triangles) and $\mathcal{O}(n^{2.4})$ for $k=2$ (red squares).
As seen, despite the overhead incurred by a more complex underlying Trotter formula, a larger $\Upsilon_k$ in~\cref{eq:gates-optimal-k}, $\PT_2$ improves on $\PT_1$.
More importantly, this shows that even at small system sizes ($n \sim 10$), it already improves on the Trotter formula by almost an order of magnitude.
Given the respective scalings, this gap increases to a $30$-fold reduction in the gate count at system size $n=100$.

The biases incurred by Trotter formulas can be substantially smaller than the diamond-norm bound for a specific initial state
and observable, especially if the latter is local~\cite{Heyl2019, Layden2022}.
To further challenge $\PT$, we restrict the comparison to a fixed initial state (the N\'eel state $\ket{01\cdots01}$) and a fixed local observable ($Z_1$), at $n=10$ (right panel of \cref{fig:numerics}).
For the Trotter circuits, we thus compute~\cref{eq:rmse} with the bias evaluated for this specific state and observable, rather than with the worst-case bound used earlier.
The gate counts reported correspond again to the smallest counts achieving $\mathrm{RMSE}\le\epsilon_{\rm t}$. 
For $\mathcal{S}_2$ (blue dots) the gate count fluctuates in time but overall follows the expected $t^{3/2}$ scaling (dashed curve), to be contrasted with $\PT_{k}$, which scales as $t^{4/3}$ (for $k=1$) or $t^{6/5}$ (for $k=2$).
At large times this hierarchy of exponents is visible.
More strikingly, even at short time (inset), ${\rm PTER}_1$ matches the second-order Trotter gate count
while ${\rm PTER}_2$ systematically improves it.
This proof-of-principle benchmark supports the claim that, due to its improved scalings and %
small prefactors, PTER can already bring benefits to tasks addressable with current quantum computing resources, running a few thousand gates.

\paragraph*{Discussion and outlook. ---}

In this work, we have re-interpreted the Trotter error not as an accuracy barrier to be reduced by shortening the step size, but as a coherent dynamics, characterised in closed form free from the complications of BCH expansions, which can be reversed.
After reversal, the Trotter step is rather increased, yielding the improved gate-count scalings of \cref{eq:gates-optimal-k,eq:gates-nonlocal}, at constant sampling overhead, while retaining the hardware simplicity of bare Trotter circuits.
Additional notable benefits of $\PT$ are worth highlighting. 
First, as estimate errors become purely statistical (free of bias), precise resource estimates are possible, in contrast to Trotter formulas, whose estimates rest on bias bounds that typically overestimate the actual error by orders of magnitude~\cite{Childs2018}. This is valuable for algorithms such as ground-state energy estimation via quantum phase estimation~\cite{Yoshioka2024, Kanasugi2026qpe, KimmelLowYoder2015, Russo2021rpe}. Second, all sampled correction gates are rotations by a fixed and tuneable angle, a feature inherited from TE-PAI. Such regularity can readily be exploited in Clifford+T implementations, for example through catalyst towers, promising substantial T-gate savings~\cite{Bothe2026smallangle,SunKoczor2025rotations}.

The methods presented here lend themselves to many extensions and follow-up studies.
While the reverse evolution is realised through TE-PAI, other unbiased time-dependent simulation methods~\cite{Granet2023, Nakaji:2023gze, Chakraborty2024} could be substituted.
Furthermore, generalising $\PT$ to time-dependent Hamiltonian simulations and to Lindblad evolutions would be of great interest.
Finally, it is valuable to assess $\PT$ on more demanding simulation tasks, notably two-dimensional lattice systems where quantum simulation is expected to offer an advantage over classical methods~\cite{Yoshioka2024}, and to benchmark it against multi-product formulas or Trotter-step extrapolation schemes~\cite{Vazquez2023, WatsonWatkins2024}, and QSVT-based algorithms~\cite{Low2017, Gilyen2019, Low2019}.

The intended regime for $\PT$ is early fault-tolerant quantum computing~\cite{LinTong2022, Dutkiewicz2024}, where gate errors are reduced but block-encoding infrastructure remains too demanding.
Its improved scalings, however, suggest that its edge %
is likely to extend beyond the early fault-tolerant regime.
For geometrically local Hamiltonians, the best known lattice algorithm achieves the near-optimal $\mathcal{O}(nt\,\mathrm{polylog}(nt/\epsilon))$ gate count via a Lieb--Robinson decomposition~\cite{HaahHastingsKothariLow2021}, while $\PT{k}$ achieves $\mathcal{O}((nt)^{1+1/(2k+1)})$.
Even under the conservative assumptions of comparable prefactors and a polylogarithmic factor reduced to $\log(nt)$, $\PT{2}$ remains favourable up to $nt\sim 10^{6}$ and $\PT{4}$ up to $nt\sim 10^{12}$.
Overall, $\PT$ holds the promise of advancing Hamiltonian simulation to larger systems and higher accuracy, in the near term and beyond.

\begin{acknowledgments}
We thank \'Etienne Granet and Matthias Rosenkranz for their feedback on the manuscript.
This work was supported by the Center of Innovation for Sustainable Quantum AI (SQAI), JST Grant Number JPMJPF2221, and JSPS KAKENHI Grant Numbers JP24K00543 and JP24KJ0892.
\end{acknowledgments}

\bibliography{main}

\clearpage
\appendix
\onecolumngrid

\setcounter{theorem}{0}\setcounter{lemma}{0}\setcounter{proposition}{0}
\setcounter{corollary}{0}\setcounter{definition}{0}\setcounter{example}{0}
\setcounter{remark}{0}\setcounter{protocol}{0}\setcounter{procedure}{0}
\renewcommand{\thetheorem}{S\arabic{theorem}}
\renewcommand{\thelemma}{S\arabic{lemma}}
\renewcommand{\theproposition}{S\arabic{proposition}}
\renewcommand{\thecorollary}{S\arabic{corollary}}
\renewcommand{\thedefinition}{S\arabic{definition}}
\renewcommand{\theexample}{S\arabic{example}}
\renewcommand{\theremark}{S\arabic{remark}}
\renewcommand{\theprotocol}{S\arabic{protocol}}
\renewcommand{\theprocedure}{S\arabic{procedure}}

The appendices are organised in five parts.
\Cref{app:trotter_error} derives a nested-commutator representation of the remainder Hamiltonian $G_k(s)$, beginning with its definition and building up from first order ($k=1$) through general $k$.
\Cref{app:tepai-details} reviews TE-PAI~\cite{kiumi2025te}, which is the framework used to implement the correction channels without any bias. We further detail concepts of quasi-probabilistic decompositions and sampling overhead, and proves~Lemma~1 of the main text. 
\Cref{app:apply-tepai} combines the two previous parts by applying TE-PAI for the simulation of $G_k$. There, we establish the gate count per step of $\PT$, and the optimisation of this gate count to prove Theorems~1 and~2 of the main text 
and their extensions to general (non-local) Hamiltonians.
\Cref{app:sampling} provides additional details for the implementation of $\PT$ both for the geometrically local and non-local cases. 
\Cref{app:error-metric} details how the diamond-norm distance used for the numerical studies is obtained.
In particular, we explain how it relates to a phase-optimised operator norm, which it is used for the results reported.

\section{Dynamics of Trotter error}
\label{app:trotter_error}

In this appendix, we derive the nested-commutator representation of the remainder Hamiltonian $G_k$ that generates the Trotter error, which can be corrected by applying the correction channel $Q_k$.
\Cref{app:trotter-recap} introduces the notations and the Trotter formulas used throughout.
\Cref{app:Gk-definition} defines $G_k$ through the factorisation $U(\tau)=Q_k(\tau)\,\mathcal{S}_k(\tau)$ and provides a first expression for the remainder Hamiltonian.
\Cref{app:conjugation-identity} introduces the conjugation identity, central to the following derivations. The remaining subsections apply this tool to obtain $G_k$ at successive orders. As working examples, and focusing on the case $H=A+B$, \cref{app:G1-kernel} derives the expression for $G_1$ that was presented in the main text, Eq.~(1), while \cref{app:k2-corollary} obtains $G_2$. Finally, \cref{app:higher-order} proves the general kernel-form statement for $G_k$ of Lemma~2.

\subsection{Suzuki--Trotter formulas}
\label{app:trotter-recap}

We recall the Suzuki--Trotter product formulas~\cite{Suzuki1991, Childs2021trotter} that are used later on.
For a Hamiltonian decomposed as 
\begin{equation}
    H=\sum_{\ell=1}^{L}H_\ell,
\end{equation}
with each $H_\ell$ being a weighted sum of mutually commuting Pauli strings, the first- and second-order formulas read
\begin{equation}
\begin{aligned}
  \mathcal{S}_1(\tau) &:= \prod_{\ell=1}^{L}\e^{-\im\tau H_\ell}, \\
  \mathcal{S}_2(\tau) &:= \prod_{\ell=1}^{L}\e^{-\im\frac{\tau}{2} H_\ell}\,\prod_{\ell=L}^{1}\e^{-\im\frac{\tau}{2} H_\ell}.
\end{aligned}
  \label{eq:S1-S2-recap}
\end{equation}
Higher even orders are obtained through the Suzuki recursion
\begin{equation}
\begin{aligned}
  \mathcal{S}_{2k+2}(\tau)
  &:= \mathcal{S}_{2k}(u_k\tau)^2\,\mathcal{S}_{2k}\bigl((1-4u_k)\tau\bigr)\,\mathcal{S}_{2k}(u_k\tau)^2, \\
  u_k &:= \bigl(4-4^{1/(2k+1)}\bigr)^{-1}.
\end{aligned}
  \label{eq:Suzuki-recursion}
\end{equation}
Let us refer to any unitary of the form $\exp(-\im\alpha \widetilde H_j)$, for an arbitrary angle $\alpha$, as \emph{a layer}. Note that such a layer can always be implemented as a product of single-Pauli rotations, since $H_{\ell_j}$ is a weighted sum of mutually commuting Pauli strings.
We further define the number of \emph{stages}~\cite{Childs2021trotter} as $\Upsilon_k = 2\cdot 5^{k/2-1}$ (for even $k\geq 2$) and $\Upsilon_1=1$, such that each stage involves a product of $L$ \emph{layers}. Hence, a $k$-th order formula $\mathcal{S}_k$ has $\tilde\Upsilon_k:=\Upsilon_k L$ layers and, for a Hamiltonian $H$ with $N$ Pauli terms, is implemented in terms of $\Upsilon_k N$ Pauli rotations.

For convenience, we recast a $k$-th order formula as
\begin{equation}
\begin{aligned}
  \mathcal{S}_k(\tau) &= \prod_{j=1}^{\tilde\Upsilon_k}\e^{-\im\tau \widetilde H_j} \quad (\textrm{with}\, \sum_{j=1}^{\tilde\Upsilon_k}\widetilde H_j= H),
\end{aligned}
  \label{eq:Sk-product-recap}
\end{equation}
where each $\widetilde H_j = \omega_j H_{\ell_j}$ is a weighted multiple of one of the components of the Hamiltonian, with Suzuki coefficients $\{\omega_j\}$ and indices $\{\ell_j\}\subseteq\{1,\dots,L\}$ fixed by the product formula. 

Let us also fix the locality notions used later, in particular in Sec.~C. The $n$ qubits sit on the vertices of a lattice of fixed spatial dimension and bounded degree, so each site has $\mathcal{O}(1)$ neighbours. We call $H$ \emph{geometrically local} when each of its Pauli terms is supported within a ball of constant radius, independent of the system size, which we call the \emph{locality range}. We further assume that the Pauli coefficients of $H$ are bounded in magnitude by a constant. We collectively refer to the spatial dimension, the lattice degree, the locality range, and the coupling-strength bound as the \emph{locality parameters}. For a geometrically local $H$, the number of Pauli terms satisfies $N=\mathcal{O}(n)$.

Finally, we recall the required resources to simulate the Hamiltonian $H$, up to some error $\epsilon$, with  Suzuki--Trotter formulas.
The Hamiltonian $H$ can be further decomposed as a sum of $N$ Pauli strings $H=\sum_{j=1}^Nc_jP_j$ with corresponding 1-norm $\beta:=\sum_{j=1}^N|c_j|$. Then, the algorithmic bias of the $k$-th order Trotter formula scales as
\begin{align}
\label{eq:kth_order_trotter_error}
    \big\|\mathcal{S}_k(\tau) - U(\tau)\big\| 
    = 
    \mathcal{O}\big((\beta \tau)^{k+1}\big).
\end{align}
This follows from Lemma~6 in Ref.~\cite{Childs2021trotter} together with the inequality $\sum_{\ell=1}^L\|H_\ell\|\le \beta$ (we assume that the same Pauli strings do not appear in multiple $H_\ell$'s so that the inequality holds).
Note that the error~\eqref{eq:kth_order_trotter_error} indicates that the $k$-th order correction unitary satisfies $\mathcal{S}_k(\tau)\,U^\dagger(\tau)=\mathbb{I}+\mathcal{O}(\tau^{k+1})$ at small $\tau$. %
It follows that the Hamiltonian $H$ can be simulated for a time $t$ and an error $\epsilon$ by splitting the time $t$ into a number of steps (Corollary~7 in \cite{Childs2021trotter}):
\begin{equation}
r=\mathcal{O}\left(\frac{(\beta t)^{1+1/k}}{\epsilon^{1/k}}\right),
\end{equation}
leading to the total number of rotation gates quoted in the main text:
\begin{align}
    \mathcal{O}(rN) 
    = 
    \mathcal{O}\left(\frac{N(\beta t)^{1+1/k}}{\epsilon^{1/k}}\right).
\end{align}

In many cases, the required resources of the Trotter formula can be bounded more tightly by leveraging the Hamiltonian structure through the nested commutator $\alpha_{\rm comm}:=\sum_{\ell_1,\ell_2,\dots,\ell_{k+1}=1}^{L}\big\|[H_{\ell_{k+1}},\dots[H_{\ell_2},H_{\ell_1}]]\big\|$ (Corollary~12 in \cite{Childs2021trotter}):
\begin{align}
    r=\mathcal{O}\left(\frac{\alpha_{\rm comm}^{1/k}t^{1+1/k}}{\epsilon^{1/k}}\right).
\end{align}
In particular, $\alpha_{\rm comm}=\mathcal{O}(n)$ and $N=\mathcal{O}(n)$ when the Hamiltonian is geometrically local, resulting in the upper bound of the rotation gate count,
\begin{align}
    \mathcal{O}(rN) 
    = \mathcal{O}\left(\frac{(n t)^{1+1/k}}{\epsilon^{1/k}}\right),
\end{align}
quoted in the main text.

\subsection{Definition of the remainder Hamiltonian \texorpdfstring{$G_k$}{Gk}}
\label{app:Gk-definition}

In this section we derive a first expression for the remainder Hamiltonian $G_k$ at arbitrary $k$ and discuss its properties.
Splitting the total evolution time $t$ into $r$ steps of length $\tau=t/r$, we define within a single step the \emph{correction unitary} as a function of the running time $s\in[0,\tau]$,
\begin{equation}
    Q_k(s) := U(s)\,\mathcal{S}_k^\dagger(s),
    \label{eq:Qk-def-app}
\end{equation}
so that $U(s)=Q_k(s)\,\mathcal{S}_k(s)$ and $Q_k(0)=\mathbb{I}$. Setting $s=\tau$ recovers the per-step correction.
The adjoint
\begin{equation}
    Q_k^\dagger(s) = \mathcal{S}_k(s)\,U^\dagger(s)
    \label{eq:Qkdag-def-app}
\end{equation}
captures the deviation between the product formula and the exact evolution as a single unitary, which we call the \emph{Trotter error}.
Differentiating \cref{eq:Qkdag-def-app} and using $\im\partial_s U^\dagger=-U^\dagger H$ yields
\begin{equation}
\begin{split}
    \im\partial_s Q_k^\dagger(s)
    &= \im\bigl(\partial_s\mathcal{S}_k(s)\bigr)U^\dagger(s)
       + \mathcal{S}_k(s)\,\im\partial_s U^\dagger(s) \\
    &= H_S(s)\,Q_k^\dagger(s) - Q_k^\dagger(s)\,H,
\end{split}
\label{eq:Qkdag-derivative}
\end{equation}
where we have defined
\begin{equation}\label{eq:time_remainder_hamiltonian}
H_S(s):=\im(\partial_s\mathcal{S}_k)\mathcal{S}_k^\dagger(s),
\end{equation}
which is the time-dependent generator of the Trotter unitary $\mathcal{S}_k(s)$.
Since $U^\dagger(s) H U(s) = H$ for any $s$, we have $Q_k^\dagger(s)\,H\,Q_k(s) = \mathcal{S}_k(s)\,H\,\mathcal{S}_k^\dagger(s)$. Thus, we obtain
\begin{equation}
    \im\partial_s Q_k^\dagger(s) = G_k(s)\,Q_k^\dagger(s),
    \label{eq:Qkdag-left-mul}
\end{equation}
with the \emph{remainder Hamiltonian} defined as
\begin{equation}
    G_k(s) := H_S(s) - \mathcal{S}_k(s)\,H\,\mathcal{S}_k^\dagger(s).
    \label{eq:Gk-left-mul-def}
\end{equation}
The remainder Hamiltonian inherits the accuracy of the Trotter formula it is built from.
\begin{lemma}%
\label{lem:Gk-order}
For any $k$-th order Trotter formula $\mathcal{S}_k$, the remainder Hamiltonian satisfies $G_k(s) = \mathcal{O}(s^k)$.
\end{lemma}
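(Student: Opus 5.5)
The plan is to read off the order of $G_k$ from the known order of the Trotter error $Q_k^\dagger$, using the differential equation \cref{eq:Qkdag-left-mul}. By \cref{eq:Qkdag-left-mul}, $G_k(s) = \im\,(\partial_s Q_k^\dagger(s))\,Q_k(s)$, so it suffices to show $\partial_s Q_k^\dagger(s)=\mathcal{O}(s^k)$. Since $Q_k$ is unitary, no extra growth enters through the right factor.

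\textbf{Step 1: power series.} Everything here is entire in $s$: $\mathcal{S}_k(s)$ is a finite product of exponentials $\e^{-\im s\widetilde H_j}$, and $U^\dagger(s)=\e^{\im Hs}$. Hence $Q_k^\dagger(s)=\mathcal{S}_k(s)U^\dagger(s)$ has a convergent expansion $\sum_m s^m X_m$.

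\textbf{Step 2: use the order condition.} The defining property of a $k$-th order formula is that it matches $U(s)$ through order $s^k$. This is recalled after \cref{eq:kth_order_trotter_error} in the form $\mathcal{S}_k(s)U^\dagger(s)=\mathbb{I}+\mathcal{O}(s^{k+1})$. It gives $X_0=\mathbb{I}$ and $X_1=\dots=X_k=0$.

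\textbf{Step 3: differentiate and conclude.} Differentiating the series term by term is legitimate for a convergent power series, and gives $\partial_s Q_k^\dagger(s)=\sum_{m\ge k+1} m\,s^{m-1}X_m=\mathcal{O}(s^k)$. Multiplying by the bounded unitary $Q_k(s)$ then gives $G_k(s)=\mathcal{O}(s^k)$. Explicitly, the leading coefficient is $G_k(s)=\im(k+1)X_{k+1}s^k+\mathcal{O}(s^{k+1})$.

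\textbf{Main obstacle.} The delicate point is not the argument itself but the meaning of "$\mathcal{O}$". A merely pointwise statement $\|Q_k^\dagger(s)-\mathbb{I}\|=\mathcal{O}(s^{k+1})$ would not by itself control the derivative. That is why I route the argument through analyticity and coefficient matching rather than through norm bounds.

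If uniform constants in terms of $\beta$ are wanted, I would use a second route. Write $\mathcal{S}_k H\mathcal{S}_k^\dagger$ and $H_S(s)=\sum_j \bigl(\prod_{i<j}\e^{-\im s\widetilde H_i}\bigr)\widetilde H_j\bigl(\prod_{i<j}\e^{-\im s\widetilde H_i}\bigr)^\dagger$ as sums of conjugated terms, then expand via $\e^{X}Y\e^{-X}=\sum_m \mathrm{ad}_X^m Y/m!$. The constant term is $\sum_j\widetilde H_j-H=0$. The vanishing of the coefficients of orders $1,\dots,k-1$ is exactly equivalent to the order conditions, and the remainder is bounded by nested-commutator norms. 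This is the route that later yields the kernel form of \cref{lem:Gk-kernel-main}. For the present lemma, however, the analytic argument above suffices.
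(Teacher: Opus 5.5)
Your argument is correct and is essentially the paper's proof: the paper also differentiates the order identity $\mathcal{S}_k(s)\e^{\im Hs}=\mathbb{I}+\mathcal{O}(s^{k+1})$, multiplies by $\im$, and right-multiplies by the unitary $\e^{-\im Hs}\mathcal{S}_k^\dagger(s)$, which is exactly your identity $G_k=\im(\partial_s Q_k^\dagger)Q_k$. Your appeal to analyticity, to justify differentiating the $\mathcal{O}(s^{k+1})$ remainder term by term, fills in a step the paper leaves implicit.
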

\begin{proof}
The defining property of an order-$k$ Trotter formula~\cite{Suzuki1991, Childs2021trotter} is the small-$s$ identity
\begin{equation}
    \mathcal{S}_k(s)\,\e^{\im Hs} = \mathbb{I} + \mathcal{O}(s^{k+1}).
    \label{eq:Sk-order-property}
\end{equation}
Differentiating in $s$, multiplying by $\im$, and right-multiplying by $\e^{-\im Hs}\mathcal{S}_k^\dagger(s)$ gives $\im(\partial_s\mathcal{S}_k)\mathcal{S}_k^\dagger - \mathcal{S}_k H\mathcal{S}_k^\dagger = \mathcal{O}(s^k)$, which is $G_k(s)$ as per~\cref{eq:Gk-left-mul-def}.
\end{proof}
Since $Q_k^\dagger(0)=\mathbb{I}$, \cref{eq:Qkdag-left-mul} is solved through the time-ordered exponential $Q_k^\dagger(s)=\mathcal{T}\exp\bigl(-\im\int_0^{s}G_k(s')\,\diff s'\bigr)$.

\subsection{Tool: the conjugation identity}
\label{app:conjugation-identity}

The following derivations of $G_k$ in \cref{eq:Gk-left-mul-def} at first ($k=1$), second ($k=2$), and general order $k$ all rely on a single
operator-calculus identity: the integral-remainder expansion of
$\e^{-\im t\,\mathrm{ad}_X}(Y)$, which we now detail.
For an arbitrary operator $X$, the commutator map $\mathrm{ad}_X$ is defined by
\begin{equation}
  \mathrm{ad}_X(Y) := [X, Y] = XY - YX,
  \label{eq:ad-def}
\end{equation}
and the conjugation map by
\begin{equation}
  \mathrm{Ad}_U(Y) := U\,Y\,U^\dagger.
  \label{eq:Ad-def}
\end{equation}
The $n$-fold commutator is denoted by $\mathrm{ad}_X^n$ such that $\mathrm{ad}_X^2(Y) = [X,[X,Y]]$. Furthermore, recall that conjugation by an exponential can be expressed through exponentiation of the commutator: $\mathrm{Ad}_{\e^X}(Y) = \e^{X}Y\e^{-X} = \e^{\mathrm{ad}_X}(Y)$.
Applying a Taylor expansion (with integral remainder) at order $p$  to the operator-valued function $f(t) := \e^{-\im t\,\mathrm{ad}_X}(Y)$, one obtains
\begin{equation}
  \e^{-\im t\,\mathrm{ad}_X}(Y)
  =
  \sum_{j=0}^{p} \frac{(-\im t)^j}{j!}\mathrm{ad}_X^j(Y)
  + (-\im)^{p+1}\int_0^t \frac{(t-\sigma)^p}{p!}\,\e^{-\im\sigma\,\mathrm{ad}_X}\bigl(\mathrm{ad}_X^{p+1}(Y)\bigr)\,\diff\sigma.
  \label{eq:Ad-expand}
\end{equation}
For the first three truncation orders that we will use, we get
\begin{align}
  p=0:\quad
  \e^{-\im t\,\mathrm{ad}_X}(Y) &= Y - \im\!\int_0^t \e^{-\im\sigma\,\mathrm{ad}_X}\bigl([X,Y]\bigr)\,\diff\sigma,
  \label{eq:Ad-expand-p0} \\
  p=1:\quad
  \e^{-\im t\,\mathrm{ad}_X}(Y) &= Y - \im t\,[X,Y] - \int_0^t (t-\sigma)\,\e^{-\im\sigma\,\mathrm{ad}_X}\bigl([X,[X,Y]]\bigr)\,\diff\sigma,
  \label{eq:Ad-expand-p1} \\
  p=2:\quad
  \e^{-\im t\,\mathrm{ad}_X}(Y) &= Y - \im t\,[X,Y] - \frac{t^2}{2}\,[X,[X,Y]]
  + \im\!\int_0^t \frac{(t-\sigma)^2}{2}\,\e^{-\im\sigma\,\mathrm{ad}_X}\bigl([X,[X,[X,Y]]]\bigr)\,\diff\sigma.
  \label{eq:Ad-expand-p2}
\end{align}

\subsection{First order: \texorpdfstring{$G_1$}{G1}}
\label{app:G1-kernel}

At first order, $\mathcal{S}_1(s)=\e^{-\im As}\e^{-\im Bs}$. The Trotter generator is given by
\begin{equation}
  H_S(s)
  = \im(\partial_s\mathcal{S}_1)\mathcal{S}_1^\dagger
  =  \e^{-\im s\, \mathrm{ad}_A}(A + B).
\end{equation}
Likewise, the conjugated Hamiltonian decomposes as
\begin{equation}
  \mathcal{S}_1(s)\,H\,\mathcal{S}_1^\dagger(s)
  = \e^{-\im s\,\mathrm{ad}_A}\!\bigl(\e^{-\im s\,\mathrm{ad}_B}(A) + B\bigr).
\end{equation}
Substituting the two previous equalities into \cref{eq:Gk-left-mul-def}, one gets
\begin{equation}
  G_1(s)
  = -\,\e^{-\im s\,\mathrm{ad}_A}\!\bigl(\e^{-\im s\,\mathrm{ad}_B}(A) - A\bigr).
\end{equation}
Applying \cref{eq:Ad-expand-p0} with $X=B$, $Y=A$ gives 
\begin{equation}
\e^{-\im s\,\mathrm{ad}_B}(A) - A = -\im\!\int_0^s \e^{-\im\sigma\,\mathrm{ad}_B}([B,A])\,\diff\sigma,    
\end{equation}
and we finally obtain
\begin{equation}
  G_1(s)
  = -\int_0^s \e^{-\im As}\,\e^{-\im B\sigma}\,\im[A,B]\,
      \e^{\im B\sigma}\,\e^{\im As}\,\diff\sigma,
  \label{eq:G1-kernel-app}
\end{equation}
which is~Eq.~(1) of the main text.

\subsection{Second order: \texorpdfstring{$G_2$}{G2}}
\label{app:k2-corollary}

We now derive $G_2(s)$ for the second-order Trotter formula $\mathcal{S}_2(s)=\e^{-\im A\frac{s}{2}}\,\e^{-\im Bs}\,\e^{-\im A\frac{s}{2}}$. %
Differentiating $\mathcal{S}_2(s)$ gives
\begin{equation}
    H_S(s)= \im(\partial_s\mathcal{S}_2)\mathcal{S}_2^\dagger=\frac{1}{2}A+\e^{-\im \frac{s}{2}\mathrm{ad}_A}(B)+\frac{1}{2}\e^{-\im \frac{s}{2}\,\mathrm{ad}_A}\!\bigl(\e^{-\im s\,\mathrm{ad}_B}(A)\bigr).
\end{equation}
In addition, the conjugated Hamiltonian for $H=A+B$ decomposes as
\begin{equation}
  \mathcal{S}_2(s)\,H\,\mathcal{S}_2^\dagger(s)
  = \e^{-\im \frac{s}{2}\,\mathrm{ad}_A}\e^{-\im s\,\mathrm{ad}_B}\!\bigl( A + \e^{-\im \frac{s}{2}\,\mathrm{ad}_A}(B)\bigr).
\end{equation}
In turn, \cref{eq:Gk-left-mul-def} yields
\begin{equation}
\begin{split}
  G_2(s)
  &= \frac{1}{2}\,\e^{-\im \frac{s}{2}\,\mathrm{ad}_A}\!\bigl(A-\e^{-\im s\,\mathrm{ad}_B}(A)\bigr) \\
  &\quad
   + \e^{-\im \frac{s}{2}\,\mathrm{ad}_A}\!\bigl(B-\e^{-\im s\,\mathrm{ad}_B}(\e^{-\im \frac{s}{2}\,\mathrm{ad}_A}(B))\bigr).
\end{split}
  \label{eq:G2-Ad-grouped}
\end{equation}
Applying \cref{eq:Ad-expand-p1} to the expression in each parenthesis, we obtain
\begin{align}
  A-\e^{-\im s\,\mathrm{ad}_B}(A)
  &= -\im s[A,B]
    - \int_0^s (s-\sigma)\,\e^{-\im\sigma\,\mathrm{ad}_B}\bigl([B,[A,B]]\bigr)\,\diff\sigma,
  \label{eq:expand-A-term} \\
  B-\e^{-\im s\,\mathrm{ad}_B}\!\bigl(\e^{-\im s/2\,\mathrm{ad}_A}(B)\bigr)
  &= \e^{-\im s\,\mathrm{ad}_B}\!\biggl(
       \im\frac{s}{2}[A,B]
     + \int_0^{s/2}\!\!\Bigl(\frac{s}{2}-\sigma\Bigr)
        \e^{-\im\sigma\,\mathrm{ad}_A}\bigl([A,[A,B]]\bigr)\,\diff\sigma\biggr).
  \label{eq:expand-B-term}
\end{align}
Substituting back into \cref{eq:G2-Ad-grouped} and applying \cref{eq:Ad-expand-p0} to $\e^{-\im s\,\mathrm{ad}_B}([A,B])$ we see that the two $(\im s/2)[A,B]$ contributions cancel. Recombining the surviving integrals, the half-step weights combine as $\tfrac{s}{2}-\tfrac12(s-\sigma)=\tfrac{\sigma}{2}$, which is the scalar weight on the $\Xi_B$ integral, while the $\hat\Xi_A$ integral keeps the weight $\tfrac{s}{2}-\sigma$.
We finally obtain
\begin{equation}
  G_2(s)
  = \int_0^{s/2}\!\!\Bigl(\frac{s}{2}-\sigma\Bigr)\,
      \hat\Xi_A(s,\sigma)\,\diff\sigma
    + \frac{1}{2}\int_0^{s} \sigma\,\Xi_B(s,\sigma)\,\diff\sigma,
  \label{eq:G2-exact}
\end{equation}
where the rotated nested commutators are defined as
\begin{align}
  \hat\Xi_A(s,\sigma)
    &:= \e^{-\im s/2\,\mathrm{ad}_A}\!\bigl(
          \e^{-\im s\,\mathrm{ad}_B}(
            \e^{-\im\sigma\,\mathrm{ad}_A}([A,[A,B]]))\bigr),
    \label{eq:Xi-A-hat} \\
  \Xi_B(s,\sigma)
    &:= \e^{-\im s/2\,\mathrm{ad}_A}\!\bigl(
          \e^{-\im\sigma\,\mathrm{ad}_B}([B,[A,B]])\bigr).
    \label{eq:Xi-B}
\end{align}
Both scalar weights in \cref{eq:G2-exact} are $\mathcal{O}(s)$ over an integration domain of width $\mathcal{O}(s)$, making the $\mathcal{O}(s^2)$ bound on $G_2$ explicit and consistent with \cref{thm:Gk-commutator}.

\subsection{General \texorpdfstring{$k$}{k}: commutator kernel form}
\label{app:higher-order}

This subsection proves Lemma~2 of the main text, and proceeds in three parts. First, we rewrite $G_k(s)$ as a sum of conjugated Hamiltonian terms. Second, we recall the commutator-expansion theorem of Childs et al.~\cite{Childs2021trotter}, which expands a conjugation by a product of matrix exponentials as a polynomial in $s$ of degree $k{-}1$ together with a remainder consisting of conjugated $k$-fold nested commutators with norm scaling as $\mathcal{O}(s^k)$. Third, we apply that expansion to each of the conjugations: their polynomial parts combine into a polynomial in $s$ of degree $k{-}1$ which must vanish since $G_k(s)=\mathcal{O}(s^k)$, leaving only terms of order $s^k$ and higher. These terms are then collected, yielding the kernel form (Lemma~2) stated in the main text.

Recall from~\cref{eq:Sk-product-recap} that the $k$-th order Trotter formula is expressed as $\mathcal{S}_k(s) = \prod_{j=1}^{\tilde\Upsilon_k} \e^{-\im s \widetilde H_j}$, with each $\widetilde H_j = \omega_j H_{\ell_j}$ being a weighted component from the Hamiltonian splitting $H=\sum_\ell H_\ell$. Define the forward partial products as
\begin{equation}
  L_k^j(s) := \prod_{j'=1}^{j-1} \e^{-\im s \widetilde H_{j'}},
  \label{eq:Lk-def}
\end{equation}
with $L_k^1(s):=\mathbb{I}$.
Recall from~\cref{app:Gk-definition} that the remainder Hamiltonian can be expressed as
$G_k = H_S - \mathcal{S}_kH\mathcal{S}_k^\dagger$. The first term
can be recast as
\begin{equation}
  H_S(s)
  = \im(\partial_s\mathcal{S}_k)\mathcal{S}_k^\dagger
  = \sum_{j=1}^{\tilde\Upsilon_k} \mathrm{Ad}_{L_k^j(s)}(\widetilde H_j).
\end{equation}
Since $\sum_j \widetilde H_j=H$ for a consistent product formula~\eqref{eq:Sk-product-recap}, we further have $\mathcal{S}_k H\mathcal{S}_k^\dagger = \sum_j\mathrm{Ad}_{\mathcal{S}_k(s)}(\widetilde H_j)$. Combining the two expressions gives
\begin{equation}
  G_k(s)
  = \sum_{j=1}^{\tilde\Upsilon_k}\bigl[\mathrm{Ad}_{L_k^j(s)}(\widetilde H_j)
  - \mathrm{Ad}_{\mathcal{S}_k(s)}(\widetilde H_j)\bigr].
  \label{eq:Gk-diff-form}
\end{equation}

To proceed further, we expand it as a polynomial in $s$ and rotated nested commutators.
The tool for this is Theorem~10 of Childs et al.~\cite{Childs2021trotter}, restated below in our conventions.

\begin{lemma}[Commutator expansion of conjugated exponentials; \cite{Childs2021trotter}, Thm.~10, in our convention]
\label{lem:conj-expansion}
For Hermitian operators $X_1,\ldots,X_n$, an arbitrary operator $Y$, and a positive integer $p$,
\begin{equation}
  \e^{-\im s\, \mathrm{ad}_{X_1}}\cdots \e^{-\im s\, \mathrm{ad}_{X_n}}(Y)
  = \sum_{r=0}^{p-1} D_r\,s^r + \mathcal{N}_p(s),
  \label{eq:conj-expansion}
\end{equation}
where the operators $D_0,\ldots,D_{p-1}$ are independent of $s$ and the remainder
\begin{equation}
\begin{split}
  \mathcal{N}_p(s)
  &= (-\im)^p \sum_{i=1}^{n}\;\sum_{\substack{q_i+q_{i+1}+\cdots+q_n = p \\ q_i \ge 1}}
  \int_0^{s}\!\diff\sigma\;
  \e^{-\im s\,\mathrm{ad}_{X_1}}\cdots \e^{-\im s\,\mathrm{ad}_{X_{i-1}}}\,\e^{-\im\sigma\,\mathrm{ad}_{X_i}}
  \!\left(
  \mathrm{ad}_{X_i}^{q_i}\,\mathrm{ad}_{X_{i+1}}^{q_{i+1}}\cdots\mathrm{ad}_{X_n}^{q_n}(Y)
  \right) \\
  &\qquad\qquad\times
  \frac{(s-\sigma)^{q_i-1}\,s^{q_{i+1}+\cdots+q_n}}
       {(q_i{-}1)!\,q_{i+1}!\cdots q_n!}.
\end{split}
  \label{eq:Np-explicit}
\end{equation}
satisfies $\|\mathcal{N}_p(s)\|=\mathcal{O}(s^p)$ as $s\to 0$.
\end{lemma}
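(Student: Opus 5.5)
We will prove the expansion by induction on the number $n$ of conjugating exponentials, using the single-exponential Taylor formula \cref{eq:Ad-expand} as the base tool.

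\emph{Base case $n=1$.} For $n=1$, apply \cref{eq:Ad-expand} with truncation order $p-1$, $X=X_1$ and $Y$. This gives the polynomial $\sum_{r=0}^{p-1}\frac{(-\im)^r}{r!}\mathrm{ad}_{X_1}^r(Y)\,s^r$. The remainder is $(-\im)^p\int_0^s\frac{(s-\sigma)^{p-1}}{(p-1)!}\e^{-\im\sigma\,\mathrm{ad}_{X_1}}(\mathrm{ad}_{X_1}^p Y)\,\diff\sigma$. This is exactly \cref{eq:Np-explicit} with $i=n=1$ and $q_1=p$.

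\emph{Inductive step.} For general $n$, we peel off the innermost factors from the outside in. Write the left-hand side as $\e^{-\im s\,\mathrm{ad}_{X_1}}(Z(s))$ with $Z(s):=\e^{-\im s\,\mathrm{ad}_{X_2}}\cdots\e^{-\im s\,\mathrm{ad}_{X_n}}(Y)$. A cleaner bookkeeping, however, is to fully expand in a telescoping manner. Expand the innermost factor $\e^{-\im s\,\mathrm{ad}_{X_n}}(Y)$ as $\sum_{q_n=0}^{p-1}\frac{(-\im s)^{q_n}}{q_n!}\mathrm{ad}_{X_n}^{q_n}Y$ plus its integral remainder. Then feed each polynomial term $\mathrm{ad}_{X_n}^{q_n}Y$ into the next factor $X_{n-1}$, expanding it with truncation order $p-1-q_n$. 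Continue outward. At each stage $i$, the terms in which factor $X_i$ is expanded to its integral remainder have total order exactly $p$: the exponents $q_{i+1},\dots,q_n$ come from polynomial parts, and $q_i=p-(q_{i+1}+\cdots+q_n)\ge1$ comes from the remainder of order $q_i-1$. These terms are left conjugated by the untouched outer factors $\e^{-\im s\,\mathrm{ad}_{X_1}}\cdots\e^{-\im s\,\mathrm{ad}_{X_{i-1}}}$. Collecting them over $i$ and over the compositions of $p$ reproduces precisely the sum in \cref{eq:Np-explicit}, including the weights $(s-\sigma)^{q_i-1}s^{q_{i+1}+\cdots+q_n}/((q_i-1)!\,q_{i+1}!\cdots q_n!)$. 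All remaining terms are products of purely polynomial parts with total degree at most $p-1$, nested with $s$-independent commutators. They define the $D_r$.

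\emph{Main obstacle.} The key technical point is this bookkeeping. The truncation order at each factor must be chosen adaptively, as $p-1$ minus the degree already accumulated, so that no polynomial term of degree $\ge p$ survives outside the remainder. Otherwise the $D_r$ would not be $s$-independent of bounded degree.

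\emph{Remainder bound.} Since $X_j$ are Hermitian, each $\e^{-\im\sigma\,\mathrm{ad}_{X}}$ is conjugation by a unitary and hence an isometry in operator norm. The integrand is therefore bounded by $\|\mathrm{ad}^{q_i}_{X_i}\cdots\mathrm{ad}^{q_n}_{X_n}Y\|$ times the scalar weight. Integrating, $\int_0^s(s-\sigma)^{q_i-1}\diff\sigma\,s^{p-q_i}=s^p/q_i$. Summing over the finitely many terms then gives $\|\mathcal N_p(s)\|=\mathcal O(s^p)$.
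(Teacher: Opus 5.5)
Your proof is correct: iterating the single-exponential Taylor formula \eqref{eq:Ad-expand} outward from $X_n$ to $X_1$, with the truncation order at factor $X_i$ set to $p-1-(q_{i+1}+\cdots+q_n)$, reproduces exactly the terms, phases and weights of \eqref{eq:Np-explicit}, and unitary invariance of the operator norm then gives the $\mathcal{O}(s^p)$ bound. The paper does not prove this lemma itself; it imports it from Theorem~10 of Childs et al., which rests on the same repeated Taylor expansion with integral remainder, so your argument is essentially the intended one. The step you call the inductive step is really a finite outward iteration rather than an induction on $n$, which is perfectly fine as you carry it out.
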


The first term on the right of \cref{eq:conj-expansion} is a polynomial in $s$ of degree $p-1$, while the remainder $\mathcal{N}_p(s)$ is $\mathcal{O}(s^p)$. Each summand of $\mathcal{N}_p$ carries a $p$-fold nested commutator $\mathrm{ad}_{X_i}^{q_i}\,\mathrm{ad}_{X_{i+1}}^{q_{i+1}}\cdots\mathrm{ad}_{X_n}^{q_n}(Y)$, conjugated by $i$ unitaries.
The relation \cref{eq:conj-expansion} decomposes a conjugated operator into terms scaling as $\mathcal{O}(s^r)$ for $r<p$, and remaining $\mathcal{O}(s^p)$.
As we will see, these polynomial parts drop out of $G_k$, leaving only the remainders,  sums of conjugated $k$-fold nested commutators.
To proceed further, we tweak the product of \cref{eq:Lk-def} to define
\begin{equation}
  L_k^j(s,\sigma)
  := \Bigl(\prod_{j'=1}^{j-1}\e^{-\im s \widetilde H_{j'}}\Bigr)\,\e^{-\im\sigma \widetilde H_j},
  \label{eq:Lk-two-arg}
\end{equation}
for $1\le j\le \tilde\Upsilon_k$, so that $L_k^j(s,0) = L_k^j(s)$ and $L_k^j(s,s) = L_k^{j+1}(s)$. The main-text Lemma~2 writes this partial product as $L_\nu$.

We now apply \cref{lem:conj-expansion} for $p=k$ to the conjugations in~\cref{eq:Gk-diff-form}.
On one hand, the terms in $\mathrm{Ad}_{L_k^j(s)}(\widetilde H_j)$ (we call them the $L$-side) are conjugations by $j-1$ exponentials generated by operators $\widetilde H_1,\dots,\widetilde H_{j-1}$, so \cref{eq:conj-expansion,eq:Np-explicit} applies with $n=j-1$ and $Y=\widetilde H_j$.
On the other hand, the terms in $\mathrm{Ad}_{\mathcal{S}_k(s)}(\widetilde H_j)$ (we call them the $S$-side) are conjugations by $\tilde\Upsilon_k$ exponentials generated by operators $\widetilde H_1,\dots,\widetilde H_{\tilde\Upsilon_k}$, so \cref{eq:conj-expansion,eq:Np-explicit} applies with $n=\tilde\Upsilon_k$ and $Y=\widetilde H_j$.
Thus these terms are expressed as
\begin{equation}
  \mathrm{Ad}_{L_k^j(s)}(\widetilde H_j) = \mathcal{P}_j^{L}(s) + R_j^{L}(s),
  \qquad
  \mathrm{Ad}_{\mathcal{S}_k(s)}(\widetilde H_j) = \mathcal{P}_j^{S}(s) + R_j^{S}(s),
  \label{eq:two-side-poly-rem}
\end{equation}
where $\mathcal{P}_j^{L}(s),\mathcal{P}_j^{S}(s)$ are operator-valued polynomials of degree at most $k-1$ in $s$, and the integral remainders take the form
\begin{equation}
\begin{split}
  R_j^{L}(s)
  &= (-\im)^k \,\!\!\sum_{i=1}^{j-1}\sum_{\substack{q_i+q_{i+1}+\cdots+q_{j-1}=k\\ q_i\ge1}}
  \int_0^s \!\diff\sigma\,
  \frac{(s-\sigma)^{q_i-1}\,s^{q_{i+1}+\cdots+q_{j-1}}}
       {(q_i{-}1)!\,q_{i+1}!\cdots q_{j-1}!}\,
  \mathrm{Ad}_{L_k^{i}(s,\sigma)}\!\bigl(\mathrm{ad}_{\widetilde H_i}^{q_i}\,\mathrm{ad}_{\widetilde H_{i+1}}^{q_{i+1}}\cdots\mathrm{ad}_{\widetilde H_{j-1}}^{q_{j-1}}(\widetilde H_j)\bigr),
  \\
  R_j^{S}(s)
  &= (-\im)^k \,\!\!\sum_{i=1}^{\tilde\Upsilon_k}\sum_{\substack{q_i+q_{i+1}+\cdots+q_{\tilde\Upsilon_k}=k\\ q_i\ge1}}
  \int_0^s \!\diff\sigma\,
  \frac{(s-\sigma)^{q_i-1}\,s^{q_{i+1}+\cdots+q_{\tilde\Upsilon_k}}}
       {(q_i{-}1)!\,q_{i+1}!\cdots q_{\tilde\Upsilon_k}!}\,
  \mathrm{Ad}_{L_k^{i}(s,\sigma)}\!\bigl(\mathrm{ad}_{\widetilde H_i}^{q_i}\,\mathrm{ad}_{\widetilde H_{i+1}}^{q_{i+1}}\cdots\mathrm{ad}_{\widetilde H_{\tilde\Upsilon_k}}^{q_{\tilde\Upsilon_k}}(\widetilde H_j)\bigr).
\end{split}
  \label{eq:R-side-explicit}
\end{equation}
Each integrand is the unitary conjugation of a $k$-fold nested commutator of the components $H_\ell$, hence has spectral norm bounded by that commutator's norm, and the scalar weight integrates to a prefactor of order $s^{q_i+q_{i+1}+\cdots}=s^k$, so $\|R_j^{L}(s)\|,\,\|R_j^{S}(s)\|=\mathcal{O}(s^k)$ uniformly in $j$.

\begin{theorem}[Exact nested-commutator expression for $G_k$]
\label{thm:Gk-commutator}
Let $\mathcal{S}_k$ be any $k$-th order Trotter formula.
Then the remainder Hamiltonian is given by
\begin{equation}
  G_k(s) = \sum_{j=1}^{\tilde\Upsilon_k}\bigl(R_j^{L}(s)-R_j^{S}(s)\bigr),
  \label{eq:Gk-Nkl}
\end{equation}
with $R_j^{L}$ and $R_j^{S}$ defined in~\cref{eq:R-side-explicit}.
\end{theorem}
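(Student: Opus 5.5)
The proof starts from the difference form \cref{eq:Gk-diff-form}, $G_k(s)=\sum_{j}\bigl[\mathrm{Ad}_{L_k^j(s)}(\widetilde H_j)-\mathrm{Ad}_{\mathcal{S}_k(s)}(\widetilde H_j)\bigr]$. To each conjugation I apply \cref{lem:conj-expansion} with $p=k$. For the $L$-side I take $X_i=\widetilde H_i$ for $i\le j-1$ and $Y=\widetilde H_j$. For the $S$-side I take $X_i=\widetilde H_i$ for $i\le\tilde\Upsilon_k$. This produces the decomposition \cref{eq:two-side-poly-rem}.

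First I check that the remainders $\mathcal{N}_k$ of \cref{eq:Np-explicit} are exactly $R_j^L$ and $R_j^S$ of \cref{eq:R-side-explicit}. The point is that the conjugation $\e^{-\im s\,\mathrm{ad}_{\widetilde H_1}}\cdots\e^{-\im s\,\mathrm{ad}_{\widetilde H_{i-1}}}\e^{-\im\sigma\,\mathrm{ad}_{\widetilde H_i}}$ equals $\mathrm{Ad}_{L_k^i(s,\sigma)}$ by \cref{eq:Lk-two-arg}, since $\e^{-\im s\,\mathrm{ad}_X}=\mathrm{Ad}_{\e^{-\im sX}}$ and composition of $\mathrm{Ad}$ maps corresponds to the product in the same order. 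For $j=1$ the $L$-side has $n=0$ exponentials, so it is simply $\widetilde H_1$, with $\mathcal{P}_1^L=\widetilde H_1$ and $R_1^L=0$. This is consistent with the empty sum.

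Summing over $j$ then gives
\begin{equation*}
G_k(s)=\mathcal{P}(s)+\sum_j\bigl(R_j^L(s)-R_j^S(s)\bigr),\qquad \mathcal{P}(s):=\sum_j\bigl(\mathcal{P}_j^L(s)-\mathcal{P}_j^S(s)\bigr).
\end{equation*}
Here $\mathcal{P}(s)$ is an operator-valued polynomial of degree at most $k-1$ with $s$-independent coefficients.

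The key step is to show that $\mathcal{P}\equiv0$. Rearranging, $\mathcal{P}(s)=G_k(s)-\sum_j(R_j^L-R_j^S)$. By \cref{lem:Gk-order}, $G_k(s)=\mathcal{O}(s^k)$. The remainders are also $\mathcal{O}(s^k)$ uniformly, since each is a unitary conjugation of a fixed $k$-fold nested commutator times a scalar weight whose integral is $\mathcal{O}(s^k)$. Hence $\|\mathcal{P}(s)\|=\mathcal{O}(s^k)$ as $s\to0$. A polynomial $\sum_{r<k}D_rs^r$ with this property must vanish identically. I would argue this inductively: taking $s\to0$ forces $D_0=0$; dividing by $s$ and repeating forces $D_1=0$; and so on up to $D_{k-1}$. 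Alternatively, one can note that its first $k-1$ derivatives at $0$ vanish.

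The main subtlety is justifying that $G_k(s)=\mathcal{O}(s^k)$ holds as a genuine analytic statement, not merely formally. This follows because everything is an entire function of $s$ in finite dimensions. The order condition \cref{eq:Sk-order-property} means the Taylor coefficients of $\mathcal{S}_k(s)\e^{\im Hs}-\mathbb{I}$ vanish through order $k$, and differentiation preserves this property, as used in \cref{lem:Gk-order}. With $\mathcal{P}=0$, \cref{eq:Gk-Nkl} follows. I expect the bookkeeping identification of $\mathcal{N}_k$ with $R^{L}_j$ and $R^{S}_j$ to be routine, and the vanishing-polynomial argument to be the conceptual core.
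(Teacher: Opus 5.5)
Your proposal is correct and follows the same route as the paper. You expand both conjugation sides of the difference form with the commutator-expansion lemma at $p=k$, combine the polynomial parts into an operator polynomial of degree at most $k-1$ that must be $\mathcal{O}(s^k)$ because both $G_k$ and the integral remainders are, and conclude that it vanishes identically. Your extra checks make explicit what the paper leaves implicit: matching the remainders to $R_j^{L}$ and $R_j^{S}$ via $\mathrm{Ad}_{L_k^i(s,\sigma)}$, the trivial $j=1$ case, and the analyticity argument that justifies differentiating the order condition.
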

\begin{proof}
Substituting the polynomial-plus-remainder expansions~\cref{eq:two-side-poly-rem} into \cref{eq:Gk-diff-form} gives
\begin{equation}
  G_k(s)
  = \mathcal{P}_{<k}(s)
  + \sum_{j=1}^{\tilde\Upsilon_k}\bigl[R_j^{L}(s) - R_j^{S}(s)\bigr],
  \label{eq:Gk-poly-plus-remainder}
\end{equation}
where $\mathcal{P}_{<k}(s)$ is a polynomial in $s$ of degree at most $k{-}1$.
The remainder sum is $\mathcal{O}(s^k)$ by the explicit form \cref{eq:R-side-explicit}: each conjugation has spectral norm one, and $\int_0^s K_\nu(s,\sigma)\,\diff\sigma$ contributes a prefactor of order $s^k$.
By \cref{lem:Gk-order} the left-hand side is $\mathcal{O}(s^k)$, so $\mathcal{P}_{<k}(s)$ is itself $\mathcal{O}(s^k)$.
A polynomial of degree at most $k{-}1$ that is $\mathcal{O}(s^k)$ as $s\to 0$ is identically zero, hence $\mathcal{P}_{<k}\equiv0$ and only the remainder sum survives, yielding \cref{eq:Gk-Nkl}.
\end{proof}

Inserting~\cref{eq:R-side-explicit} into~\cref{eq:Gk-Nkl} and collecting every summand under a single multi-index $\nu$ gives the direct kernel form
\begin{align}
  G_k(s) = \sum_\nu\int_0^s
    K_\nu(s,\sigma)\,
    \mathrm{Ad}_{L_k^{j_\nu}(s,\sigma)}(C_\nu)
    \,\diff\sigma,
\label{eq:Gk-remainder-explicit}
\end{align}
where $\nu$ enumerates all nested commutators produced by the two conjugation expansions in \cref{eq:R-side-explicit}: it is the multi-label
\begin{equation}
  \nu = (T,\,j,\,i,\,\vec q),
  \qquad
  T\in\{L,S\},
  \quad
  1\le j \le \tilde\Upsilon_k,
  \quad
  1\le i \le n_T,
  \quad
  \vec q=(q_i,q_{i+1},\dots,q_{n_T}),
  \label{eq:nu-label}
\end{equation}
with components non-negative integers, $q_i\ge1$, and $q_i+q_{i+1}+\cdots+q_{n_T}=k$. Here $T$ selects the $L$-side ($n_L=j-1$) or $S$-side ($n_S=\tilde\Upsilon_k$), $j$ is the parent rotation index from \cref{eq:Gk-diff-form} (the factor $\widetilde H_j$ being conjugated), and $i$ is the inner sum index of \cref{eq:R-side-explicit} (the layer whose angle is the integration variable). Adopting the notation of Lemma~2 in the main text, we write
\begin{equation}
  j_\nu := i,
  \label{eq:jnu-projection}
\end{equation}
such that the $L_k^{i}(s,\sigma)$ terms in \cref{eq:R-side-explicit} become $L_k^{j_\nu}(s,\sigma)$ in \cref{eq:Gk-remainder-explicit}.
In addition, we defined
\begin{equation}
  K_\nu(s,\sigma)
  =
  \frac{(s-\sigma)^{q_{j_\nu}-1}\,s^{q_{j_\nu+1}+\cdots+q_{n_T}}}
       {(q_{j_\nu}{-}1)!\,q_{j_\nu+1}!\cdots q_{n_T}!},
  \label{eq:Knu-explicit}
\end{equation}
so $K_\nu(s,\sigma)\ge0$ for $0\le\sigma\le s$ and is a homogeneous polynomial of degree $k{-}1$ in $(s,\sigma)$.
Finally, the operator $C_\nu$ collects the $(-\im)^k$ phase of \cref{eq:R-side-explicit}, the sign $\varepsilon_L=+1$, $\varepsilon_S=-1$ from the $R_j^L-R_j^S$ split, and the Suzuki coefficients pulled out of $\mathrm{ad}_{\widetilde H_p}^{q_p}=\omega_p^{q_p}\,\mathrm{ad}_{H_{\ell_p}}^{q_p}$ and out of the innermost argument $\widetilde H_j=\omega_j H_{\ell_j}$:
\begin{equation}
  C_\nu
  = \varepsilon_T\,(-\im)^k\,\omega_j\,
    \Bigl(\prod_{p=j_\nu}^{n_T} \omega_p^{q_p}\Bigr)\,
    \mathrm{ad}_{H_{\ell_{j_\nu}}}^{q_{j_\nu}}\,
    \mathrm{ad}_{H_{\ell_{j_\nu+1}}}^{q_{j_\nu+1}}\cdots
    \mathrm{ad}_{H_{\ell_{n_T}}}^{q_{n_T}}(H_{\ell_j}).
  \label{eq:Cnu-explicit}
\end{equation}
This is the kernel form referred to in Lemma~2.
The kernel $K_\nu$ is non-negative and a homogeneous polynomial of degree $k-1$ in $(s,\sigma)$, hence
\begin{equation}
    \int_0^s K_\nu(s,\sigma)\,\diff\sigma = s^k w_\nu,
    \label{eq:wnu-int}
\end{equation}
where
\begin{equation}
  w_\nu := \int_0^1 K_\nu(1,\sigma)\,\diff\sigma
  = \frac{1}{q_{j_\nu}!\,q_{j_\nu+1}!\cdots q_{n_T}!}\,\ge 0.
  \label{eq:wnu-def}
\end{equation}
Together with \cref{thm:Gk-commutator}, this establishes Lemma~2, including the bound $\|G_k(s)\|=\mathcal{O}(s^k)$.

\section{Quasi-Probabilistic Decomposition for Hamiltonian Simulation}
\label{app:tepai-details}

In this section, we provide additional details regarding TE-PAI~\cite{kiumi2025te} used to implement the correction channels, whose generators were derived in the previous appendix.
First, in~\Cref{app:tepai-task} we review the task that TE-PAI accomplishes, namely unbiased Hamiltonian simulation (as a channel) of a time-dependent Hamiltonian.
Along the way, we detail concepts of quasi-probabilistic decomposition, sampling overhead, and Hermitian involution decomposition. Next we review the working details of TE-PAI in \Cref{app:tepai-algo}, for which a detailed implementation is provided in \cref{app:tepai-impl}. Finally in~\Cref{app:tepai-resources} we report the resource counts of TE-PAI supporting~Lemma~1 of the main text.

\subsection{Unbiased channel simulation and Hermitian-involution decompositions}
\label{app:tepai-task}
Consider a time-dependent Hamiltonian $H(s)$ and the corresponding Hamiltonian evolution (as a unitary) for a time $t$, 
\begin{equation}
    U(t) = \mathcal{T}\exp\!\left(-\im\int_0^{t}H(s)\,\diff s\right).
\end{equation}
Further define the evolution channel $\mathcal{U}(t)[\cdot] = U(t)(\cdot)U(t)^{\dag}$.
For now, let us drop the time dependence for ease of notation.
We say that a set $\{\gamma_l, \mathcal{V}_l\}$, of coefficients $\gamma_l\in \mathbb{R}$ and quantum (completely positive and trace preserving) channels $\mathcal{V}_l$, is a \emph{quasi-probabilistic decomposition} (QPD) of $\mathcal{U}$ whenever it satisfies
\begin{equation}
  \mathcal{U} = \sum_l \gamma_l \,\mathcal{V}_l.
  \label{eq:qpd-abstract}
\end{equation}

For tasks of estimating expectation values of an evolved state, QPDs can be used. To see this, first recast the decomposition in~\cref{eq:qpd-abstract}, as
\begin{equation}
  \mathcal{U} = \gamma \sum_l p_l\,\mathrm{sgn}(\gamma_l)\,\mathcal{V}_l = \mathbb{E}_l[\xi_l\,\mathcal{V}_l].
  \label{eq:qpd-estimator}
\end{equation}
where we defined the scaling factor $\gamma:=\sum_l|\gamma_l|\ge 1$, the positive coefficients $p_l:=|\gamma_l|/\gamma$ (such that the set $\{p_l\}$ is a valid probability distribution) and the signed weights $\xi_l := \gamma\,\mathrm{sgn}(\gamma_l)$.~\Cref{eq:qpd-estimator} shows that $\xi_l \mathcal{V}_l$ is a \emph{unbiased estimator} of the channel $\mathcal{U}$.
Then, due to linearity of the trace, we have that
\begin{equation}
    \langle O \rangle:=\mathrm{Tr}[O\,\mathcal{U}(\rho_0)] = \mathbb{E}_l[\xi_l \,\mathrm{Tr}[O\,\mathcal{V}_l(\rho_0)]].
    \label{eq:qpd-estimator-expect}
\end{equation}
Hence, by evolving the initial state $\rho_0$ through a channel $\mathcal{V}_l$ sampled according to the distribution $\{p_l\}$, performing a measurement in the eigenbasis of $O$ and rescaling its outcome by $\xi_l$ (the \emph{post-processing} mentioned in the main text), one obtains an unbiased estimate $\hat{O}$ of $\langle O \rangle$. 
(In practice this estimate is averaged over $M$ repetitions.) 
The terminology \emph{unbiased Hamiltonian simulation}, used in the main text, refers to this property~\eqref{eq:qpd-estimator-expect}, or equivalently to~\cref{eq:qpd-estimator}.

The rescaling necessary to unbias the estimate introduces some overhead. In particular, given that $|\mathrm{Tr}[O\,\mathcal{V}_l(\rho_0)]|\le\|O\|_\infty$ for any CPTP channel $\mathcal{V}_l$, one obtains a bound on the single-shot variance $\Delta_{\rm 1s}^2$ of the estimate $\widehat O$
\begin{equation}
  \Delta_{\rm 1s}^2 \le \mathbb{E}[\widehat O^2] \le \gamma^2\,\|O\|_\infty^2,
  \label{eq:variance-bound-qpd}
\end{equation}
to be compared to the standard upper bound given by $\|O\|_\infty^2$ obtained when directly evolving the state under $\mathcal{U}$.
Hence, guaranteeing a given variance requires increasing the number of measurements $M$ by a multiplicative factor $\gamma^2>1$, called \emph{sampling overhead}.
When reporting \emph{resources} required to implement a QPD of a channel, we thus report both (i) $\mathbb{E}[N_g]$: the \emph{expected number of gates} entailed (the number of gates to implement each channel $\mathcal{V}_l$, averaged under the distribution $\{p_l\}$) together with (ii) $\gamma^2$: the necessary sampling overhead. Finally, we note that when multiple QPDs, with respective sampling overhead $\gamma(k)^2$, are employed within a given circuit, the resulting construction remains unbiased and that the overall sampling overhead becomes $\gamma^2 = \prod_k (\gamma(k))^2$.
While we have only discussed the estimation of operator expectation value~\eqref{eq:qpd-estimator-expect}, we note that QPD has also been extended to sampling-based algorithms~\cite{liu2025qem, onishi2026weak}.

In what follows we will be concerned with unbiased Hamiltonian simulations for a time-dependent Hamiltonian $H(s)$ that admits a \emph{Hermitian-involution} (h.i.) decomposition
\begin{equation}
  H(s) = \sum_j c_j(s) R_j,
  \label{eq:hi-decomp}
\end{equation}
in which the coefficients $c_j(s)$ are real while the operators $R_j$ are Hermitian involutions.
Recall that the latter are defined as operators satisfying $R_j^\dagger = R_j$ together with $R_j^2 = \mathbb{I}$, and that we defined a $1$-norm accordingly: $\|H(s)\|_{1,{\rm hi}}:=\sum |c_j(s)|$.
A canonical example of basis of h.i. is the standard Pauli basis $\{R_j\}=\{P_j\}$, where each $R_j$ is a Pauli string. In such case we denote the norm as $\|H(s)\|_{1,{\rm Pauli}}$. In addition, for any unitary $V$, the rotated-Pauli family $\{V P_j V^\dagger\}$ is also an h.i. basis and yields an alternative decomposition of $H(s)$.
We stress that h.i. decompositions are not unique, and different choices of h.i. basis can induce different $1$-norm (and thus different implementation resources, as soon discussed).

\subsection{TE-PAI}
\label{app:tepai-algo}
TE-PAI is one instantiation of QPDs for Hamiltonian evolution under a time-dependent Hamiltonian with h.i. decomposition of the form~\Cref{eq:hi-decomp}. Following Ref.~\cite{koczor2024probabilistic},
the time evolution $\mathcal{U}(t)[\rho]$ is first approximated by a first-order Trotter formula (for time-dependent Hamiltonians)
\begin{equation}
  \mathcal{U}(t)[\rho] \approx \prod_{\mu=1}^{N_\tau}\prod_{j}
    \e^{-\im \theta_{j\mu}R_j}
      \,\rho\,
  \bigg(\prod_{\mu=1}^{N_\tau}\prod_{j}
    \e^{-\im \theta_{j\mu}R_j}\bigg)^\dag,
  \label{eq:tepai-trotter}
\end{equation}
with rotation angles $\theta_{j\mu}=c_j(t_\mu)\tau/N_\tau$ for $t_\mu=(\mu-1)\tau/N_\tau$.
Later, we will take the limit $N_\tau\to\infty$, restoring exact evolution.

To implement the time-evolution~\eqref{eq:tepai-trotter}, let us first define the individual channels %
\begin{equation}
    \mathcal{R}_j(\theta)[\rho]:=\e^{-\im\frac{\theta}{2} R_j}\rho\,\e^{\im\frac{\theta}{2} R_j},
\end{equation}
that is, the channel of the rotation gate $\e^{-\im\theta R_j/2}$, so that in particular $\mathcal{R}_j(\pi)[\rho]=R_j\rho R_j$ is the full Pauli conjugation.
TE-PAI proceeds by
 applying the probabilistic angle interpolation (PAI)~\cite{koczor2024probabilistic} to each of these unitary channels, such that
\begin{align}
\begin{split}
  \mathcal{R}_j(\theta)
  &= \gamma_1\mathcal{I}
  + \gamma_2\mathcal{R}_j(\mathrm{sgn}(\theta)\Delta)
  + \gamma_3\mathcal{R}_j(\pi)
  \\
  &= \gamma\left(
      p_1\,\mathrm{sgn}(\gamma_1)\mathcal{I}
      + p_2\,\mathrm{sgn}(\gamma_2)\mathcal{R}_j(\mathrm{sgn}(\theta)\Delta)
      + p_3\,\mathrm{sgn}(\gamma_3)\mathcal{R}_j(\pi)
  \right),
\end{split}
  \label{eq:PAI-decomp}
\end{align}
with $\mathcal{I}$ denoting the identity channel and where we introduced the coefficients
\begin{equation}
\begin{cases}
\gamma_1=(1+\cos\theta)/2-(|\sin\theta|/2)\cot(\Delta/2),\\
\gamma_2=|\sin\theta|/\sin\Delta,\\
\gamma_3=(1-\cos\theta)/2-(|\sin\theta|/2)\tan(\Delta/2),
\end{cases}
\end{equation}
that depend on a free parameter $\Delta\in(0,\pi)$.
Note that the coefficients satisfy $\gamma_1+\gamma_2+\gamma_3=1$ but can be negative.
\Cref{eq:PAI-decomp} is a QPD~\eqref{eq:qpd-abstract} specialised to a single rotation $\mathcal{R}_j$, with probabilities 
\begin{align}
\label{eq:tepai_probability}
p_l:=|\gamma_l|/\gamma \quad\text{ and }\quad\gamma:=\sum_{l'}|\gamma_{l'}|>1.
\end{align}
It involves three channels, the identity $\mathcal{I}$ and two Pauli rotations $\mathcal{R}_j$, with distinct angles, generated by $R_j$. The sampled channel (either the identity or one of the two rotations), weighted by the sign $\mathrm{sgn}(\gamma_l)$, recovers $\mathcal{R}_j(\theta)$ in expectation up to the normalisation factor $\gamma$, which contributes to a sampling overhead per rotation of $\gamma^2$. 

The QPD of~\cref{eq:PAI-decomp} is applied to each of the Pauli rotations occurring in~\cref{eq:tepai-trotter}.
Substituting
$\theta=2\theta_{j\mu}=2c_j(t_\mu)\,\tau/N_\tau$
into \cref{eq:PAI-decomp} and summing over $j$, we find that the probability of selecting a non-identity channel at the $\mu$-th time slice is
\begin{equation}
  \frac{\tau\sum_j|c_j(t_\mu)|}{N_\tau}\,\frac{3-\cos\Delta}{\sin\Delta}
  + \mathcal{O}\!\left(\frac{\tau^2}{N_\tau^2}\right),
  \label{eq:tepai-gate-each-pauli}
\end{equation}
with scaling factor,
\begin{equation}
  \gamma(t_\mu)
  :=
  1
  +
  \frac{2\tau\sum_j|c_j(t_\mu)|}{N_\tau}\tan\frac{\Delta}{2}
  +
  \mathcal{O}\!\left(\frac{\tau^2}{N_\tau^2}\right).
  \label{eq:tepai-normalization-each-pauli}
\end{equation}

The approximation bias in \cref{eq:tepai-trotter} can be eliminated by taking the $N_\tau\to\infty$ limit, where a non-identity unitary channel is inserted according to a Poisson process with an inhomogeneous (time-dependent) rate
\begin{align}
\label{eq:poisson_rate_tepai}
    \frac{3-\cos\Delta}{\sin\Delta}\sum_j|c_j(t)|.
\end{align}
When performing time evolution for a time $\tau$ this gives the expected number of non-identity channels per circuit:
\begin{equation}
  \mathbb{E}[N_{\mathrm{g}}]
  = \frac{3-\cos\Delta}{\sin\Delta}
    \lambda \quad \textrm{with} \quad \lambda := \sum_j\int_0^\tau |c_j(s)|\,\diff s,
  \label{eq:tepai-Ngates-derived}
\end{equation}
where we integrated the (gate insertion) rate~\eqref{eq:poisson_rate_tepai} over $t\in[0,\tau]$ and summed over the different Hamiltonian terms.

\subsection{Implementation}
\label{app:tepai-impl}

We first draw an integer $N_{\mathrm{g}}$ from a Poisson
distribution with the expected number of events given by~\cref{eq:tepai-Ngates-derived}. The integer $N_{\mathrm{g}}$ indicates the number of non-identity channels (i.e.\ the number of rotations, or gate count) in a circuit. For each insertion $m\in \{1,2,\dots,N_{\mathrm{g}}\}$:
\begin{enumerate}
\item Draw a time $t_m\in[0,\tau]$ from the density $p(t) = \sum_j |c_j(t)|/\lambda$~\eqref{eq:tepai-Ngates-derived}.
\item Draw a label $j$ of a Hamiltonian term with probability (conditioned on $t_m$) $p(j)=|c_j(t_m)|/\sum_{j'}|c_{j'}(t_m)|$.
\item Pick $\mathcal{R}_j(\mathrm{sgn}(c_j(t_m))\Delta)$ with probability $p_2/(p_2+p_3)$~\eqref{eq:tepai_probability} and $\mathcal{R}_j(\pi)$ otherwise, and retain the sign $\varepsilon_m=\mathrm{sgn}(\gamma_l)$ of the corresponding coefficient in~\eqref{eq:PAI-decomp}.
\end{enumerate}
Apply the sampled channels in increasing order of
sampled times $t_m$ to obtain one circuit instance, of overall sign $\varepsilon=\prod_{m=1}^{N_{\mathrm{g}}}\varepsilon_m$.
Following the quasi-probabilistic estimator~\eqref{eq:qpd-estimator-expect}, the value of the observable $O$ measured on the output state is multiplied by the sign $\varepsilon$ and rescaled by the per-circuit sampling-overhead factor $\gamma_\infty(\tau)$~\eqref{eq:gamma-Q1} to provide the unbiased estimate of $\langle O\rangle$.

\subsection{Resources: per-step gate count and \texorpdfstring{$r$}{r}-step composition}
\label{app:tepai-resources}

We now collect the resources used by TE-PAI to implement a single step of unitary evolution of length $\tau$, then describe how they compose when $r$ such primitives are employed in a given circuit.
Already, the expected number of non-identity channels (i.e., the gate count) per circuit was provided in~\cref{eq:tepai-Ngates-derived}. Because this count is Poisson distributed, its standard deviation is the square root of its mean, $\sqrt{\mathbb{E}[N_{\mathrm{g}}]}$.
the sampling overhead is obtained as
\begin{equation}
  \gamma_\infty(\tau)^2
  := \lim_{N_\tau\to\infty}\prod_{\mu=1}^{N_\tau}\gamma(t_\mu)^2
  = \exp\!\bigg[4\tan\!\Big(\frac{\Delta}{2}\Big)
       \sum_j\int_0^\tau |c_j(s)|\,\diff s\bigg].
  \label{eq:tepai-gamma-derived}
\end{equation}

As per our previous discussion, when $r$ TE-PAI primitives are employed within one circuit (e.g., one per Trotter step), the composed channel is unbiased, and the overall sampling overhead is the product of the per-step overheads:
\begin{equation}
\begin{aligned}
  \gamma^2_{\mathrm{tot}}
  &= \prod_{\mathrm{seg}=1}^{r}\gamma_\infty(\tau)^2_{\mathrm{seg}}
  =: \e^{V}, \\
  V &:= \sum_{\mathrm{seg}=1}^{r} V_{\mathrm{seg}},
\end{aligned}
  \label{eq:gamma-composed}
\end{equation}
where $V_{\mathrm{seg}}:=\log\gamma_\infty(\tau)^2_{\mathrm{seg}}$ is the per-step log-overhead. A uniform per-step budget $V_{\mathrm{seg}}=V/r$ achieves the global budget $\e^V$, which is the prescription ``replace $V\to V/r$'' of Lemma~1 in the main text.
Because each per-step estimator carries a signed weight of constant magnitude $\gamma_\infty(\tau)$, the composed weight over the $r$ steps has constant magnitude $\prod_{\mathrm{seg}}\gamma_\infty(\tau)_{\mathrm{seg}}=\e^{V/2}$, which propagates the single-step variance bound $\Delta_{\rm 1s}^2\le\e^{V}\lVert O\rVert_\infty^2$ to the full $r$-step circuit.
We summarise the discussion above in the following lemma, that builds upon the previous definitions.

\begin{lemma}[TE-PAI: channel-level unbiasedness and gate count]
\label{lem:tepai-app}
Let $H(s)$ be a time-dependent Hamiltonian and let $\mathcal{U}(\tau)$ denote the channel generated by
$H$ for an evolution time $\tau$.
Given an h.i. decomposition $H(s)=\sum_j c_j(s)\,R_j$ 
and $\Delta\in(0,\pi)$ define the integrated rate
\begin{equation}\label{def:rate_app}
\lambda := \sum_j\int_0^\tau |c_j(s)|\,\diff s.
\end{equation}
One can sample a physical unitary channel $\mathcal{V}_l$ and a sign
$\varepsilon_l\in\{\pm 1\}$ such that, with the signed weight
$\xi_l := \gamma_\infty(\tau)\,\varepsilon_l$, the pair
$(\mathcal{V}_l,\xi_l)$ implements unbiased Hamiltonian simulation (as a channel)
in the sense of \cref{eq:qpd-estimator}:
\begin{equation}
  \mathbb{E}_l\bigl[\xi_l\,\mathcal{V}_l\bigr] = \mathcal{U}(\tau).
\end{equation}
The construction requires an expected gate count
\begin{equation}
  \mathbb{E}[N_{\mathrm{g}}]
  =
  \frac{(3-\cos\Delta)\lambda}{\sin\Delta}
  \label{eq:N_non-id-Q1}
\end{equation}
per circuit (with standard deviation of this gate-count given by $\sqrt{\mathbb{E}[N_{\mathrm{g}}]}$) together with a sampling overhead
\begin{equation}
  \gamma_\infty(\tau)^2=\e^{4\lambda\tan(\Delta/2)}.
  \label{eq:gamma-Q1}
\end{equation}
\end{lemma}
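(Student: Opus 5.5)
The argument has three parts: unbiasedness of the Poisson-process sampler, its expected gate count, and its sampling overhead. Each is obtained by passing to the limit $N_\tau\to\infty$ in the discretised construction of \cref{eq:tepai-trotter,eq:PAI-decomp}.

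First, fix $N_\tau$. Apply the PAI identity \cref{eq:PAI-decomp} to every rotation $\e^{-\im\theta_{j\mu}R_j}$ with $\theta=2\theta_{j\mu}$. This is an exact identity between channels. Expanding the product over $\mu$ and $j$ therefore gives a QPD of the discretised channel $\prod_{\mu,j}\mathcal{R}_j(2\theta_{j\mu})$. In this QPD each sampled channel is a product of identities and rotations, and each sampled term carries the signed weight $\prod_{\mu}\gamma(t_\mu)\,\prod\mathrm{sgn}(\gamma_l)$. Linearity of the composition gives $\mathbb{E}[\xi\mathcal{V}]$ equal to this discretised channel exactly.

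Second, take $N_\tau\to\infty$. The discretised channel converges to $\mathcal{U}(\tau)$, because the time-dependent first-order product formula has error $\mathcal{O}(\tau^2/N_\tau)$ for continuous $c_j(s)$. In the same limit, the sampling law converges to the inhomogeneous Poisson process described in \cref{app:tepai-impl}. To see this, note that per slice the non-identity probability is $\tau\sum_j|c_j(t_\mu)|(3-\cos\Delta)/(N_\tau\sin\Delta)+\mathcal{O}(N_\tau^{-2})$, as stated in \cref{eq:tepai-gate-each-pauli}. Independent Bernoulli events with such small probabilities converge to a Poisson process with rate \cref{eq:poisson_rate_tepai}. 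The conditional probabilities of which term $j$ fires and which rotation is chosen are limits of $p_l/\sum_{l\neq1}p_l$, and these are continuous in $\theta$. The expected gate count is the sum of the per-slice probabilities, a Riemann sum, so it converges to $\frac{3-\cos\Delta}{\sin\Delta}\lambda$. Since the count is Poisson distributed, its variance equals its mean.

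Third, compute the overhead. Take the logarithm of \cref{eq:tepai-normalization-each-pauli}: $\log\gamma(t_\mu)^2 = 4\tan(\Delta/2)\,\tau\sum_j|c_j(t_\mu)|/N_\tau+\mathcal{O}(N_\tau^{-2})$. Summing over $\mu$ gives a Riemann sum, and the $N_\tau$ error terms of size $\mathcal{O}(N_\tau^{-2})$ contribute $\mathcal{O}(N_\tau^{-1})\to0$. This yields $\gamma_\infty^2=\e^{4\lambda\tan(\Delta/2)}$.

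The main obstacle is justifying the interchange of the limit with the expectation. The weights are bounded uniformly by $\prod_\mu\gamma(t_\mu)\le \e^{2\lambda\tan(\Delta/2)+o(1)}$, and the channels are CPTP, so the signed QPD measures have uniformly bounded total variation. Dominated convergence, or a direct diamond-norm telescoping bound between the discrete and Poisson samplers, should close the argument.

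As a consistency check, set $V=4\lambda\tan(\Delta/2)$ and optimise over $\Delta$. Expanding for small $\Delta$ gives $\frac{3-\cos\Delta}{\sin\Delta}\approx 4/\Delta+\Delta/2$, which reproduces \cref{eq:tepai-cost} with $V\approx2\lambda\Delta$.
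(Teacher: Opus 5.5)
Your proposal is correct and follows the paper's route. You discretise with the time-dependent first-order product formula, apply the PAI identity exactly to each rotation, and pass to $N_\tau\to\infty$ to obtain the inhomogeneous Poisson process. You then integrate its rate for $\mathbb{E}[N_{\mathrm{g}}]$ and take the product of the per-slice $\gamma(t_\mu)^2$ for the overhead. Your bounded-weight argument for exchanging the limit and the expectation makes explicit a step the paper leaves implicit.

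The only slip is in your closing consistency check, which the lemma does not need. With $x=\tan(\Delta/2)$ one has exactly $\frac{3-\cos\Delta}{\sin\Delta}=\frac{1}{x}+2x$, whose leading small-$\Delta$ term is $2/\Delta$, not $4/\Delta$. Substituting $x=V/(4\lambda)$ gives $4\lambda^2/V+V/2$ exactly. Your expansion $4/\Delta+\Delta/2$ with $\Delta\approx V/(2\lambda)$ would instead give $8\lambda^2/V+V/4$.
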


Having reviewed TE-PAI, we are ready to provide a proof of Lemma~1.
\begin{proof}[Proof of Lemma~1]
Provided a sampling overhead $\gamma_\infty(\tau)^2=\e^V$ for a fixed $V>0$, take $\Delta^\ast = 2\arctan(V/(4\lambda))$.
For simulation of $H(s)$ for a time $\tau$, substituting $\Delta=\Delta^\ast$ into~\cref{eq:N_non-id-Q1} gives
\begin{equation}
  \mathbb{E}[N_{\mathrm{g}}]
  = \frac{4\lambda^2}{V} + \frac{V}{2},
  \label{eq:tepai-cost-app}
\end{equation}
which is~Eq.~(2) of Lemma~1. 
As previously discussed around~\cref{eq:gamma-composed}, when simulating $r$ independent steps of evolution within the same circuit, to ensure an overall sampling of $e^V$, replace $V\rightarrow V/r$ to get the expected circuit depth per step.
\end{proof}

\section{Gate counts for PTER}
\label{app:apply-tepai}

This appendix combines results from~\cref{app:trotter_error,app:tepai-details} to derive gate counts for $\PT$. We use the nested-commutator representation of $G_k$ (\cref{app:higher-order}) to assess the number of gates needed to implement the correction channels through TE-PAI. This allows us to obtain the number of gates required by PTER at a fixed number of steps, and we then optimise it to minimise the overall gate count. 
\Cref{app:rate-parameter} translates the kernel form of $G_k$ into the Pauli-basis integrated rate $\lambda_k(\tau)$ that controls the resources required to implement the correction channel through TE-PAI.
\Cref{app:gate-count-unified} details, for geometrically local Hamiltonians, the gate count per step of PTER together with the optimisation of the number of steps, proving Theorems~1 and~2 of the main text.
\Cref{app:non-geometric-gate-count} extends the analysis to general (non-local) Pauli Hamiltonians.

\subsection{Pauli-basis integrated rate}
\label{app:rate-parameter}

The resources required by TE-PAI (\cref{lem:tepai-app}) are controlled by the integrated rate of~\cref{def:rate_app}, which is the integrated
$1$-norm of the Hamiltonian in the chosen h.i. decomposition.
To implement the correction channel, we use the standard Pauli basis to decompose the
remainder Hamiltonian $G_k(s)$, so that the integrated rate $\lambda_k$ is defined as
\begin{equation}
  \lambda_k(\tau)
  := \int_0^\tau \|G_k(s)\|_{1,\mathrm{Pauli}}\,\diff s.
  \label{eq:lambda-k-def}
\end{equation}
This subsection identifies the small-$\tau$ expansion of
$\lambda_k(\tau)$ that will be used in the optimisation of
\cref{app:gate-count-unified}.
We start with the kernel form of $G_k$ derived in \cref{eq:Gk-remainder-explicit},
\begin{equation}
  G_k(s)
  = \sum_\nu\int_0^s
    K_\nu(s,\sigma)\,
    \mathrm{Ad}_{L_k^{j_\nu}(s,\sigma)}(C_\nu)
    \,\diff\sigma,
  \label{eq:Gk-remainder-explicit-rec}
\end{equation}
with the multi-index $\nu$ defined in~\cref{eq:nu-label}, scalar prefactor $K_\nu(s,\sigma)$ given by \cref{eq:Knu-explicit}, and $C_\nu$ given by \cref{eq:Cnu-explicit}. %
The kernel weight $w_\nu$ and its normalisation $\int_0^s K_\nu(s,\sigma)\,\diff\sigma = s^k w_\nu$ are those of \cref{eq:wnu-int,eq:wnu-def}.
Let us further define the following coefficients:
\begin{align}
  \alpha_k := \Bigl\|\sum_\nu w_\nu\,C_\nu\Bigr\|_{1,\mathrm{Pauli}}
  \qquad \text{and} \qquad
  \tilde\alpha_k := \sum_\nu w_\nu\,\|C_\nu\|_{1,\mathrm{Pauli}}.
  \label{eq:alpha-tilde-alpha-def}
\end{align}
By the triangle inequality, $\alpha_k\le \tilde\alpha_k$.  %

\begin{lemma}[Integrated rate $\lambda_k$ for PTER]
\label{lem:rate-parameter}
Let $\mathcal{S}_k$ be a fixed $k$-th order Trotter formula applied to
a geometrically local Hamiltonian (Sec.~A1), and let $\alpha_k,\tilde\alpha_k$ be defined by
\cref{eq:alpha-tilde-alpha-def}.  Then, as $\tau\to0$,
\begin{equation}
  \lambda_k(\tau)
  = \frac{\alpha_k\,\tau^{k+1}}{k+1}
    + \mathcal{O}\!\left(\tilde\alpha_k\,\tau^{k+2}\right),
  \label{eq:rate-parameter-additive}
\end{equation}
with a constant depending only on $k$, the formula, and the locality parameters of Sec.~A1.
\end{lemma}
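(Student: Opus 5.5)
The plan is to split $G_k(s)$ into its leading homogeneous part and a remainder, control the remainder in Pauli $1$-norm, and integrate in $s$.

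\emph{Splitting.} Starting from the kernel form \cref{eq:Gk-remainder-explicit-rec}, write
\[
\mathrm{Ad}_{L_k^{j_\nu}(s,\sigma)}(C_\nu)=C_\nu+\Delta_\nu(s,\sigma).
\]
Using $\int_0^s K_\nu\,\diff\sigma=s^k w_\nu$ from \cref{eq:wnu-int}, this gives
\[
G_k(s)=s^k\sum_\nu w_\nu C_\nu+E_k(s),\qquad E_k(s):=\sum_\nu\int_0^s K_\nu(s,\sigma)\,\Delta_\nu(s,\sigma)\,\diff\sigma .
\]
The reverse triangle inequality for $\|\cdot\|_{1,\mathrm{Pauli}}$ then yields
\[
\bigl|\,\|G_k(s)\|_{1,\mathrm{Pauli}}-\alpha_k s^k\bigr|\le \|E_k(s)\|_{1,\mathrm{Pauli}}.
\]
So it suffices to show $\|E_k(s)\|_{1,\mathrm{Pauli}}=\mathcal{O}(\tilde\alpha_k s^{k+1})$. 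Integrating over $s\in[0,\tau]$ then gives \cref{eq:rate-parameter-additive}.

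\emph{Bounding $\Delta_\nu$.} Apply the $p=0$ conjugation identity \cref{eq:Ad-expand-p0} layer by layer, i.e.\ \cref{lem:conj-expansion} with $p=1$, to the product $L_k^{j_\nu}(s,\sigma)$. This writes $\Delta_\nu$ as a sum over layers $i\le j_\nu$ of integrals of length at most $s$. Each integrand is a partially conjugated commutator $[\omega_i H_{\ell_i},C'_\nu]$, where $C'_\nu$ is $C_\nu$ already conjugated by the later layers. Expand $C_\nu=\sum_h c_h P_h$ in the Pauli basis and treat each Pauli separately. Each $H_\ell$ is a sum of mutually commuting, geometrically local Paulis, so $\mathrm{Ad}_{\e^{-\im\theta H_\ell}}(P_h)$ only involves the $\mathcal{O}(1)$ terms of $H_\ell$ overlapping $\mathrm{supp}(P_h)$. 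Rotating about a single anticommuting Pauli $P$ maps $P_h\mapsto \cos(2\theta c)P_h - \im\sin(2\theta c)PP_h$. Hence the Pauli $1$-norm deviation from $P_h$ is at most $|c_h|\cdot\mathcal{O}(\theta)$, and the support grows by $\mathcal{O}(1)$ per layer. Since there are $\tilde\Upsilon_k$ layers, a fixed number depending only on the formula, the light cone of $P_h$ stays of constant size. This gives $\|\Delta_\nu\|_{1,\mathrm{Pauli}}\le c\,s\,\|C_\nu\|_{1,\mathrm{Pauli}}$, with $c$ depending only on $k$, the formula, and the locality parameters (including the coefficient bound). Multiplying by $K_\nu\ge0$, integrating with \cref{eq:wnu-int}, and summing over $\nu$ gives
\[
\|E_k(s)\|_{1,\mathrm{Pauli}}\le c\,s^{k+1}\sum_\nu w_\nu\|C_\nu\|_{1,\mathrm{Pauli}}=c\,\tilde\alpha_k s^{k+1}.
\]

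\emph{Main obstacle.} The hard step is the Pauli $1$-norm bound on the conjugation deviation, because the Pauli $1$-norm is not unitarily invariant. A naive expansion of $\e^{-\im\theta H_\ell}$ would produce factors $\|H_\ell\|_{1,\mathrm{Pauli}}=\mathcal{O}(n)$, and that would destroy the $\tilde\alpha_k$ scaling. The fix is the commuting-terms structure of each $H_\ell$. It lets the conjugation factorise into single-Pauli rotations, of which only $\mathcal{O}(1)$ fail to commute with a given local $P_h$. Each such rotation expands $1$-norm by at most a factor $(|\cos|+|\sin|)\le 1+\mathcal{O}(\theta)$. I would first prove this per-Pauli estimate as a small auxiliary claim, with $\theta\le s\max_j|\omega_j|\cdot\mathcal{O}(1)$. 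The product of the $\mathcal{O}(1)$ factors minus one is then $\mathcal{O}(s)$ uniformly for $s$ bounded, and the rest of the argument is linearity.
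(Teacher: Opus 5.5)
Your proposal is correct and follows the paper's proof essentially step for step. It uses the same split of $G_k(s)$ into $s^k\sum_\nu w_\nu C_\nu$ plus a remainder controlled by the reverse triangle inequality, and the same light-cone argument in which each layer factorises into commuting single-Pauli rotations, only $\mathcal{O}(1)$ of which act nontrivially on a given Pauli of $C_\nu$, followed by integration against $K_\nu\ge 0$ to get $\mathcal{O}(\tilde\alpha_k s^{k+1})$. Where you write ``product of factors minus one'', the paper instead uses the telescoping identity $\mathcal{A}_{M_R}\cdots\mathcal{A}_1-I=\sum_\eta(\mathcal{A}_\eta-I)\mathcal{A}_{\eta-1}\cdots\mathcal{A}_1$ with the per-rotation bounds $4|\theta_\eta|$ and $1+2|\theta_\eta|$. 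That identity bounds the deviation rather than just the $1$-norm growth, so it makes your step precise and plays the same role as your layer-wise integral identity.
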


\begin{proof}
Recall from \cref{eq:Lk-two-arg} that
$L_k^{j_\nu}(s,\sigma)$ is a partial product of rotations
$\e^{-\im s\widetilde H_j}$ (with $j <j_{\nu}$) and a rotation $\e^{-\im \sigma \widetilde H_{j_{\nu}}}$. Since each $\widetilde H_j$ is a weighted sum of mutually commuting Pauli
strings, this partial product factorises into a product of single-Pauli rotations.
We use the elementary conjugation rule
\begin{equation}
  \e^{-\im\theta P}\,R\,\e^{\im\theta P}
  = \begin{cases}
    R, & [P,R]=0, \\[2pt]
    \cos(2\theta)R-\im\sin(2\theta)PR, & \{P,R\}=0.
  \end{cases}
  \label{eq:pauli-conj-rule}
\end{equation}
for Pauli strings $P$ and $R$.  %

As a first step, we wish to bound $\|G_k(s)\|_{1,\mathrm{Pauli}}$. 
The leading-order contribution in $s$ to
\cref{eq:Gk-remainder-explicit-rec}
is
\begin{align}
  G_k^{(0)}(s) := s^k\sum_\nu w_\nu\,C_\nu
  \quad \text{such that} \quad
  \|G_k^{(0)}(s)\|_{1,\mathrm{Pauli}} = s^k\,\alpha_k.
  \label{eq:Gk-bare}
\end{align}
We now show that
\begin{equation}
  \|G_k(s)\|_{1,\mathrm{Pauli}}
  = s^k\,\alpha_k+\mathcal{O}\!\left(\tilde\alpha_k\,s^{k+1}\right).
  \label{eq:Gk-norm-asymp}
\end{equation}
First, note that the triangle inequality gives
\begin{equation}
\bigl|\,\|G_k(s)\|_{1,\mathrm{Pauli}} - s^k\alpha_k\,\bigr|
  = \bigl|\,\|G_k(s)\|_{1,\mathrm{Pauli}} -\|G_k^{(0)}(s)\|_{1,\mathrm{Pauli}}\,\bigr| \le \|G_k(s)-G_k^{(0)}(s)\|_{1,\mathrm{Pauli}},
  \label{eq:reverse-triangle}
\end{equation}
so it remains to bound the Pauli $1$-norm of the difference between $G_k(s)$ and
$G_k^{(0)}(s)$.  Subtracting \cref{eq:Gk-bare} from the kernel form~\cref{eq:Gk-remainder-explicit-rec} gives
\begin{equation}
  G_k(s)-G_k^{(0)}(s)
  =\sum_\nu\int_0^s K_\nu(s,\sigma)\,\bigl[\mathrm{Ad}_{L_k^{j_\nu}(s,\sigma)}(C_\nu)-C_\nu\bigr]\,\diff\sigma.
  \label{eq:Gk-remainder-form}
\end{equation}

We next bound each integrand in Pauli $1$-norm.
To do so, expand
$C_\nu=\sum_R c_R R$ in the Pauli basis and note that
\begin{equation}\label{eq:part0}
  \|\mathrm{Ad}_{L_k^{j_\nu}(s,\sigma)}(C_\nu)-C_\nu\|_{1,\mathrm{Pauli}}
  \le \sum_R |c_R|\,\|\mathrm{Ad}_{L_k^{j_\nu}(s,\sigma)}(R)-R\|_{1,\mathrm{Pauli}}.
\end{equation}
Let
$\mathcal{A}_\eta:=\e^{-\im\theta_\eta \mathrm{ad}_{P_\eta}}$, then from
\cref{eq:pauli-conj-rule}, for every Pauli string $R'$ that \emph{anti-commutes} with $P_\eta$, we have
\begin{align}
\label{eq:channel_inequalities}
\begin{split}
  &\|(\mathcal{A}_\eta-I)(R')\|_{1,\mathrm{Pauli}}
  =
  |\cos(2\theta_\eta)-1| + |\sin(2\theta_\eta)|
  =
  2\sin^2(\theta_\eta) + |\sin(2\theta_\eta)|
  \le
  4|\theta_\eta|,
  \\
  &\|\mathcal{A}_\eta(R')\|_{1,\mathrm{Pauli}}
  =
  |\cos(2\theta_\eta)| + |\sin(2\theta_\eta)|
  \le
  (1+2|\theta_\eta|).
\end{split}
\end{align}
For a Pauli string $R'$ that \emph{commutes} with $P_\eta$, we rather get 
\begin{align}
\begin{split}
&\|(\mathcal{A}_\eta-I)(R')\|_{1,\mathrm{Pauli}}=0\\
&\|\mathcal{A}_\eta(R')\|_{1,\mathrm{Pauli}}=1.
\end{split}
\end{align}
Hence, the bounds~\eqref{eq:channel_inequalities} hold for any Pauli string $R'$.
By linearity and the triangle inequality, the same bounds extend to any Pauli
expansion $X=\sum_{R'}x_{R'}R'$:
\begin{equation}
  \|(\mathcal{A}_\eta-I)(X)\|_{1,\mathrm{Pauli}}
  \le 4|\theta_\eta|\,\|X\|_{1,\mathrm{Pauli}},
  \qquad
  \|\mathcal{A}_\eta(X)\|_{1,\mathrm{Pauli}}
  \le (1+2|\theta_\eta|)\,\|X\|_{1,\mathrm{Pauli}}.
  \label{eq:single-rotation-1norm-bounds}
\end{equation}

To proceed further, let us now use the geometric locality of $H$ (Sec.~A1). A Pauli string $R$ appearing in $C_\nu$ arises from a non-vanishing nested commutator of Hamiltonian terms. Since disjoint Pauli strings commute, these terms must overlap, so $R$ is supported in a ball of constant radius, fixed by $k$ and the locality range. Conjugation by the partial product $L_k^{j_\nu}(s,\sigma)$ can enlarge this support only by a further constant amount. Let $\Omega(R)$ be the set of single-Pauli rotations in $L_k^{j_\nu}(s,\sigma)$ whose support intersects this enlarged ball. Geometric locality then guarantees $|\Omega(R)|=\mathcal{O}(1)$ uniformly in the system size. Every rotation outside $\Omega(R)$ commutes with every Pauli string generated while conjugating $R$, and may be removed from the product.
Index the elements of $\Omega(R)$ as $\mathcal{A}_1,\ldots,\mathcal{A}_{M_R}$ with
$M_R=\mathcal{O}(1)$, where $\mathcal{A}_\eta=\e^{-\im\theta_\eta\mathrm{ad}_{P_\eta}}$.
Each active angle $\theta_\eta$ is a product of a Suzuki coefficient, a Pauli coefficient of $H$, which is bounded by assumption, and either $s$ or $\sigma\le s$.  Hence
\begin{equation}
  \sum_{\eta=1}^{M_R}|\theta_\eta|=\mathcal{O}(s)
  \label{eq:active-angle-sum}
\end{equation}
uniformly in $\sigma$ and independently of the system size.
Using the telescoping identity
\begin{equation*}
  \mathcal{A}_{M_R}\cdots\mathcal{A}_1-I
  =
  \sum_{\eta=1}^{M_R}
  (\mathcal{A}_\eta-I)
  \mathcal{A}_{\eta-1}\cdots\mathcal{A}_1
\end{equation*}
and \cref{eq:single-rotation-1norm-bounds}, we get
\begin{equation}
\begin{aligned}
  \|\mathrm{Ad}_{L_k^{j_\nu}(s,\sigma)}(R)-R\|_{1,\mathrm{Pauli}}
  &\le
  \sum_{\eta=1}^{M_R}
  4|\theta_\eta|
  \prod_{\eta'<\eta}(1+2|\theta_{\eta'}|)
  \le
  4\Bigl(\sum_{\eta=1}^{M_R}|\theta_\eta|\Bigr)
  \exp\!\Bigl(2\sum_{\eta=1}^{M_R}|\theta_\eta|\Bigr)
  =
  \mathcal{O}(s).
  \label{eq:per-term-deviation-oneP}
\end{aligned}
\end{equation}
Hence from the previous and \cref{eq:part0}, we obtain
\begin{equation}
  \|\mathrm{Ad}_{L_k^{j_\nu}(s,\sigma)}(C_\nu)-C_\nu\|_{1,\mathrm{Pauli}}
  = \mathcal{O}\left(s\right) \|C_\nu\|_{1,\mathrm{Pauli}}.
  \label{eq:per-term-deviation}
\end{equation}

Applying the triangle inequality to \cref{eq:Gk-remainder-form}, using
$K_\nu(s,\sigma)\ge0$, and then using \cref{eq:wnu-int}, we obtain
\begin{equation}
\begin{aligned}
  \|G_k(s)-G_k^{(0)}(s)\|_{1,\mathrm{Pauli}}
  &\le \sum_\nu\int_0^s K_\nu(s,\sigma)\,
      \|\mathrm{Ad}_{L_k^{j_\nu}(s,\sigma)}(C_\nu)-C_\nu\|_{1,\mathrm{Pauli}}
      \,\diff\sigma \\
  &= \mathcal{O}(s)
      \sum_\nu\|C_\nu\|_{1,\mathrm{Pauli}}\int_0^sK_\nu(s,\sigma)\,\diff\sigma 
      \\
  &= \mathcal{O}\!\left(\tilde\alpha_k\,s^{k+1}\right).
\end{aligned}
  \label{eq:Gk-deviation}
\end{equation}
Together with \cref{eq:reverse-triangle}, this proves
\cref{eq:Gk-norm-asymp}.  Integrating from $0$ to $\tau$ gives
\cref{eq:rate-parameter-additive}.
\end{proof}

\subsection{Gate-count optimisation}\label{app:gate-count-unified}
This subsection proves Theorems~1 and~2 of the main text and relies
on the scaling of the integrated rate $\lambda_k(\tau)$ obtained in the previous subsection.
Let us first evaluate the gate count for $\PT_k$ at a fixed number of steps $r=t/\tau$.
Implementing $U(t)=U(\tau)^r$ requires $r$ steps of Trotter circuits, interleaved with $r$ steps of TE-PAI corrections.
For the Trotter part, recall that each step incurs $\Upsilon_k N$ gates (\cref{app:trotter-recap}).
For the correction part,
ensuring a per-step sampling overhead $\gamma_\infty(\tau)^2=\e^{V/r}$ (so the total overhead $\gamma^2_{\mathrm{tot}}=\e^V$ over $r$ steps) incurs an expected number of gates $4\lambda_k(\tau)^2/(V/r)+(V/r)/2$ (\cref{app:tepai-resources}). Hence the total expected gate count is given by
\begin{equation}
  \mathbb{E}[N_{\mathrm{g}}](r)
  = r\,\Upsilon_k N
    + r\!\left(\frac{4\lambda_k(t/r)^2}{V/r} + \frac{V}{2r}\right)
  = r\,\Upsilon_k N + \frac{4\,r^2\,\lambda_k(t/r)^2}{V} + \frac{V}{2}.
  \label{eq:gates-r-unified}
\end{equation}

With $\tau=t/r$ and $\lambda_k'(\tau):=\partial_\tau\lambda_k(\tau)$, differentiating~\cref{eq:gates-r-unified} with respect to $r$, and equating it to $0$, yields the equation
\begin{equation}
  \Upsilon_k N
  = \frac{8\,t\,\lambda_k(\tau)\,\lambda_k'(\tau)
         - 8\,r\,\lambda_k(\tau)^2}{V},
  \label{eq:dN-dr-unified}
\end{equation}
whose unique positive solution is $r^\star$.

To examine how $\mathbb{E}[N_{\mathrm{g}}](r^\star)$ scales in the physical parameters, we use that $\alpha_k=\Theta(n)$. For a geometrically local $H$ on a fixed-dimensional lattice each nested commutator is supported on $\mathcal{O}(1)$ sites, so $\tilde\alpha_k=\Theta(n)$, and $\alpha_k=\Theta(n)=\Theta(\tilde\alpha_k)$ for generic couplings. The leading nested-commutator norm $\alpha_k$ falls below this scale only when the coupling constants are tuned to produce cancellations between the nested commutators, and in such a sublinear regime $\alpha_k=o(n)$ one can expect an even better gate-count scaling. For $\alpha_k=\Theta(n)$, the additive remainder in \cref{eq:rate-parameter-additive} is of relative order $\mathcal{O}((\tilde\alpha_k/\alpha_k)\,\tau)=\mathcal{O}(\tau)$, giving the leading-order $\lambda_k$ and its $\tau$-derivative,
\begin{equation}
\begin{aligned}
    \lambda_k(\tau) &= \frac{\alpha_k\tau^{k+1}}{k+1}\,(1+\mathcal{O}(\tau)), \\
    \lambda_k'(\tau) &= \alpha_k\tau^k\,(1+\mathcal{O}(\tau)).
\end{aligned}
    \label{eq:rate-parameter}
\end{equation}
The expected gate count in \cref{eq:gates-r-unified} then becomes
\begin{equation}
  \mathbb{E}[N_{\mathrm{g}}](r)
  = r\,\Upsilon_k N
    + \frac{4\,\alpha_k^2\,t^{2k+2}}{(k+1)^2\,V\,r^{2k}}\,\bigl(1+\mathcal{O}(t/r)\bigr)
    + \frac{V}{2},
  \label{eq:gates-r-alpha}
\end{equation}
while the equation giving the optimal number of steps in \cref{eq:dN-dr-unified} reduces to
\begin{equation}
  \Upsilon_k N
  = \frac{8k\,\alpha_k^2\,t^{2k+2}}{(k+1)^2\,V\,r^{2k+1}}\,\bigl(1+\mathcal{O}(t/r)\bigr).
  \label{eq:dN-dr-alpha}
\end{equation}
Dropping the $\mathcal{O}(t/r)$ correction gives the closed-form leading-order solution
\begin{equation}
\begin{aligned}
  r^\star_0
  &:= \left(
    \frac{8k\,\alpha_k^2\,t^{2k+2}}{(k+1)^2\,V\,\Upsilon_k N}
    \right)^{\!\!1/(2k+1)} \quad \Rightarrow \quad
  \tau^\star_0
  &:= \frac{t}{r^\star_0}
  = \left(\frac{(k+1)^2\,V\,\Upsilon_k N}{8k\,\alpha_k^2 t}\right)^{\!\!1/(2k+1)}.
\end{aligned}
  \label{eq:r-tau-star-leading}
\end{equation}
The optimal Trotter step is then $\tau^\star_0=\mathcal{O}\big((nt)^{-1/(2k+1)}\big)$, vanishing whenever $nt\gg V\Upsilon_k$.
When $\alpha_k=\Theta(n)$, substituting the leading-order expression for $\lambda_k$ is self-consistent: the $\mathcal{O}(t/r)$ correction in \cref{eq:dN-dr-alpha}, evaluated at $r=r^\star_0$, is exactly $\mathcal{O}(\tau^\star_0)=\mathcal{O}((nt)^{-1/(2k+1)})$, which is $o(1)$ in the limit.
Solving \cref{eq:dN-dr-alpha} self-consistently in this regime,
\begin{equation}
\begin{aligned}
  r^\star &= r^\star_0\,\bigl(1+\mathcal{O}((nt)^{-1/(2k+1)})\bigr), \\
  \tau^\star &= \tau^\star_0\,\bigl(1+\mathcal{O}((nt)^{-1/(2k+1)})\bigr).
\end{aligned}
  \label{eq:r-tau-star-actual}
\end{equation}
Substituting $r^\star$ back into \cref{eq:gates-r-alpha} and using that the second term equals $1/(2k)$ times the first at the leading-order optimum, the gate count $\mathbb{E}[N_{\mathrm{g}}]^\star$ is $(2k+1)/(2k)$ times the Trotter contribution plus $V/2$, giving
\begin{equation}
  \mathbb{E}[N_{\mathrm{g}}]^\star
  = \frac{2k+1}{2k}\,\left(\frac{8k}{(k+1)^2}\right)^{\!\!1/(2k+1)}\,
      V^{-\frac{1}{2k+1}}\,
      (\Upsilon_k N)^{\frac{2k}{2k+1}}\,
      \alpha_k^{\frac{2}{2k+1}}\,
      t^{\frac{2k+2}{2k+1}}\,\bigl(1+o(1)\bigr)
    + \frac{V}{2},
  \label{eq:gates-optimal-unified}
\end{equation}
matching Eq.~(5) of the main text and proving Theorems~1 and~2.
For the numerical comparisons of Fig.~2, the gate count for $\PT$ is obtained by minimising the exact total gate count \cref{eq:gates-r-unified} over the number of steps $r$ directly, without the leading-order substitution, so the plotted counts carry no $o(1)$ error even at the smallest system sizes.

\subsection{Gate count for non-local Hamiltonians}
\label{app:non-geometric-gate-count}

We now drop the geometric-locality assumption of the previous subsection. Let $H=\sum_{\ell=1}^L H_\ell$, where each $H_\ell$ is a weighted sum of mutually commuting Pauli strings, otherwise arbitrary.
The proof of \cref{lem:rate-parameter} used geometric locality to show that, when conjugating any of the Pauli strings appearing in the commutators $C_{\nu}$ by the Pauli rotations composing $L_k^{j_\nu}(\tau,\sigma)$, only a constant number of such rotations contribute.
This was captured by \cref{eq:active-angle-sum}. We cannot rely on this fact any more.

As for \cref{app:rate-parameter}, we start by focusing on an arbitrary Pauli string $R$ appearing in one of the commutators $C_{\nu}$~\eqref{eq:Cnu-explicit}, and consider conjugation by the partial product $L_k^{j_\nu}(\tau,\sigma)$. The latter is decomposed as a product of Pauli rotations, and we index the rotations that contribute nontrivially to the chain of conjugations by $\eta=1, \dots, M_R$. However, we
do not have the guarantee that $M_R \in \mathcal{O}(1)$, as was the case before. Still, we can bound the sum of the rotation angles through
\begin{equation}
  \sum_\eta|\theta_\eta| \le \chi_k\beta\tau,
  \qquad
  \chi_k := \frac{1}{L}\sum_{j=1}^{\tilde\Upsilon_k}|\omega_j|,
  \label{eq:chi-k-def}
\end{equation}
uniformly in $\sigma\in[0,\tau]$, where we recall that the $\{\omega_j\}$ are the Suzuki coefficients from \cref{eq:Sk-product-recap} and $\beta:=\|H\|_{1, \rm{Pauli}}$.
For the symmetric Suzuki formulas used here, every component carries the same total weight $\frac{1}{L}\sum_j|\omega_j|$, which gives the uniform bound above.
Note that at first and second order ($k=1$ and $2$ respectively) these evaluate to $\chi_1=\chi_2=1$. For higher (but fixed) order, these remain constant.

In turn, repeatedly using \cref{eq:channel_inequalities} together with \cref{eq:chi-k-def}, noting that the latter holds for any $R$ of the Pauli decomposition $C_{\nu}$, we obtain
\begin{equation}
  \bigl\|\mathrm{Ad}_{L_k^{j_\nu}(\tau,\sigma)}(C_\nu)\bigr\|_{1,\mathrm{Pauli}}
  \le \|C_\nu\|_{1,\mathrm{Pauli}}\,\prod_\eta\bigl(1+2|\theta_\eta|\bigr)
  \le \|C_\nu\|_{1,\mathrm{Pauli}}\,\e^{2\chi_k\beta\tau}.
  \label{eq:conj-pauli-1norm-k}
\end{equation}
Combining \cref{eq:conj-pauli-1norm-k} with the kernel form of Lemma~2, we then get
\begin{equation}
  \|G_k(s)\|_{1,\mathrm{Pauli}}
  \le \sum_\nu\int_0^s K_\nu(s,\sigma)\,\|C_\nu\|_{1,\mathrm{Pauli}}\,\e^{2\chi_k\beta s}\,\diff\sigma
  = \tilde\alpha_k\,s^k\,\e^{2\chi_k\beta s}.
  \label{eq:Gk-norm-nongeom}
\end{equation}
It follows that when integrating it over $s\in[0,\tau]$, and noting that $\e^{2\chi_k\beta s}\le\e^{2\chi_k\beta\tau}$, we obtain
\begin{equation}
  \lambda_k
=\int_0^\tau\|G_k(s)\|_{1,\mathrm{Pauli}}\,\diff s
  \le \tilde\alpha_k\,\e^{2\chi_k\beta\tau}\,\frac{\tau^{k+1}}{k+1}.
  \label{eq:lambda-nongeom}
\end{equation}
Note that, compared to the leading term of \cref{eq:rate-parameter-additive}, the upper bound has been multiplied by a factor $\e^{2\chi_k\beta\tau}>1$ and the coefficient $\tilde\alpha_k$ now replaces the tighter $\alpha_k$ of \cref{lem:rate-parameter}.

With the integrated rate $\lambda_k$ now bounded~\eqref{eq:lambda-nongeom}, we can repeat the same steps as for the geometric case of \cref{app:gate-count-unified}. First, inserting \cref{eq:lambda-nongeom} into \cref{eq:gates-r-alpha} with the substitution $\alpha_k \to g_k := \tilde\alpha_k\,\e^{2\chi_k\beta\tau}$ gives
\begin{equation}
  \mathbb{E}[N_{\mathrm{g}}]
  \le r\,\Upsilon_k N
    + \frac{4\,g_k^{2}\,t^{2k+2}}{(k+1)^2\,V\,r^{2k}}
    + \frac{V}{2}.
\label{eq:gates-nongeom-k}
\end{equation}
We now identify the number of steps $r^\star$ that minimises \cref{eq:gates-nongeom-k}.
Specialising \cref{eq:r-tau-star-leading} under the substitution $\alpha_k\to g_k$, and using that $\tilde\alpha_k=\mathcal{O}(\beta^{k+1})$, we obtain
\begin{equation}
\begin{aligned}
  r^{\star}
  &=\Theta\!\left(V^{-1/(2k+1)}\,(\beta\,t)^{(2k+2)/(2k+1)} N^{-1/{(2k+1)}}\right), \\
  \tau^{\star}=t/r^\star
  &=\Theta\!\left(V^{1/(2k+1)}\beta^{-(2k+2)/(2k+1)}t^{-1/(2k+1)}N^{1/(2k+1)}\right).
\end{aligned}
  \label{eq:rstar-nongeom}
\end{equation}
Note that the term $\chi_k\beta\tau^{\star}=\mathcal{O}\bigl(V^{1/(2k+1)}t^{-1/(2k+1)}\bigr)$ goes towards zero at large $t$ for fixed $V$, so we have $\e^{2\chi_k\beta\tau^\star}=1+o(1)$ and $g_k=(1+o(1))\,\tilde\alpha_k$.
Then, substituting $r^\star$ back into \cref{eq:gates-nongeom-k} gives
\begin{equation}
  \mathbb{E}[N_{\mathrm{g}}]
  = \mathcal{O}\!\left(
      V^{-1/(2k+1)}
      N^{2k/{(2k+1)}}\beta^{(2k+2)/{(2k+1)}}\,t^{1+1/(2k+1)}
    \right).
  \label{eq:gates-nongeom-final-k}
\end{equation}
Since $\beta = \mathcal{O}(N)$, it reproduces the results stated in Eq.~(6). 
For $t \geq V$ the exponential prefactor is $\mathcal{O}(1)$, so \cref{eq:gates-nongeom-final-k} is accurate as a non-asymptotic upper bound throughout the parameter range of interest.
At constant sampling overhead this reduces to $\mathcal{O}(N\beta\,t^{1+1/(2k+1)})$:
the $k=1$ bound is $\mathcal{O}(N\beta\,t^{4/3})$ and the $k=2$ bound $\mathcal{O}(N\beta\,t^{6/5})$.

\section{Sampling the correction channel}
\label{app:sampling}

In the previous section, we derived gate-count scalings for PTER leaving aside the question of how it is implemented.
In this section, we detail how TE-PAI (\cref{app:tepai-details}) is implemented to simulate the time-dependent evolution under the remainder Hamiltonian $G_k(s)$ for $s\in [0, \tau]$.
\Cref{app:meth} recalls the quantities that are required for the implementation of TE-PAI, including an h.i.\ decomposition of $G_k(s)$ and the associated integrated rate.
\Cref{app:sampling-geom} treats the geometrically local case, where $G_k$ admits a straightforward explicit Pauli decomposition.
\Cref{app:nongeom-sampler} treats the non-local case, where we need to resort to an alternative h.i.\ decomposition. This incurs a slight overhead but leaves the gate-count scalings (derived in \cref{app:non-geometric-gate-count}) unchanged.

\subsection{Preamble}\label{app:meth}
The implementation items of TE-PAI presented in \cref{app:tepai-impl} were based on knowledge of a Hermitian-involution (h.i.)
decomposition of the remainder Hamiltonian~\eqref{eq:hi-decomp} such as
\begin{equation}
  G_k(s) = \sum_j c_j(s)\,R_j,
  \label{eq:G_gen}
\end{equation}
with $\{R_j\}$ a set of h.i. operators (for instance, Pauli strings).
Given such a decomposition, one obtains the $1$-norm of the Hamiltonian at any time $s$,
\begin{equation}\label{eq:app_hi_norm}
\| G_k(s)\|_{1, \mathrm{hi}} := \sum_j|c_j(s)|,
\end{equation}
from which can be evaluated the integrated rate 
\begin{equation}\label{eq:app_int_rate}
\lambda_k := \int_0^\tau\| G_k(s)\|_{1, \mathrm{hi}}\,\diff s.
\end{equation}
\cref{eq:app_int_rate,eq:app_hi_norm} are the quantities needed to implement TE-PAI.  
However, the form of $G_k(s)$ obtained so far is only a nested-commutator kernel form~\eqref{eq:Gk-remainder-explicit}, impeding the evaluation of \cref{eq:app_hi_norm,eq:app_int_rate}.
As we shall see, for the case of geometrically local Hamiltonians, conversion from the kernel form to a Pauli decomposition~\eqref{eq:G_gen} is straightforward (\cref{app:sampling-geom}). However, for the case of non-local Hamiltonians additional care is needed (\cref{app:nongeom-sampler}). In particular, for the latter we will exploit the fact that h.i. decompositions are not unique. 

For both the geometrically local and non-local cases, before discussing general situations, we start by dealing with a \emph{simplified setting} where $k=1$ and $H=A+B$. Here $A$ (or $B$) consists of $N_A$ (or $N_B$) mutually commuting Pauli strings such that $N=N_A+N_B$ is the number of distinct Pauli terms in $H$. 
In this case, we recall that the remainder Hamiltonian is given by~\eqref{eq:G1-kernel-app}
\begin{equation}
  G_1(s) = \sum_h c_h \int_0^s
    \e^{-\im As}\,\e^{-\im B\sigma}\,P_h\,\e^{\im B\sigma}\,\e^{\im As}\,\diff\sigma 
    = \sum_{h} c_h \int_0^s\e^{-\im s\,\mathrm{ad}_A}\,\e^{-\im\sigma\,\mathrm{ad}_B}(P_h)\,\diff\sigma,
  \label{eq:G1-recall}
\end{equation}
with coefficients $c_h$ stemming from the Pauli decomposition of the commutator $[A,B]= \im\sum_h c_h P_h$, from which we defined 
\begin{equation}\label{eq:def_app_alpha1}
    \alpha_1:=\sum_h|c_h|.
\end{equation}
For the second equality in \cref{eq:G1-recall} we used that $\e^{- \im \alpha \mathrm{ad}_X}$ is the conjugation by the unitary $\e^{- \im \alpha X}$. Since $A$ and $B$ are each a sum of mutually commuting Pauli strings, 
$\e^{-\im s\,\mathrm{ad}_A}\e^{-\im\sigma\,\mathrm{ad}_B}$ factors into conjugations by $N=N_A+N_B$ single-Pauli rotations. 
We index these
rotations through $\eta=1,\dots,N$ from right to left, and write $P_\eta$ for the Pauli string generating the
rotation $\eta$ and $\theta_\eta$ for its angle. 
With this notation, we have
\begin{equation}\label{eq:app_one_conjug}
    \e^{-\im s\,\mathrm{ad}_A}\,\e^{-\im\sigma\,\mathrm{ad}_B}(P_h) = %
    \e^{-\im\theta_{N}\,\mathrm{ad}_{P_N}} \cdots \e^{-\im\theta_{1}\,\mathrm{ad}_{P_1}}(P_h).
\end{equation}
Furthermore, the values of the angles involved are given explicitly by
\begin{equation}\label{app:angles_def}
    \theta_{\eta} = c_\eta \omega_\eta t_{\eta} \qquad \textrm{with} \quad t_{\eta} = \begin{cases}
        \sigma & (\eta\leq N_B) \\ 
        s & (\eta>N_B)
    \end{cases},
\end{equation}
$c_{\eta}$ being the weight of $P_{\eta}$ in the Pauli decomposition of $H$ and $\omega_{\eta}$ being the corresponding Suzuki coefficients (with $\omega_{\eta}=1$ for $k=1$).

\subsection{Geometrically local case: explicit Pauli decomposition}
\label{app:sampling-geom}

For the geometrically local case, recasting the generator~\eqref{eq:G1-recall} into an explicit Pauli decomposition is relatively straightforward. 
In \cref{app:impl-local-ppp}, we detail how we obtain the explicit Pauli decomposition of $G_1(s)$ that can be used for the implementation of TE-PAI (and thus $\PT$) for the simplified setting. To do so, we resort to \emph{Pauli path propagation}, which is detailed as it will also form the basis of the treatment of the non-local case later on. 
Similar propagations are employed in approximate classical circuit simulation techniques in~\cite{Aharonov2022, Fontana2023}, where some of the paths are truncated (no truncation is performed here).
This is then generalised to higher order and a general splitting of the Hamiltonian in \cref{app:impl-local-gk}.
In particular, geometric locality ensures that the decomposition of $G_k$ contains only $\mathcal{O}(N)$ Pauli strings and that each of their coefficients is obtained with $\mathcal{O}(1)$ classical computational effort, so that the total classical cost of the decomposition scales as $\mathcal{O}(N)$.

\subsubsection{Explicit Pauli decomposition of \texorpdfstring{$G_1$}{G1} using Pauli path propagation}\label{app:impl-local-ppp}
Let us start by considering a single summand in \cref{eq:G1-recall}, corresponding to a Pauli $P_h$, and the effect of the repeated conjugations as appearing in~\cref{eq:app_one_conjug}. In particular we wish to obtain the Pauli decomposition
\begin{equation}\label{eq:dec_one_ph}
    \e^{-\im s\,\mathrm{ad}_A}\,\e^{-\im\sigma\,\mathrm{ad}_B}(P_h)
  = \sum_r b_r(s, \sigma) \,P_r.
\end{equation}
To do so, we proceed through Heisenberg evolution (called in this setting, \emph{Pauli path propagation}) of $P_{h}$, and represent the $N$ successive conjugations involved in \cref{eq:app_one_conjug} as a \emph{binary tree} of height $N$ such that: (i) each node is a Pauli string; (ii) each edge (we call it a \emph{branch}) carries a weight; and (iii) each level $\eta$ corresponds to a conjugation by the unitary $\e^{-\im\theta_{\eta}P_{\eta}}$.
The root node ($\eta = 0$ of the tree) is then the Pauli string $P_h$, and at each level $\eta>0$ a given parent node $R$ has two branches: the \emph{identity branch}, to child $R$, and the \emph{commutator branch}, to child $P_\eta R$, whose weights are set by whether $P_\eta$ commutes or anticommutes with $R$.
As already seen in \cref{eq:pauli-conj-rule}, the identity (or commutator) branch maps to a child node $R$ (or $P_\eta R$) with weight $w_0^{(\eta)}(R)$ (or $w_1^{(\eta)}(R)$) given by
\begin{equation}
  (w_0^{(\eta)}(R),\,w_1^{(\eta)}(R))
  = \begin{cases}
    (1,\,0), & [P_\eta,R]=0,\\[2pt]
    (\cos 2\theta_\eta,\,-\im\sin 2\theta_\eta), & \{P_\eta,R\}=0.
  \end{cases}
  \label{eq:exact-layer-weights}
\end{equation}

A path in such a tree is a choice of which branches are taken from the root node to one of the leaf nodes (at level $\eta  =N$). It can be labeled by a subset of integers $S \subseteq [N]:=\{1,\dots,N\}$ that identifies which commutator branches have been taken along the way. For a given $S$, we denote as $P_S^{(\eta)}$ the Pauli string corresponding to the node belonging to the path at level $\eta$ (such that $P_S^{(0)}=P_h$ for any $S$) and define $P_S:=P_S^{(N)}$ to be the Pauli string corresponding to the leaf node of the path.
For instance, $S = \emptyset$ corresponds to a path where the identity branch is systematically taken such that $P^{(\eta)}_{S}=P_S=P_h$.
Finally, define as
\begin{equation}
  g_S(s):=\prod_{\eta\notin S}w_0^{(\eta)}(P_S^{(\eta)})\prod_{\eta\in S}w_1^{(\eta)}(P_S^{(\eta)}),
  \label{eq:path-weight-0}
\end{equation}
the weight of the path. 
With these, the Pauli decomposition of \cref{eq:app_one_conjug} can be recast as
\begin{equation}
\e^{-\im s\,\mathrm{ad}_A}\,\e^{-\im\sigma\,\mathrm{ad}_B}(P_h) = \sum_S g_S(s) P_S.
\label{app:one_path_decomp}
\end{equation}

Different paths $S$ may terminate on the same leaf Pauli string; collecting them yields the coefficient of a given $P_r$ in \cref{eq:dec_one_ph}, namely $b_r(s,\sigma)=\sum_{S:\,P_S=P_r}g_S$. Since these path weights are summed with their signs, contributions to the same $P_r$ can cancel, so the resulting Pauli decomposition benefits from such cancellations and its $1$-norm is generally smaller than the path $1$-norm $\sum_S|g_S|$ obtained without this collection (as used in the non-local case of \cref{app:nongeom-sampler}).

For a given path $S$, the classical effort to obtain $P_S$ or to compute the weight $g_S(s)$ has complexity $\mathcal{O}(N)$. For the latter, we note that its expression can be obtained in closed form, as a product of $\cos(2 \theta_{\eta})$ and $\sin (2 \theta_{\eta})$ from \cref{eq:exact-layer-weights}.
Obtaining the Pauli decomposition of \cref{app:one_path_decomp} is, in general, not tractable since the tree has $2^N$ paths.
However, all the paths may not contribute to the sum. We say that a path is \emph{valid} whenever its weight is non-zero, i.e., whenever it does not include a branch with zero weight. Thus, the complexity of obtaining the decomposition of \cref{app:one_path_decomp} scales with the number of valid paths.

As was discussed in the proof of \cref{lem:rate-parameter}, in the geometrically local case, only $\mathcal{O}(1)$ of the rotations generated by the $P_{\eta}$~\eqref{eq:app_one_conjug} act non-trivially (i.e., yield valid commutator branches): only a constant number of paths are valid.
Since these $\mathcal{O}(1)$ non-trivial rotations are identified directly from the $\mathcal{O}(1)$-site support of $P_h$, each site hosting $\mathcal{O}(1)$ terms of the Hamiltonian, all the coefficients appearing in the Pauli decomposition of \cref{eq:dec_one_ph} can be obtained in $\mathcal{O}(1)$ computational effort.

The previous construction can be repeated for all the $P_h$ components of \cref{eq:G1-recall}.
Again invoking geometric locality, we are guaranteed that we have $\mathcal{O}(N)$ of such components as they arise from the commutator $[A, B]$. Integrating the coefficients obtained over $\sigma \in [0,s]$, and summing them over all the $P_h$ involved, weighted by the coefficient $c_h$, we obtain explicitly all the coefficients appearing in the Pauli decomposition of $G_1(s)$~\eqref{eq:G1-recall} in $\mathcal{O}(N)$ computational effort.
The evaluation of \cref{eq:app_hi_norm,eq:app_int_rate}, and thus the implementation of TE-PAI, follow.  

\subsubsection{Explicit Pauli decomposition for \texorpdfstring{$G_k$}{Gk}}\label{app:impl-local-gk}
The previous construction is readily generalised to higher orders of $G_k$ and arbitrary splittings of the Hamiltonian: albeit requiring more classical effort, this remains scalable. To see that, start with the kernel form~\eqref{eq:Gk-remainder-explicit}
that expresses $G_k(s)$ as a weighted sum of nested commutators $C_\nu$ \eqref{eq:Cnu-explicit} conjugated by $L_k^{j_\nu}(s,\sigma)$~\eqref{eq:Lk-two-arg}.
Given geometric locality, we still have that at a fixed, but arbitrary, order $k$: (i) each
$C_\nu$ is supported on $\mathcal{O}(1)$ sites and thus can be decomposed as $\mathcal{O}(1)$ Pauli strings with constant support, and that (ii) $\mathcal{O}(1)$ Pauli rotations, among the $\mathcal{O}(N)$ appearing in the decomposition of $L_k^{j_\nu}(s,\sigma)$, act non-trivially. 
Hence, we obtain a scaling in $\mathcal{O}(N)$ for the computational effort to obtain an explicit Pauli decomposition of $G_k(s)$ used for the evaluation of \cref{eq:app_hi_norm,eq:app_int_rate}, and the implementation of TE-PAI.
When $H$ is translation invariant, only $\mathcal{O}(1)$ of these coefficients are inequivalent, so the pre-computation reduces to $\mathcal{O}(1)$ classical effort.

\subsection{Non-local case: Pauli-path sampler}
\label{app:nongeom-sampler}

For non-local Hamiltonians the previous approach could fail: we are not guaranteed any more that only a constant number of rotations act non-trivially, such that we may have an exponentially large number of contributing Pauli strings appearing in \cref{eq:dec_one_ph} and thus in \cref{eq:G_gen}. 
To overcome this problem we will resort to an alternative h.i. decomposition of $G_k$ that lends itself to scalable implementation. 
As already discussed, any h.i. decomposition of the generator of the evolution can be used for TE-PAI, albeit the resources required will vary from one to another. 
To proceed, we start by introducing a first alternative decomposition of $G_1$ (\cref{app:path_dec}) and an alternative tree construction (\cref{app:alt_tree}).
We then combine these two to present our Pauli path sampler that enables the implementation of $\PT_1$ (\cref{app:pauli_path_sampler}). We further detail its implementation, verify its consistency and show that it does not alter the gate-count scalings obtained in \cref{app:non-geometric-gate-count}, despite the overhead introduced by the alternative decomposition (\cref{sec:pauli_path_sampler_more}). We conclude by detailing its generalisation (\cref{app:pauli_sampler_non_loc_gen}).

\subsubsection{The path decomposition}\label{app:path_dec}
Let us first extend the definitions introduced in \cref{app:impl-local-ppp}. Define an \emph{extended path} as a tuple $\mathbf{x}:=( S, h, \sigma)$ with $S$ a path, together with an index $h$ of the Pauli strings appearing in \cref{eq:G1-recall}, and a value $\sigma\in[0,s]$. 
In particular the choice of $h$ fixes the root node of the tree to be $P_h$, while the choice of $\sigma$ fixes the weights of its edges~\eqref{eq:exact-layer-weights} indirectly through the angles $\theta_{\eta}$ in \cref{app:angles_def}. 
$P^{(\eta)}_{\mathbf{x}}$ now denotes the node at level $\eta$, after $\eta$ conjugations, for the path $S$ starting at the root node $P_h$, and $P_{\mathbf{x}}:=P^{(N)}_{\mathbf{x}}$ is the Pauli string corresponding to the leaf node.
Additionally, the weight of the extended path is defined as
\begin{equation}
  g_\mathbf{x}(s):=c_h\prod_{\eta\notin S}w_0^{(\eta)}(P^{(\eta)}_{\mathbf{x}})\prod_{\eta\in S}w_1^{(\eta)}(P^{(\eta)}_{\mathbf{x}}),
  \label{eq:path-weight-1}
\end{equation}
that now accounts for the coefficient $c_h$. 

With these extended definitions, the decomposition in \cref{app:one_path_decomp} is lifted to a decomposition of $G_1$ given by
\begin{equation}
  G_1(s) = \int g_{\mathbf{x}}(s)\,P_\mathbf{x}\,\diff \mathbf{x},
  \qquad\textrm{with} \quad
  \int\diff \mathbf{x} := \sum_h\int_0^s\diff\sigma\,\sum_S.
  \label{eq:exact-path-coeff-ngf}
\end{equation}
We call such decomposition a \emph{path decomposition}, and highlight that it now involves infinitely many terms (indexed by $\mathbf{x}$) and introduces redundancies (as we can have $P_{\mathbf{x}}=P_{\mathbf{x}'}$ even when $\mathbf{x}\neq \mathbf{x}'$). Nonetheless, it is a valid h.i. decomposition since any $P_\mathbf{x}$ is a Pauli string. Hence, TE-PAI results (\cref{app:tepai-details}) equally apply. 
Still, it should be stressed that it yields a different $1$-norm, compared to the norm induced by a Pauli decomposition. In particular, denoting as $\|\cdot\|_{1, \textrm{path}}$ the norm corresponding to the path decomposition, we have that
\begin{equation}
  \|G_1(s)\|_{1,\mathrm{path}} := \int|g_\mathbf{x}(s)|\,\diff \mathbf{x} \;\ge\; \|G_1(s)\|_{1,\mathrm{Pauli}},
  \label{eq:path-norm-ngf}
\end{equation}
due to the redundancies mentioned. \cref{eq:path-norm-ngf} becomes an equality only when, for any Pauli string $P_j$, all the coefficients $\{g_{\mathbf{x}}(s) | \forall \mathbf{x}: P_{\mathbf{x}}=P_j \}$ have the same sign. 

As it is, this path decomposition cannot yet be used for implementation since it is not clear how to evaluate $\|G_1\|_{1, \textrm{path}}$ and $\lambda_1$, which are prerequisites to the implementation of TE-PAI as discussed in \cref{app:meth}. In particular, these require taking a sum of the weights $|g_{\mathbf{x}}(s)|$ over potentially as many as $2^N$ paths.
Given that these weights do not have a common structure and depend on the path taken, not known a priori as it relies on whether $P_\eta$ anticommutes with the previous node $P_{\mathbf{x}}^{(\eta)}$ as per \cref{eq:exact-layer-weights}, it seems that this exponential overhead cannot be avoided.
The alternative tree introduced in the next section avoids such a pitfall.

\subsubsection{An alternative tree}\label{app:alt_tree}
Let us start by introducing upper bounds on the rotation angles $\theta_{\eta}$~\eqref{app:angles_def}.
These are defined as 
\begin{equation}\label{eq:bar_theta}
    \bar\theta_\eta:= |\tau \omega_{\eta} c_\eta|.
\end{equation} 
By construction, these satisfy $|\theta_\eta|\le\bar\theta_\eta$,
for $0\le\sigma\le s\le\tau$. Furthermore, since $\omega_{\eta}=1$ (as $k=1$) we also have
\begin{equation}\label{eq:sum_th}
\sum_\eta\bar\theta_\eta =\tau\beta,    
\end{equation}
where we recall that
$\beta:=\|H\|_{1,\mathrm{Pauli}}$. 

Based on these upper-bound angles, we consider an alternative tree to the one defined in \cref{app:impl-local-ppp}. The only change is that the weights of the branches~\eqref{eq:exact-layer-weights} are replaced by (all the quantities corresponding to this alternative tree are highlighted with overlines):
\begin{equation}
  (\bar w_0^{(\eta)}, 
  \bar w_1^{(\eta)} ):=(1, 2\bar\theta_\eta).
  \label{eq:ub-layer-weights}
\end{equation}
In particular, note that these weights do not depend any more on any parent Pauli $R$, such as the root node $P_h$ of the tree, and are thus uniform at each level of the tree.
We further have that 
\begin{equation}\label{eq:upper_weights}
|w_b^{(\eta)}(R)|\le\bar w_b^{(\eta)},  
\end{equation}
for any $R$, since $|\cos 2\theta_\eta|\le 1$ and
$|\sin 2\theta_\eta|\le 2\bar\theta_\eta$. 
Given an extended path $\mathbf{x}$, we redefine the weight~\eqref{eq:path-weight-1} with respect to this alternative tree through
\begin{equation}
  \bar g_\mathbf{x}(s) := |c_h|\prod_{\eta\notin S}\bar w_0^{(\eta)}\prod_{\eta\in S}\bar w_1^{(\eta)}
  \;\ge\;|g_{\mathbf{x}}(s)|,
  \label{eq:dominating-coeff-ngf-bar}
\end{equation}
with the inequality following from \cref{eq:upper_weights}.
In addition, the operator $\overline{G}_1$, the dominating counterpart to $G_1$~\eqref{eq:exact-path-coeff-ngf}, is defined as
\begin{equation}
  \overline{G}_1(s) = \int \bar{g}_{\mathbf{x}}(s)\,P_\mathbf{x}\,\diff \mathbf{x}.
  \label{eq:exact-path-coeff-ngf-bar}
\end{equation}

Two notable features of this alternative tree will be used.
First, the sum of $\bar g_\mathbf{x}(s)$ over paths factorises across the levels (equivalently, over the $N$ conjugations), giving a closed-form for its norm:
\begin{equation}
  \|\overline{G}_1(s)\|_{1,\mathrm{path}}
  := \int\bar g_\mathbf{x}(s)\,\diff \mathbf{x}
  = \alpha_1\,s\prod_{\eta=1}^{N}(1+2\bar\theta_\eta)
  \;\le\; \alpha_1\,s\,\e^{2\beta\tau},
  \label{eq:dominating-norm-ngf-bar}
\end{equation}
with $\alpha_1=\sum_h |c_h|$~\eqref{eq:def_app_alpha1}, and the final
inequality following from $1+x\le\e^x$ and \cref{eq:sum_th}.

Second, one can sample an extended path $\mathbf{x}= (S, h, \sigma)$ from this alternative tree, according to the density
\begin{equation}\label{app:target_p}
    p(\mathbf{x}) = \frac{\bar g_\mathbf{x}(s)}{\int\bar g_\mathbf{x}(s) \,\diff \mathbf{x}}.
\end{equation}
This is achieved by: (i) sampling a $\sigma$ uniformly over the range $[0,s]$, (ii) sampling $h$ according to $p(h) = |c_h|/\alpha_{1}$, (iii) sampling a path $S$ along the alternative tree by taking the commutator branch at level $\eta$ with probability $p_1^{(\eta)}=\bar w_1^{(\eta)}/(\bar w_0^{(\eta)}+\bar w_1^{(\eta)})$ or the
identity branch with probability $p_0^{(\eta)}=\bar w_0^{(\eta)}/(\bar w_0^{(\eta)}+\bar w_1^{(\eta)})$.
Using independence of the different random variables, we see that the samples obtained follow the joint probability distribution,
\begin{equation}
    p(S, h, \sigma) \propto |c_h| \prod_{\eta\notin S}\bar w_0^{(\eta)}\prod_{\eta\in S}\bar w_1^{(\eta)} = \bar g_\mathbf{x}(s),
\end{equation}
which is the targeted distribution~\eqref{app:target_p}.

\subsubsection{The Pauli-path sampler}\label{app:pauli_path_sampler}
Let us now consider another decomposition of $G_1$, called the signed path decomposition, given by
\begin{equation}
  G_1(s) = \int_{\mathbf{x}} \frac{\bar g_\mathbf{x}(s)+g_\mathbf{x}(s)}{2}\,P_\mathbf{x}
  + \frac{\bar g_\mathbf{x}(s) - g_\mathbf{x}(s)}{2}\,(-P_\mathbf{x}).
  \label{eq:split-decomp-ngf-2}
\end{equation}
We introduced additional redundancies through the splitting of each of the Paulis $P_{\mathbf{x}}$ appearing in \cref{eq:exact-path-coeff-ngf} into itself and $(-P_{\mathbf{x}})$. Note that these additional $(-P_{\mathbf{x}})$ are also valid h.i. operators, such that \cref{eq:split-decomp-ngf-2} is a valid h.i. decomposition. Notably, we now have that this decomposition entails an h.i. norm (denoted $\| \cdot \|_{1,\textrm{spath}}$) that has closed form
\begin{align}\label{app:g1_spath}
\begin{split}
  \|G_1(s)\|_{1,\mathrm{spath}} &= \frac{1}{2} \int_{\mathbf{x}} |\bar g_\mathbf{x}(s)+g_\mathbf{x}(s)|
  + |\bar g_\mathbf{x}(s) - g_\mathbf{x}(s)| \\ &= \int_{\mathbf{x}} |\bar g_\mathbf{x}(s)| = \alpha_1\,s\prod_{\eta=1}^{N_A+N_B}(1+2\bar\theta_\eta)
\end{split}
\end{align}
To obtain the second equality, we used that both terms in the absolute values are non-negative from ~\cref{eq:dominating-coeff-ngf-bar}. For the last one, we used \cref{eq:dominating-norm-ngf-bar}. 
In turn, we can evaluate the integrated rate 
\begin{equation}
  \bar\lambda_1
  := \int_0^\tau\|G_1(s)\|_{1,\mathrm{spath}}\,\diff s
  = \frac{\alpha_1\,\tau^2}{2}\prod_{\eta}(1+2\bar\theta_\eta)
  \le \frac{\alpha_1\,\e^{2\beta\tau}\,\tau^2}{2},
  \label{eq:lambdabar-ngf}
\end{equation}
where we used~\cref{eq:dominating-norm-ngf-bar} for the last inequality.
Based on these, we are now in measure to describe our circuit sampler.

Here, we adapt the implementation prescription from \cref{app:tepai-impl} to the decomposition of \cref{eq:split-decomp-ngf-2}.
First draw an integer $N_{\mathrm{g}}$ from a Poisson
distribution with the expected number of events given by~\cref{eq:tepai-Ngates-derived} with integrated rate $\bar \lambda_1$ defined in \eqref{eq:lambdabar-ngf}. 
The integer $N_{\mathrm{g}}$ indicates the number of non-identity channels (i.e.\ the number of rotations) in the circuit.
For each insertion $m\in\{1, \hdots, N_{\mathrm{g}}\}$:
\begin{enumerate}\label{enum:tepai-items} 
\item Draw a time $s_m\in[0,\tau]$ from the density $p(s) = \|G_1(s)\|_{1,\mathrm{spath}}/\bar  {\lambda}_1$, with $\|G_1(s)\|_{1,\mathrm{spath}}$ defined in \cref{app:g1_spath}. 
\item Draw an extended path $\mathbf{x}_m=(h_m, S_m, \sigma_m)$ from the alternative tree distribution~\eqref{app:target_p}. From this, obtain the Pauli string $P_{\mathbf{x}_m}$ together with the weights $g({\mathbf{x}_m})$~\eqref{eq:path-weight-1} and $\bar g({\mathbf{x}_m})$~\eqref{eq:dominating-coeff-ngf-bar}.
\item Pick the sign of the Pauli $P_{\mathbf{x}_m}$ to be $a_m = \pm 1$ with probability 
\begin{equation}
  p(a_m | \mathbf{x}_m)=\frac{\bar g_{\mathbf{x}_m}(s)+ a_m g_{\mathbf{x}_m}(s)}{2\,\bar g_{\mathbf{x}_m}(s)}.
  \label{eq:p-plus-ngf}
\end{equation}
\item Define the Pauli-rotation channel $\mathcal{R}_{\mathbf{x}_m}(\theta)[\cdot]:=\e^{-\im \theta P_{\mathbf{x}_m}}[\cdot]\e^{\im \theta P_{\mathbf{x}_m}}$. Pick $\mathcal{R}_{\mathbf{x}_m}(a_m \Delta)$ with probability $p_2/(p_2+p_3)$~\eqref{eq:tepai_probability} and $\mathcal{R}_{\mathbf{x}_m}(\pi)$ otherwise, and retain the sign $\varepsilon_m=\mathrm{sgn}(\gamma_l)$ of the corresponding coefficient in~\eqref{eq:PAI-decomp}.
\end{enumerate}
Apply the $N_{\mathrm{g}}$ sampled channels in increasing order of sampled times $s_m$ to obtain one circuit instance, of overall sign $\varepsilon=\prod_{m=1}^{N_{\mathrm{g}}}\varepsilon_m$ (the path signs are already carried by the sampled rotations).
As in \cref{app:tepai-impl}, the value of the observable $O$ measured on the output state is then multiplied by the sign $\varepsilon$ and rescaled by the sampling-overhead factor $\gamma_\infty(\tau)$ of \cref{lem:tepai-app} to provide the unbiased estimate of $\langle O\rangle$.

\subsubsection{Consistency and gate-count scalings}\label{sec:pauli_path_sampler_more}

Compared to \cref{app:tepai-impl}, the previous implementation of the circuit sampler introduces a few differences. 
The sampling of $N_{\mathrm{g}}$ together with Step~1 are similar to \cref{app:tepai-impl}, up to the notation used here.
However, Step~2 of \cref{app:tepai-impl} has been split here into Step~2 and Step~3, and we wish to ensure consistency. 
On one hand, according to the implementation prescription, the choice of a given pair of path and sign, $(\mathbf{x}_m, a_m)$, is obtained with probability
\begin{equation}
    p(\mathbf{x}_m, a_m) = \frac{\bar g_\mathbf{x}(s)}{\|G_1(s)\|_{1,\mathrm{spath}}} \frac{\bar g_{\mathbf{x}_m}(s)+ a_m g_{\mathbf{x}_m}(s)}{2\,\bar g_{\mathbf{x}_m}(s)} = \frac{\bar g_{\mathbf{x}_m}(s)+ a_m g_{\mathbf{x}_m}(s)}{2\,\|G_1(s)\|_{1,\mathrm{spath}}} 
\end{equation}
where we used the probability of sampling an extended path~\eqref{app:target_p}, the probability of sampling a sign~\eqref{eq:p-plus-ngf}, and the definition of the signed path norm~\eqref{app:g1_spath}.
On the other hand, the probability of selecting the operator $a_m P_{\mathbf{x}_m}$ according to Step~2 of \cref{app:tepai-impl}, based on the h.i. decomposition of \cref{eq:split-decomp-ngf-2}, is given by
\begin{equation}
    p(a_mP_{\mathbf{x}_m}) = \frac{|(\bar g_{\mathbf{x}_m}(s)+ a_m g_{\mathbf{x}_m}(s))/2|}{\int |(\bar g_{\mathbf{x}_m}(s)+ a_m g_{\mathbf{x}_m}(s))/2|} = \frac{\bar g_{\mathbf{x}_m}(s)+ a_m g_{\mathbf{x}_m}(s)}{2\,\|G_1(s)\|_{1,\mathrm{spath}}},
\end{equation}
where we used \cref{app:g1_spath} for the second equality.
One can see that these two probabilities are the same.
Finally, Step~4 is similar to Step~3 of \cref{app:tepai-impl}, and the additional post-processing step is the same.
That is, the previous steps are consistent with an implementation of TE-PAI applied to \cref{eq:split-decomp-ngf-2}.

\medskip
Finally, given that we resort to an alternative h.i. decomposition~\eqref{eq:split-decomp-ngf-2} for the implementation, we wish to ensure that the corresponding gate-count scaling does not change. Recall that the gate count required by TE-PAI (see Lemma~1) is induced by the value of the integrated rate $\lambda$ corresponding to a particular h.i. decomposition. The bound on the rate corresponding to our circuit sampler, derived in \cref{eq:lambdabar-ngf}, matches the bound used in \cref{eq:lambda-nongeom} at $k=1$, such that the reported gate-count scaling (for this simplified case) remains the same. 

\subsubsection{Generalisation}\label{app:pauli_sampler_non_loc_gen}
The previous construction is readily generalised to higher orders of $G_k$ and arbitrary splittings of the Hamiltonian: albeit requiring more classical effort, this remains scalable and the gate-count scalings reported in \cref{app:apply-tepai} remain the same. In what follows, we review the main steps required for this generalisation.

Start with the kernel form~\eqref{eq:Gk-remainder-explicit}
that expresses $G_k(s)$ as a weighted sum of nested commutators $C_\nu$ \eqref{eq:Cnu-explicit} conjugated by $L_k^{j_\nu}(s,\sigma)$~\eqref{eq:Lk-two-arg}. Expand each nested commutator as $C_\nu=\sum_h c^{(\nu)}_h P_h$, and each
$L_k^{j_\nu}(s,\sigma)$ as a product of $N_\nu$ single-Pauli rotations $\e^{-\im\theta_\eta P_\eta}$ with angles $\theta_{\eta}$~\eqref{app:angles_def}.
To account for the several nested commutators involved (indexed by $\nu$), redefine the extended path, from \cref{app:path_dec}, as $\mathbf{x}=(\nu,h,\sigma,S)$ with paths now depending on $\nu$ and labeled by a subset $S\subseteq [N_{\nu}]:=\{1,\dots,N_\nu\}$. Its corresponding weight is defined as
\begin{equation}
  g_\mathbf{x}(s):=K_\nu(s,\sigma)\,c^{(\nu)}_h\prod_{\eta\notin S}w_0^{(\eta)}(P^{(\eta)}_{\mathbf{x}})\prod_{\eta\in S}w_1^{(\eta)}(P^{(\eta)}_{\mathbf{x}}),
  \label{eq:path-weight}
\end{equation}
where $K_\nu$ are the kernels of \cref{eq:Gk-remainder-explicit}.
In turn, the path decomposition of \cref{eq:exact-path-coeff-ngf} becomes
\begin{equation}
  G_k(s)=\int g_{\mathbf{x}}(s)\,P_\mathbf{x}\,\diff \mathbf{x},
  \qquad \textrm{with} \quad
  \int\diff \mathbf{x}:=\sum_\nu\sum_h\int_0^s\!\diff\sigma\sum_{S\subseteq[N_\nu]}.
  \label{eq:path-decomp}
\end{equation}

For the alternative tree (from \cref{app:alt_tree}): the definition of the weights of the branches~\eqref{eq:ub-layer-weights}, and the definition of the upper bound $\bar \theta_\eta$~\eqref{eq:bar_theta} on the angles $\theta_\eta$, both remain the same. 
However the Suzuki coefficients $\omega_{\eta}$ appearing in \cref{eq:bar_theta} depend on the order $k$, and we now have the bound
\begin{equation}\label{eq:bound_sum_eta_bar}
    \sum_\eta\bar\theta_\eta\le
\chi_k\beta\tau,
\end{equation} 
as seen earlier in \cref{eq:chi-k-def}. The weight~\eqref{eq:dominating-coeff-ngf-bar} of an extended path with respect to this alternative tree is now defined through
\begin{equation}
  \bar g_\mathbf{x}(s) := K_\nu(s,\sigma)\, |c^{(\nu)}_h|\prod_{\eta\notin S}\bar w_0^{(\eta)}\prod_{\eta\in S}\bar w_1^{(\eta)}.
  \label{eq:dominating-coeff-ngf-bar-gen}
\end{equation}
In turn, the $1$-norm corresponding to this alternative tree becomes
\begin{equation}
  \|\overline{G}_k(s)\|_{1,\mathrm{path}}
  := \int\bar g_\mathbf{x}(s)\,\diff \mathbf{x}
  = s^k\sum_\nu w_\nu\,\|C_\nu\|_{1,\mathrm{Pauli}}\prod_{\eta=1}^{N_\nu}(1+2\bar\theta_\eta)
  \;\le\; \tilde\alpha_k\,\e^{2\chi_k\beta\tau}\,s^k.
  \label{eq:dominating-norm-k-ngf}
\end{equation}
For the second equality, we used \cref{eq:wnu-int} and $\|C_\nu\|_{1,\mathrm{Pauli}}:=\sum_h |c_h^{(\nu)}|$. For the inequality, we used \cref{eq:bound_sum_eta_bar} and $\tilde\alpha_k = \sum_\nu w_\nu\|C_\nu\|_{1,\mathrm{Pauli}}$.
Then, integrating over $s\in[0,\tau]$ gives the
integrated rate
\begin{equation}
  \bar\lambda_k
  := \int_0^\tau\|\overline{G}_k(s)\|_{1,\mathrm{path}}\,\diff s
  \le \frac{\tilde\alpha_k\,\e^{2\chi_k\beta\tau}\,\tau^{k+1}}{k+1},
  \label{eq:lambdabar-k-ngf}
\end{equation}
Both \cref{eq:dominating-norm-k-ngf,eq:lambdabar-k-ngf} are used to generalise the implementation of \cref{app:pauli_path_sampler}.
Finally, we note that the upper bound in \cref{eq:lambdabar-k-ngf} matches the bound on $\lambda_k$ used to derive the gate-count scalings~\eqref{eq:lambda-nongeom}. This shows that despite the use of an alternative decomposition for the purpose of implementation, our gate-count scalings provided in \cref{app:apply-tepai} remain valid.

\section{Error metric and its physical interpretation}
\label{app:error-metric}

This section records the worst-case error metric behind the gate-count comparison of the main text and how we compute it.
There, an observable $O$ with $\|O\|_\infty\le1$ is estimated after evolving an input state $\rho$ with either the exact unitary $U=U(t)$ or the Trotterized circuit $V=\mathcal{S}_k(t/r)^r$.
Writing $\Phi_U(\rho):=U\rho\,U^\dagger$ for the induced channel, the bias of the estimate obeys
\begin{equation}
    \bigl|\tr[O\,(\Phi_U(\rho)-\Phi_V(\rho))]\bigr| \;\le\; \|\Phi_U-\Phi_V\|_\diamond,
    \label{eq:em-bias-diamond}
\end{equation}
where there exists a state and an observable that achieve the equality~\cite{watrous2018theory}.
The diamond-norm distance is therefore precisely the worst-case bias entering the RMSE criterion Eq.~(7).

For unitary channels the diamond-norm distance is determined by the spectrum of $W:=U^\dagger V$~\cite{aharonov1998quantum,watrous2018theory,haah2023query}.
Let $2\alpha$ be the arc length of the smallest arc of the unit circle containing the eigenvalues of $W$; for the Trotter error considered here all eigenvalues lie close to $1$, so $\alpha\le\pi/2$ holds comfortably.
Then~\cite{aharonov1998quantum,watrous2018theory}
\begin{equation}
    \min_{\theta\in\mathbb{R}}\|U-\e^{\im\theta}V\|_\infty = 2\sin(\alpha/2),
    \qquad
    \tfrac{1}{2}\|\Phi_U-\Phi_V\|_\diamond = \sin\alpha.
    \label{eq:em-identity}
\end{equation}
This gives the relation between the channel distance and the phase-optimised spectral norm $m:=\min_\theta\|U-\e^{\im\theta}V\|_\infty$,
\begin{equation}
    \|\Phi_U-\Phi_V\|_\diamond = 2m\sqrt{1-m^{2}/4}.
    \label{eq:em-dch-from-opnorm}
\end{equation}
We calculate the diamond norm through this optimisation over $\theta$: the spectrum of $W$ is obtained by exact diagonalisation, which is what restricts the worst-case gate-count study of the main text to $n\le14$.

\end{document}